\documentclass[11pt]{article}
\usepackage[utf8]{inputenc}
\usepackage{algorithmicx}

\usepackage[margin=1in]{geometry}
\usepackage{mathtools,bm}
\usepackage{amsmath,amsfonts,amssymb,amsthm,hyperref,algorithm,eqparbox,array,ragged2e,stmaryrd,bbm}
\usepackage{algorithmicx}
\usepackage{bbold}
\usepackage[noend]{algpseudocode}

\algnewcommand\algorithmicinput{\textbf{Input: }}
\algnewcommand\Input{\item[\algorithmicinput]}
\algnewcommand\algorithmicoutput{\textbf{Output: }}
\algnewcommand\Output{\item[\algorithmicoutput]}
\algnewcommand{\OneLineIf}[2]{
  \State \algorithmicif\ #1\ \algorithmicthen\ #2}

\usepackage{thm-restate}
\usepackage{enumitem}
\usepackage[capitalise,nameinlink,noabbrev]{cleveref}
\usepackage{tocloft}
\usepackage[nottoc]{tocbibind} 

\usepackage[dvipsnames]{xcolor}
\hypersetup{
	colorlinks=true,
	urlcolor=MidnightBlue,
	linkcolor=MidnightBlue,
	citecolor=MidnightBlue,
	unicode
}
\usepackage{tikz}
\usetikzlibrary{positioning,decorations.pathreplacing,calligraphy, arrows.meta,patterns}
\usepackage{subcaption} 

\usepackage[most]{tcolorbox} 
\usepackage[dvipsnames]{xcolor}
\newtcbtheorem[number within=section,
               crefname={definition}{Definition},
               Crefname={Definition}{Definition}]{cdefinition}{Definition}%
{colback=pink!20,      
 colframe=pink!80!black, 
 fonttitle=\bfseries,   
 coltitle=black,        
 boxrule=0.8pt,         
 arc=3mm,               
 top=6pt, bottom=6pt, left=6pt, right=6pt, 
 enhanced
 }%
{def}

\crefname{cdefinition}{Definition}{Definitions}
\Crefname{cdefinition}{Definition}{Definitions}

\newtcbtheorem[number within=section,
               crefname={problem}{problems},
  Crefname={Problem}{Problems}]
  {cquestion}{Main Open Problem}%
{colback=yellow!20,      
 colframe=yellow!80!black, 
 fonttitle=\bfseries,   
 coltitle=black,        
 boxrule=0.8pt,         
 arc=3mm,               
 top=6pt, bottom=6pt, left=6pt, right=6pt, 
 enhanced}%
{cquestion}
\crefname{cquestion}{problem}{problems}
\Crefname{cquestion}{Problem}{Problems}

\newtheorem{theorem}{Theorem}[section]
\newtheorem{lemma}[theorem]{Lemma}
\newtheorem{proposition}[theorem]{Proposition}
\newtheorem{claim}[theorem]{Claim}
\newtheorem{corollary}[theorem]{Corollary}

\newtheorem{fact}[theorem]{Fact}

\theoremstyle{definition}
\newtheorem{definition}[theorem]{Definition}
\newtheorem{problem}[theorem]{Open Problem}
\newtheorem{mproblem}[theorem]{Major Open Problem}

\theoremstyle{remark}
\newtheorem{remark}[theorem]{Remark}

\makeatletter
\def\th@example{%
  \thm@notefont{}
  \normalfont 
}
\def\th@definition{%
  \thm@notefont{}
  \normalfont 
}
\makeatother

\theoremstyle{example}

\renewcommand{\setminus}{\smallsetminus}

\newcommand{\rk}{\mathrm{rk}}

\DeclareMathOperator{\D}{D}

\newcommand{\x}{\bm{x}}

\renewcommand{\epsilon}{\varepsilon}

\newcommand{\IP}{\mathrm{IP}}
\newcommand{\Disj}{\textsc{Disj}}

\newcommand{\poly}{\mathrm{poly}}

\newcommand{\br}{\mathsf{br}}
\newcommand{\bc}{\mathsf{bc}}

\newcommand{\sbr}{\mathsf{br}_{\pm}}
\newcommand{\spiky}{\mathsf{spr}}
\newcommand{\spar}{\mathsf{spar}}
\newcommand{\sign}{\mathrm{sign}}
\newcommand{\HDone}[1]{\mathrm{HD}_1^{#1}}

\DeclareMathOperator{\relu}{ReLU}
\DeclareMathOperator{\ltf}{LTF}

\DeclareMathOperator{\row}{row}
\DeclareMathOperator{\col}{col}

\newcommand{\VC}{\mathrm{VC}}

\newcommand{\cR}{\mathcal R}

\def\rank{{\mathrm{rank}}}     

\newtheorem{question}[theorem]{Question}              

\newcommand{\R}{\mathbb{R}}
\newcommand{\F}{\mathbb{F}}

\let\oldsum=\sum 

\RenewDocumentCommand{\sum}{e{_^}}{ 
  \vphantom{\oldsum_{[n]}} 
  \mathop{\smash{\oldsum 
    \IfValueT{#1}{_{#1}} 
    \IfValueT{#2}{^{#2}}
  }}
}
\let\oldprod=\prod 

\RenewDocumentCommand{\prod}{e{_^}}{ 
  \vphantom{\oldprod_{.}} 
  \mathop{\smash{\oldprod 
    \IfValueT{#1}{_{#1}} 
    \IfValueT{#2}{^{#2}}
  }}
}

\newcommand{\clsNP}{\textsf{NP}}

\newcommand{\clsPH}{\textsf{PH}}
\newcommand{\clsPSPACE}{\textsf{PSPACE}}

\crefname{section}{Section}{Sections}
\Crefname{cdefinition}{Definition}{Definitions}

\title{Spiky Rank and Its Applications to Rigidity and Circuits}
 \author{Lianna Hambardzumyan \thanks{The University of Copenhagen, Denmark. \texttt{lianna.hambardzumyan@gmail.com}. This work is funded by the European Research Council (ERC) under grant agreement no. 101125652 (ALBA). Most of the work was done while the author was a postdoctoral researcher at the University of Victoria, Canada, funded by NSERC and at Hebrew University of Jerusalem, Israel, funded by ISF grants 921/22 and 2635/19.} 
\and
Konstantin Myasnikov \thanks{EPFL, Switzerland, \texttt{kostyamyasso31@gmail.com}. Supported by Swiss
State Secretariat for Education, Research, and Innovation (SERI) under contract number MB22.00026.}
\and
Artur Riazanov \thanks{EPFL, Switzerland, \texttt{tunyash@gmail.com}. Supported by Swiss
State Secretariat for Education, Research, and Innovation (SERI) under contract number MB22.00026.}
\and
Morgan Shirley \thanks{Lund University, Sweden. \texttt{morgan.shirley@cs.lth.se}. Supported by Knut and Alice Wallenberg grant KAW 2023.0116.}
\and
Adi Shraibman \thanks{School of Computer Science, The Academic College of Tel Aviv-Yaffo, Israel,
\texttt{adish@mta.ac.il}.
}
}
\begin{document}
\maketitle
\begin{abstract}

We introduce spiky rank, a new matrix parameter that enhances blocky rank by combining the combinatorial structure of the latter with linear-algebraic flexibility. A spiky matrix is block-structured with diagonal blocks that are arbitrary rank-one matrices, and the spiky rank of a matrix is the minimum number of such matrices required to express it as a sum. This measure extends blocky rank to real matrices and is more robust for problems with both combinatorial and algebraic character. 

Our conceptual contribution is as follows: we propose spiky rank as a well-behaved candidate matrix complexity measure and demonstrate its potential through applications. We show that large spiky rank implies high matrix rigidity, and that spiky rank lower bounds yield lower bounds for depth-2 ReLU circuits, the basic building blocks of neural networks. On the technical side, we establish tight bounds for random matrices and develop a framework for explicit lower bounds, applying it to Hamming distance matrices and spectral expanders. Finally, we relate spiky rank to other matrix parameters, including blocky rank, sparsity, and the $\gamma_2$-norm.
\end{abstract}
\newgeometry{margin=1in,top=1in,bottom=1in}

\section{Introduction}
Studying the structural parameters of matrices has proven to be a powerful way to capture the complexity of various computational models and to establish lower bounds. A recurring theme in complexity theory is identifying 
``well-behaved'' matrix parameters: parameters strong enough to capture meaningful lower bounds, structured enough to admit combinatorial or algebraic analysis, and flexible enough to connect to multiple computational models. Rank, rigidity, $\gamma_2$-norm, and blocky rank are all examples of such measures. 

In this work we introduce and study \emph{spiky rank}, a new matrix parameter that we argue meets the criteria for a well-behaved complexity measure.
Spiky rank is a generalization of the blocky rank defined for communication complexity applications by Hambardzumyan, Hatami and Hatami \cite{HHH23} (\cref{def:matrices}, \Cref{def:ranks}) which later found broad range of applications in complexity theory and beyond \cite{DY24, pitassiStrength2023, williamsOrthogonal2024, goos2025equality, goh2025block, applebaum2025meta}.

Spiky rank adds a linear algebraic flavor to the combinatorial nature of blocky rank. This allows us to connect spiky rank to complexity measures that are both algebraic and combinatorial in their nature, such as matrix rigidity and circuits with ``algebraic'' gates.

Both blocky rank and spiky rank fit into a common paradigm for defining complexity measures: 
we begin by specifying a class of simple objects that we define to have complexity 1, and then define 
the complexity of a general object as the minimum number of such simple objects needed to construct it. 
To define blocky and spiky rank, we define our matrices of complexity 1 as follows: 

\begin{cdefinition}{Blocky and Spiky Matrices}{matrices} \label{def:blocky_matrix}
    A \textbf{blocky matrix} is a matrix that can be obtained from the identity matrix by permuting rows or columns, duplicating rows or columns, and adding all-zero rows or columns. Equivalently, a blocky matrix is a matrix whose rows and columns can be permuted so that it consists of diagonal blocks that are all-one submatrices (of possibly different sizes), with all off-diagonal blocks equal to zero. See \cref{fig:blocky}.

    \medskip
     A \textbf{spiky matrix} is a blocky matrix in which each diagonal block is an arbitrary rank-one matrix (rather than an all-one matrix). Equivalently, a spiky matrix can be expressed as the entrywise product of a blocky matrix and a rank-one matrix.  See \cref{fig:spiky}.
\end{cdefinition}

\begin{figure}[H]
    \def\vectoru{25,50,25,50,100,75,75,50,75,100,100,75,75,50,50,100}
    \def\vectorv{50,100,75,25,75,100,50,100,50,50,100,25,50,100,75,50}
    \newcommand{\gridsize}{0.25}
    \newcommand{\vectoroffset}{0.4}
    
    \begin{subfigure}[b]{0.3\textwidth}
        \centering
        \begin{tikzpicture}
            \draw[fill=pink!85!black] (0, 16*\gridsize) rectangle (5*\gridsize, 10*\gridsize);
            \draw[fill=pink!85!black] (5*\gridsize, 10*\gridsize) rectangle (6*\gridsize, 9*\gridsize);
            \draw[fill=pink!85!black] (6*\gridsize, 9*\gridsize) rectangle (10*\gridsize, 5*\gridsize);
            \draw[fill=pink!85!black] (10*\gridsize, 5*\gridsize) rectangle (14*\gridsize, 3*\gridsize);
            \draw[fill=pink!85!black] (14*\gridsize, 3*\gridsize) rectangle (16*\gridsize, 0);
            
            \draw (0, 0) rectangle (16*\gridsize, 16*\gridsize);
        \end{tikzpicture}
        \caption{A blocky matrix $B$.}
        \label{fig:blocky}
    \end{subfigure}
    \quad \quad
    \begin{subfigure}[b]{0.3\textwidth}
        \centering
        \begin{tikzpicture}
            \foreach \i[count=\icnt] in \vectoru{
                \fill[pink!85!black!\i!white] ({(\icnt-1)*\gridsize}, {16*\gridsize+\vectoroffset}) rectangle ({(\icnt)*\gridsize}, {16*\gridsize+\gridsize+\vectoroffset});
            }
            \draw (0, {16*\gridsize+\vectoroffset}) rectangle (\gridsize*16, {16*\gridsize+\gridsize+\vectoroffset});
            
            \foreach \j[count=\jcnt] in \vectorv{
                \fill[pink!85!black!\j!white] ({0 - \gridsize - \vectoroffset}, {(\jcnt-1)*\gridsize}) rectangle ({0 - \vectoroffset}, {(\jcnt)*\gridsize});
            }
            \draw ({0 - \gridsize - \vectoroffset}, 0) rectangle ({0 - \vectoroffset}, \gridsize*16);
            
            \foreach \i[count=\icnt] in \vectoru{
                \foreach \j[count=\jcnt,evaluate=\j as \shade using {\i*\j/100}] in \vectorv{
                    \fill[pink!85!black!\shade!white] ({(\icnt-1)*\gridsize}, {(\jcnt-1)*\gridsize}) rectangle ({(\icnt)*\gridsize}, {(\jcnt)*\gridsize});
                }
            }
            \draw (0, 0) rectangle (16*\gridsize, 16*\gridsize);
            
        \end{tikzpicture}
        \caption{A rank-one matrix $u \otimes v$.}
    \end{subfigure}
    \quad \quad
    \begin{subfigure}[b]{0.3\textwidth}
        \centering
        \begin{tikzpicture}
            \foreach \i[count=\icnt] in \vectoru{
                \fill[pink!85!black!\i!white] ({(\icnt-1)*\gridsize}, {16*\gridsize+\vectoroffset}) rectangle ({(\icnt)*\gridsize}, {16*\gridsize+\gridsize+\vectoroffset});
            }
            \draw (0, {16*\gridsize+\vectoroffset}) rectangle (\gridsize*16, {16*\gridsize+\gridsize+\vectoroffset});
            
            \foreach \j[count=\jcnt] in \vectorv{
                \fill[pink!85!black!\j!white] ({0 - \gridsize - \vectoroffset}, {(\jcnt-1)*\gridsize}) rectangle ({0 - \vectoroffset}, {(\jcnt)*\gridsize});
            }
            \draw ({0 - \gridsize - \vectoroffset}, 0) rectangle ({0 - \vectoroffset}, \gridsize*16);
            
            \foreach \i[count=\icnt] in \vectoru{
                \foreach \j[count=\jcnt,evaluate=\j as \shade using {\i*\j/100}] in \vectorv{
                    \fill[pink!85!black!\shade!white] ({(\icnt-1)*\gridsize}, {(\jcnt-1)*\gridsize}) rectangle ({(\icnt)*\gridsize}, {(\jcnt)*\gridsize});
                }
            }
            
            \fill[white] (5*\gridsize, 16*\gridsize) rectangle (16*\gridsize, 10*\gridsize);
            \fill[white] (0, 10*\gridsize) rectangle (5*\gridsize, 9*\gridsize);
            \fill[white] (6*\gridsize, 10*\gridsize) rectangle (16*\gridsize, 9*\gridsize);
            \fill[white] (0, 9*\gridsize) rectangle (6*\gridsize, 5*\gridsize);
            \fill[white] (10*\gridsize, 9*\gridsize) rectangle (16*\gridsize, 5*\gridsize);
            \fill[white] (0, 5*\gridsize) rectangle (10*\gridsize, 3*\gridsize);
            \fill[white] (14*\gridsize, 5*\gridsize) rectangle (16*\gridsize, 3*\gridsize);
            \fill[white] (0, 3*\gridsize) rectangle (14*\gridsize, 0);
            
            \draw (0, 16*\gridsize) rectangle (5*\gridsize, 10*\gridsize);
            \draw (5*\gridsize, 10*\gridsize) rectangle (6*\gridsize, 9*\gridsize);
            \draw (6*\gridsize, 9*\gridsize) rectangle (10*\gridsize, 5*\gridsize);
            \draw (10*\gridsize, 5*\gridsize) rectangle (14*\gridsize, 3*\gridsize);
            \draw (14*\gridsize, 3*\gridsize) rectangle (16*\gridsize, 0);
            
            \draw (0, 0) rectangle (16*\gridsize, 16*\gridsize);
        \end{tikzpicture}
        \caption{A spiky matrix $B \circ (u \otimes v)$.}
        \label{fig:spiky}
    \end{subfigure}
    \caption{A spiky matrix is the entrywise product of a blocky matrix and a rank-one matrix.}
\end{figure}
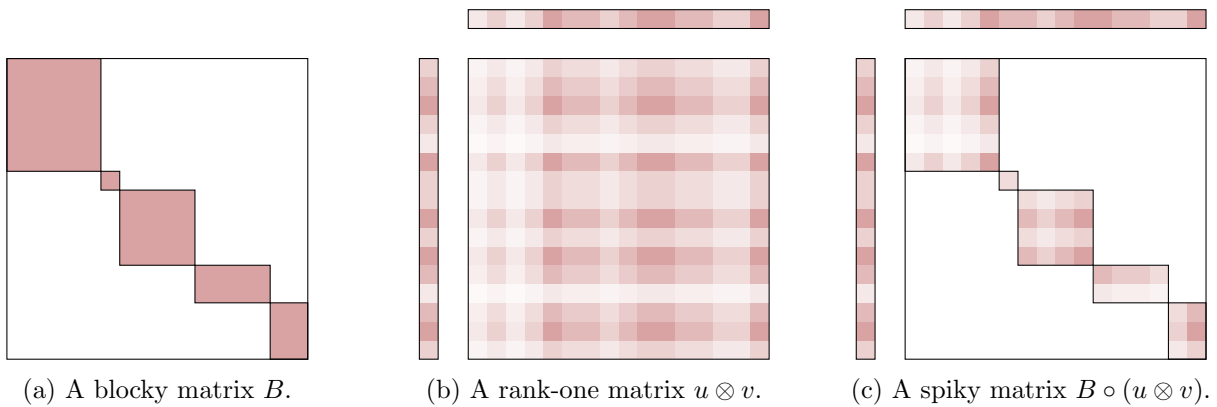

\newpage
We then define the complexity of general matrices as follows:

\begin{cdefinition}{Blocky and Spiky Rank}{ranks}\label{def:blocky_rank}  A blocky matrix is said to have \emph{blocky rank one}.  The \textbf{blocky rank} of a matrix $M$, denoted $\br(M)$, is the minimum number $r$ such that $M$ can be written as linear combination of $r$ blocky rank one matrices.

    \medskip
    A spiky matrix is said to have \emph{spiky rank one}. The \textbf{spiky rank} of a matrix $M$, denoted $\spiky(M)$, is the minimum number $r$ such that $M$ can be written as a sum of $r$ spiky rank one matrices. 
\end{cdefinition}

\subsection{Motivations for Blocky and Spiky Rank}
\paragraph{Blocky rank.} The study of blocky matrices and related decompositions has a surprisingly long and diverse history. 
Although the terminology ``blocky matrix'' is recent, the same objects have been rediscovered many times under different names across theoretical computer science and mathematics.
For instance, fat matchings  \cite{jukna2006graph,PudlakRodl}, equivalence graphs \cite{frankl1982covering, alon1986covering}, adjacency matrices of $P_4$-free graphs \cite{blockP4free2021, applebaum2025meta}, and equality matrices \cite{williamsOrthogonal2024}. Various versions of blocky decompositions have been studied; blocky rank, \emph{blocky covering} (union of blocky matrices), \emph{blocky partitioning} (sum of blocky matrices) and more. Each of these variants has turned out to capture useful phenomena in different research areas:
\begin{itemize}[itemsep=0pt]
    
\item  In \emph{communication complexity} blocky decompositions have been used to study models of communication with access to the equality oracle \cite{HHH23, pitassiStrength2023, blockP4free2021, goos2025equality}.

\item In \emph{operator theory}, it was observed \cite{HHH23} that blocky matrices are the contractive idempotents of Schur multiplier algebras, making blocky rank a natural way to characterize the structure of idempotents in this setting (see also \cite{goh2025block}).

\item In \emph{combinatorics}, blocky coverings of different families of graphs have been studied in \cite{frankl1982covering, alon1986covering, blockP4free2021, DY24}. 

\item In \emph{cryptography}, Applebaum and Nir \cite{applebaum2025meta} studied blocky covering to characterize families of graphs that can be realized with linear secret sharing schemes of constant size. 

\item In \emph{circuit complexity}, blocky decompositions provide lower bounds: Jukna \cite{jukna2006graph} used blocky partitioning to derive bounds against depth-3 circuits with parity gates, and Avraham and Yehudayoff \cite{DY24} showed that blocky partitioning lower bounds depth-2 threshold circuits.  

\item \emph{Branching program} size lower bounds were obtained via blocky covering in \cite{PudlakRodl}. 

\end{itemize}

Perhaps the most striking and far-reaching connection is due to Williams \cite{williamsOrthogonal2024}, who related blocky rank to the renowned Orthogonal Vectors Conjecture (OVC). In particular, Williams showed that to refute OVC -- and consequently the Strong Exponential Time Hypothesis (SETH) -- it would suffice to prove that the $2^n \times 2^n$ Disjointness matrix $D_n$ satisfies $\br(D_n) \leq 2^{o(n)}$. On the other hand, if one could show that $\br(D_n) \geq 2^{\omega(n)}$, this would immediately yield exponential lower bounds for depth-2 exact threshold circuits by \cite{DY24, williamsOrthogonal2024}, resolving a notorious open problem in circuit complexity. This duality highlights the importance of studying blocky rank: either upper bounds or lower bounds on $\br(D_n)$ would have transformative implications in algorithms, fine-grained complexity, and circuit lower bounds.

\paragraph{Spiky rank.} As evident from this wide range of applications, blocky rank is a versatile measure. But it is fundamentally a combinatorial measure and, in particular, not a robust one. Small modifications to the matrix can cause its blocky rank to change drastically.

For example, in \cref{fig:identity-vs-diagonal}, the identity matrix has $\br(I_n)=1$, but if we place distinct weights on the diagonal, the blocky rank jumps to $\br(D_n)=n$. Structurally these matrices are nearly the same -- they are both just diagonal matrices -- but blocky rank treats them in a completely different way. This lack of robustness makes blocky rank poorly suited for applications involving real-valued matrices or settings where algebraic properties matter. Spiky rank addresses this shortcoming by capturing the intuition that a diagonal matrix is expected to be simple regardless of its entries, in particular, $\spiky(D_n)=1$.

\begin{figure}[h]
    \centering
    \[
    I_n =
    \begin{bmatrix}
        1 & 0 & 0 & \cdots & 0 \\
        0 & 1 & 0 & \cdots & 0 \\
        0 & 0 & 1 & \cdots & 0 \\
        \vdots & \vdots & \vdots & \ddots & \vdots \\
        0 & 0 & 0 & \cdots & 1
    \end{bmatrix}
    \qquad
    D_n =
    \begin{bmatrix}
        1 & 0 & 0 & \cdots & 0 \\
        0 & 2 & 0 & \cdots & 0 \\
        0 & 0 & 3 & \cdots & 0 \\
        \vdots & \vdots & \vdots & \ddots & \vdots \\
        0 & 0 & 0 & \cdots & n
    \end{bmatrix}
    \]
    \caption{The identity matrix $I_n$ (left) and the diagonal matrix $D
    _n$ (right).}
    \label{fig:identity-vs-diagonal}
\end{figure}

 This extra flexibility makes spiky rank well-suited for problems that have both combinatorial and algebraic character. In this work, we demonstrate two such applications: 
 \begin{itemize}
     \item Spiky rank lower bounds \emph{matrix rigidity}, a central problem in circuit and communication complexity (\cref{sec:intro-rigidity}),
     \item Spiky rank lower bounds the size of \emph{depth-2 ReLU circuits}, which sit at the intersection of complexity theory and modern machine learning (\cref{sec:intro_circuit}).
 \end{itemize}

Finally, note that spiky rank strictly strengthens blocky rank: every blocky decomposition is also spiky, so $\spiky(M) \leq \br(M)$ for all $M$. Thus, lower bounds for spiky rank have the same implications as for blocky rank. On the other hand, upper bounds can be even more significant. As with blocky rank, upper bounds for spiky rank of the Disjointness matrix have far-reaching consequences: if the $2^n \times 2^n$ Disjointness matrix $\Disj_n$ satisfies $\spiky(\Disj_n) \leq 2^{o(n)}$, then Orthogonal Vector Conjecture (and thus SETH) would be false \cite{williams2024personal_communication}.

\subsection{Results}\label{sec:applications}
Our main contribution is conceptual: we introduce spiky rank as a new matrix complexity measure, demonstrate its applicability to problems in complexity theory, and establish its basic properties. On the technical side, we prove lower bounds for both random and explicit matrix families and relate spiky rank to other classical parameters. Our main results are:

\begin{enumerate}
\item We demonstrate the natural applicability of spiky rank to two problems in
complexity theory: matrix rigidity (\cref{sec:intro-rigidity}) and ReLU
circuits (\cref{sec:intro_circuit}).
\item
We show that most Boolean matrices have spiky rank $\Omega(N / \log N)$ (matching blocky rank), while most real matrices have spiky rank $\Omega(N)$ (\cref{sec:bounds_random}).

\item
We give the following bounds on the spiky rank of specific $N \times N$ matrices of interest:

\begin{itemize}
\item $\Omega(\sqrt{\log N})$ for the 1-Hamming Distance matrix (\cref{sec:hd_bound}), simultaneously improving the previous $\log\log N$ bound for blocky rank as well \cite{HHH23}.
\item $\Omega(\log N)$ for adjacency matrices of certain expander graphs (\cref{sec:expanders_bound}).
\item $\Omega(\log N / \log\log N)$ for the Inner Product and Disjointness matrix (\cref{sec:ip_bound}) (likely far from optimal but it represents a first step toward stronger bounds). 
\end{itemize}

\item For Boolean matrices, we establish a dimension-free relation between blocky and spiky rank, implying in particular that constant spiky rank yields constant blocky rank (\cref{sec:matrix_parameters}). We further bound spiky rank in terms of matrix sparsity (\cref{sec:sparsity}), separate it from the $\gamma_2$-norm (\cref{sec:gamma-2-vs-br}), and highlight gaps between the approximate and sign variants of blocky and spiky rank (\cref{sec:approx_sign}).
\end{enumerate}

\subsubsection{Application to Matrix Rigidity}
\label{sec:intro-rigidity}
Matrix rigidity, introduced by Valiant \cite{valiant1977graph} and Grigoriev \cite{grigoriev}, asks how far a matrix is from having low rank: the rank-$r$ rigidity $\cR_M(r)$ is the minimum number of entry modifications required to reduce the rank of $M$ to $r$.

Highly rigid matrices are central to long-standing open problems in complexity theory. Valiant showed that if $M \in \R^{N \times N}$ satisfies $\cR_M(\epsilon N) \geq \Omega(N^{1+\delta})$ for some $\epsilon,\delta>0$, then any log-depth linear circuit computing the linear transformation $x \mapsto Mx$ must have superlinear size.
Additionally, Razborov \cite{razborovRigidity} linked rigidity to communication complexity, showing that matrices in $\clsPH^{cc}$ -- the communication complexity analogue of the polynomial hierarchy -- cannot be too rigid. In particular, if a $N \times N$ Boolean matrix $M \notin \clsPH^{cc}$, then  its rigidity over $\F_2$ must satisfy \[\cR^{\F_2}_M(2^{(\log\log N)^{\omega(1)}}) \geq \frac{N^2}{2^{(\log\log N)^{O(1)}}}.\] Hence, constructing sufficiently rigid matrices would yield lower bounds for $\clsPH^{cc}$ and ultimately separate it from $\clsPSPACE^{cc}$ -- both of which remain major open problems in communication complexity.

Despite extensive study, progress has been limited on finding explicit rigid matrices: the best explicit constructions fall well short of the parameters needed for Valiant’s or Razborov’s applications.  

A direct link between rigidity and spiky rank is rather straightforward to show: matrices with large spiky rank are also highly rigid. 

\begin{restatable}{theorem}{sprlbonrigidity}\label{thm:spr_lower_bound_on_rigidity}
    Let $M$ be a matrix and $0 < r \leq \spiky(M)$. Then, 
    $$\cR_M(r) \;\geq\; \frac{(\spiky(M) - r)^2}{4}.$$
\end{restatable}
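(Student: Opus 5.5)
Write $M = L + S$, where $\rank(L) \le r$ and $S$ has exactly $s = \cR_M(r)$ nonzero entries. Since spiky rank is subadditive, $\spiky(M) \le \spiky(L) + \spiky(S)$. The plan is to bound each term and then rearrange to obtain a lower bound on $s$.

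First, a rank-one matrix $uv^\top$ is itself spiky: take the blocky matrix to be the all-ones matrix, which is a single block. Hence $\spiky(L) \le \rank(L) \le r$.

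Second, I will show that every matrix $S$ with $s$ nonzero entries satisfies $\spiky(S) \le 2\sqrt{s}$. Call a row heavy if it contains more than $\sqrt{s}$ nonzeros; there are fewer than $\sqrt{s}$ heavy rows. Each heavy row is a rank-one matrix, so each is one spiky matrix, and together they cost fewer than $\sqrt{s}$ spiky matrices. The remaining part $S'$ has at most $\sqrt{s}$ nonzeros per row. Treat columns symmetrically: peel off the columns of $S'$ having more than $\sqrt{s}$ nonzeros, at cost fewer than $\sqrt{s}$ more. What is left, $S''$, has at most $\sqrt{s}$ nonzeros in every row and every column.

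The main obstacle is decomposing $S''$ cheaply. A matrix whose nonzero entries form a matching, i.e.\ at most one per row and per column, is a spiky matrix. Indeed, after permuting rows and columns it is diagonal, padded with zero rows and columns, so it is a blocky matrix with $1\times 1$ blocks multiplied entrywise by a suitable rank-one matrix. By König's edge-colouring theorem, the bipartite support graph of $S''$ has maximum degree $\Delta \le \sqrt{s}$ and so splits into $\Delta$ matchings. This gives $\spiky(S'') \le \sqrt{s}$.

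That totals $3\sqrt{s}$ rather than $2\sqrt{s}$, so the counting needs tightening. Better: skip the heavy-row and heavy-column peeling and balance directly. Split the support into the set $A$ of entries lying in rows with more than $t$ nonzeros, and the rest. Entries in $A$ lie in fewer than $s/t$ rows, so they cost fewer than $s/t$ rank-one pieces. The rest has row degree at most $t$. Now handle high-degree columns in the remainder; there are fewer than $s/t$ of them. The leftover has maximum degree at most $t$ and is covered by $t$ matchings. With $t = \sqrt{s}$ this still gives $3\sqrt{s}$.

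To reach the constant $2$, I would instead decompose the whole support directly. Cover by the rows of degree greater than $\sqrt{s}$ (fewer than $\sqrt{s}$ of them). In the remainder every row has degree at most $\sqrt{s}$. Columns of degree greater than $\sqrt{s}$ cannot be covered by matchings alone, but a single spiky matrix may contain a whole column block together with other blocks. So a combined structure, where one column plus a matching on disjoint rows and columns forms a single spiky matrix, can merge the column pieces into the matching rounds. If that refinement fails, the argument still gives $\cR_M(r) \ge (\spiky(M)-r)^2/9$, and I would then work on fine-tuning the constant.

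Granting $\spiky(S) \le 2\sqrt{s}$, we get $\spiky(M) \le r + 2\sqrt{s}$. Since $r \le \spiky(M)$, this rearranges to $s \ge (\spiky(M)-r)^2/4$, which is the claimed bound.
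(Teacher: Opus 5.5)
\textbf{Gap: the constant $2$ in the sparsity bound is never proved.} Your outer argument matches the paper: subadditivity, $\spiky(L)\le\rank(L)\le r$, and rearranging using $r\le\spiky(M)$. Your heavy-row/heavy-column peeling followed by K\"onig's edge-colouring theorem is also a valid proof that $\spiky(S)\le 3\sqrt{s}$. However, the bound $\spiky(S)\le 2\sqrt{s}$ is never established. The ``combined structure'' paragraph is only a hope, and the last paragraph simply grants it. As written, the proposal proves $\cR_M(r)\ge(\spiky(M)-r)^2/9$, not the stated constant $1/4$.

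\textbf{The missing observation.} Matchings are far more restrictive than you need. Any matrix with at most one nonzero entry in each column, and no restriction on rows, is already spiky. Let $i_j$ be the row of the unique nonzero in column $j$, and let $T_i=\{j : i_j=i\}$. The blocks $\{i\}\times T_i$ use disjoint rows and disjoint columns. The matrix then equals $B\circ(uv^\top)$, where $B$ is the blocky matrix with these blocks, $u$ is the all-ones vector, and $v_j=S[i_j,j]$ (with $v_j=0$ for empty columns).

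\textbf{How this gives $2\sqrt{s}$.} You only need to peel off the fewer than $\sqrt{s}$ columns containing more than $\sqrt{s}$ nonzeros, each of which is a rank-one matrix. Every remaining column has at most $\sqrt{s}$ nonzeros. Repeatedly removing one nonzero from every nonempty column therefore finishes in at most $\sqrt{s}$ further spiky matrices. No heavy-row step and no edge colouring are needed, and the total is $2\sqrt{s}$. This is exactly the paper's proof of \cref{lem:rigidity_vs_sparsity}.

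Equivalently, you can repair your own write-up by dropping the column peeling and K\"onig's theorem. Instead, split the row-light remainder $S'$ into at most $\sqrt{s}$ pieces, each with at most one nonzero per row; each such piece is spiky by the transposed argument.
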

This has two key consequences. First, for Valiant's target rank $\epsilon N$, an explicit matrix with $\spiky(M) = \Omega(N)$ would achieve the strongest possible rigidity bound $\Omega(N^2)$. Note that, since Boolean matrices have spiky rank at most ${N/\log N}$, the only candidate matrices with spiky rank $\Omega(N)$ are the real matrices.

Second, for Razborov's target rank $2^{(\log\log N)^{\omega(1)}}$, we would need to find an explicit matrix with  spiky rank at least (see \cref{cor:spiky_ph}) \[\frac{N}{2^{(\log\log N)^{O(1)}}}.\]

This shows that spiky rank is not only a structural parameter, but also a potential tool for constructing explicit rigid matrices. With this goal in mind, we highlight the following open challenges to guide future research in this direction. 

\begin{mproblem}
    Find an explicit real matrix $M \in \R^{N \times N}$ with $\spiky(M)=\Omega(N)$.
\end{mproblem}

\begin{mproblem}
    Find an explicit Boolean matrix $M \in \{0,1\}^{N \times N}$ with $$\spiky(M)=\frac{N}{2^{(\log\log N)^{O(1)}}}.$$
\end{mproblem}

Although most Boolean matrices have spiky rank at least $\Omega(N/\log N)$ and most real matrices have $\Omega(N)$ (see \cref{sec:bounds_random}), constructing explicit examples that meet these bounds remains elusive. We expect that solving either problem above will be quite challenging. At present, the best lower bounds we can prove for explicit matrices are of order $\Omega(\log N)$ (see \cref{sec:bounds_specific}).

Finally, we remark that \cref{thm:spr_lower_bound_on_rigidity} allows known non-rigidity results to translate into upper bounds on spiky rank. For instance, Alman and Williams~\cite{AW17} proved that for sufficiently small $\varepsilon > 0$, the Inner Product matrix $\IP_n$ of size $2^n \times 2^n$ satisfies $$\cR_{\IP_n}(2^{n(1-f(\varepsilon))}) \leq 2^{n(1+\varepsilon)},$$ where $f(\varepsilon) = \Theta(\varepsilon^2 / \log(1/\varepsilon)).$
By \cref{thm:spr_lower_bound_on_rigidity}, this yields an upper bound $\spiky(\IP_n) \leq 2^{(1-\delta)n}$ for some constant $\delta > 0$, improving on the trivial bound of ${2^n}/{n}$ (see \cref{sec:bounds_random}).

\subsubsection{Application to Circuit Complexity}\label{sec:intro_circuit}

Connections between blocky rank and circuit complexity were first observed by Jukna \cite{jukna2006graph}, and later by Avraham and Yehudayoff \cite{DY24}, who showed that the blocky rank of a matrix lower bounds the size of any $\Sigma \circ \ltf$ circuit computing it up to a certain loss factor (implicit in \cite{hansen2010exact}). 
Here, $\Sigma \circ \ltf$ denotes linear combinations of \emph{linear threshold functions} (LTFs), where an LTF is a Boolean function of the form $T(x) = \mathbb{1}[\langle w, x \rangle \geq \alpha]$ for some $w \in \{0,1\}^n$ and $\alpha \in \R$. 

Naturally this connection extends to spiky rank and to $\Sigma \circ \relu$ circuits, where ReLU functions generalize threshold functions by outputting $\max\{0, \langle w, x \rangle - \alpha\}$. The following proposition admits a simple proof, yet it is conceptually important.
\begin{proposition}
For $M\colon \{0,1\}^n \times \{0,1\}^n \to \R$, let $s$ be the size of a $\Sigma \circ \relu$ \-circuit that computes $M$. Then 
$$s \geq \frac{\spiky(M)}{3(n+1)}.$$
\end{proposition}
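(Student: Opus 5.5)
Since spiky rank is subadditive under sums, the plan is to show that every single gate of the circuit contributes a matrix of spiky rank at most $3(n+1)$. Write the circuit as $M(x,y)=\sum_{i=1}^s c_i \relu(\langle w_i,(x,y)\rangle-\alpha_i)$. A scalar multiple of a spiky matrix is spiky, so $\spiky(M)\le \sum_i \spiky(R_i)$, where $R_i(x,y)=\max\{0,\langle a_i,x\rangle+\langle b_i,y\rangle-\alpha_i\}$. It therefore suffices to prove $\spiky(R)\le 3(n+1)$ for one ReLU gate.

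Fix a gate with weights $a,b$ and threshold $\alpha$. Since the weights are in $\{0,1\}^n$, as in the LTF setting, the row value $p(x)=\langle a,x\rangle$ lies in $\{0,\dots,n\}$; likewise $q(y)=\langle b,y\rangle$. I partition the rows by the value $k=p(x)$, giving $n+1$ classes. On the rows with $p(x)=k$, the gate equals $\max\{0,k+q(y)-\alpha\}$, which depends only on $y$. It is therefore a rank-one matrix $\mathbb 1\otimes g_k$ supported on the column set $\{y: q(y)\ge \alpha-k\}$.

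I then sort these pieces by threshold. The supports $\{y:q(y)\ge \alpha-k\}$ are nested: larger $k$ gives a larger set. A single spiky matrix needs disjoint row sets and disjoint column sets, so the pieces cannot be combined directly. Instead, I decompose $\max\{0,p+q-\alpha\}=\sum_{(k,\ell):k+\ell\ge\alpha}(k+\ell-\alpha)\,\mathbb 1[p=k]\,\mathbb 1[q=\ell]$. Grouping the pairs $(k,\ell)$ into anti-diagonals $k+\ell=t$, each anti-diagonal is a sum of blocks $\{p=k\}\times\{q=t-k\}$. These blocks have pairwise disjoint row sets and disjoint column sets, and each carries the constant $t-\alpha$. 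Hence each anti-diagonal is a single blocky matrix times a scalar, so it is spiky. There are at most $2n+1\le 3(n+1)$ anti-diagonals, which gives $\spiky(R)\le 3(n+1)$.

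If the intended weights are real, so that $p$ and $q$ are not integer-valued, I would fall back on the anti-diagonal idea with a different partition. I would split the gate as $(p-\alpha)\mathbb 1[p+q\ge\alpha]+q\,\mathbb 1[p+q\ge\alpha]$, which needs only about two such terms. I would then decompose the threshold pattern $\mathbb 1[p+q\ge\alpha]$ as a "staircase" into $O(n)$ spiky pieces, using the rank-one freedom of each block to absorb the factor $p-\alpha$ or $q$.

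The main obstacle is getting the count to exactly $3(n+1)$ under the paper's precise gate model. That requires pinning down which weights are allowed and which of the three pieces (the staircase, $p$-weighted, or $q$-weighted part) costs $n+1$ each. I would first try the integer anti-diagonal argument above, and then adapt the bookkeeping to match the constant.
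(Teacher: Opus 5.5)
Your anti-diagonal argument is correct, but it only covers gates whose weights lie in $\{0,1\}^n$ (or are small integers). In the paper, a $\relu$ gate has arbitrary real weights $w_1,w_2\in\R^n$ and a real shift, and the proposition is about that model. There, $p(x)=\langle w_1,x\rangle$ and $q(y)=\langle w_2,y\rangle$ can each take $2^n$ distinct values, and $p+q$ up to $4^n$. So grouping by level sets of $p+q$, or peeling the staircase one step at a time, costs exponentially many spiky matrices.

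Your fallback paragraph has the right shape; it is essentially the paper's route. But its only nontrivial step, ``decompose $\mathbb{1}[p+q\ge\alpha]$ as a staircase into $O(n)$ spiky pieces,'' is asserted without argument, and that step is the whole content of the proof. You have also mislocated the difficulty: the constant $3(n+1)$ is not the obstacle.

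The missing ingredient, which the paper cites from Avraham and Yehudayoff, is that every real threshold matrix $T(x,y)=\mathbb{1}[p(x)+q(y)>\alpha]$ on $\{0,1\}^n\times\{0,1\}^n$ is a sum of $n+1$ blocky matrices with disjoint supports. Here is one way to see it. Merge the at most $2^n$ distinct row values $p(x)$ and the at most $2^n$ distinct column values $\alpha-q(y)$ in sorted order, breaking ties so that a column value sits above an equal row value. Relabel them by their ranks; these are $(n+1)$-bit integers $a(x),b(y)$ with $T(x,y)=\mathbb{1}[a(x)>b(y)]$. Reading bits from the most significant one, $\mathbb{1}[a>b]=\sum_{i=1}^{n+1}\mathbb{1}[a_{<i}=b_{<i},\,a_i=1,\,b_i=0]$. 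The $i$-th term is blocky, with one block for each common prefix $u\in\{0,1\}^{i-1}$.

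With $T=B_1+\dots+B_{n+1}$ in hand, your fallback closes: $\max\{0,p+q-\alpha\}=\sum_i\bigl((p-\alpha)\otimes\mathbb{1}\bigr)\circ B_i+\sum_i(\mathbb{1}\otimes q)\circ B_i$ is a sum of $2(n+1)$ spiky matrices. The paper gets $3(n+1)$ instead only because it splits $L'$ into three rank-one pieces, keeping the constant term separate.
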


Observe that $\Sigma \circ \mathrm{LTF}$ circuits can be simulated by $\Sigma \circ \relu$ circuits with only constant factor blowup in size, since
\[\mathbb{1}[\langle w, x \rangle \geq \alpha] = \max\{0, \langle w, x \rangle - \alpha+1\} - \max\{0, \langle w, x \rangle - \alpha\}. \]

Linear threshold and ReLU circuits are central models in both complexity theory and machine learning.
$\Sigma \circ \mathrm{LTF}$ circuits capture linear combinations of halfspaces and have been studied since the early days of threshold logic. On the other hand, ReLU gates have become the dominant activation function in modern neural networks, thus $\Sigma \circ \mathrm{ReLU}$ circuits are the basic building blocks of modern neural networks.  However, our understanding of their expressive power from a Boolean complexity perspective remains limited.  Understanding the power and limitations of these models is therefore both an important task in circuit complexity and in machine learning. 

Despite decades of work, the best known lower bounds for these circuits are very weak: even for $\Sigma \circ \ltf$, essentially no superlinear lower bounds are known for explicit functions \cite{roychowdhury1994lower}. 
Williams \cite{williams2018limits} gave the strongest results to date, proving that for every $k$ there exists a function in $\clsNP$ that requires sparsity $n^k$ in both $\Sigma \circ \ltf$ and $\Sigma \circ \relu$ circuits. For explicit functions, the strongest known result is due to Kane and Williams \cite{Kane_Williams2016}, who showed that even the stronger class $\ltf \circ \ltf$ requires size $\Omega(n^{3/2})$ to compute the $n$-variate Andreev function. Mukherjee and Basu \cite{mukherjee2017lower} extend this to prove a lower bound of $\Omega(n^{1-\delta})$ for $\ltf \circ \relu$ circuits computing the Andreev function. 

Our theorem shows that spiky rank serves as a lower bound measure for depth-two neural networks with ReLU activation, just as blocky rank does for $\Sigma \circ \ltf$ circuits. In particular, proving spiky rank lower bounds of $\omega((\log N)^{5/2})$ for explicit $N \times N$ matrices would immediately yield new circuit lower bounds.  
At present, our lower bounds are of order $\log N$; the first milestone toward this goal would be to establish \emph{any} super-logarithmic spiky rank lower bounds for explicit matrices.

\begin{problem}
       Find an explicit real matrix $M \in \R^{N \times N}$ with $\spiky(M)=\omega((\log N)^{5/2})$. 
\end{problem}


 \subsubsection{Bounds for Random Matrices}\label{sec:bounds_random}
 How large can the spiky rank of a matrix be? 
Trivially, both the spiky rank and blocky rank of an $N \times N$ Boolean matrix are at most $N$. 
For blocky rank, this bound can be improved to $\tfrac{N}{\log N}$ \cite{PudlakRodl,DY24}, and therefore the same upper bound applies to spiky rank as well. 
A natural question is whether spiky rank admits a substantially stronger upper bound. 
Avraham and Yehudayoff \cite{DY24} showed that most matrices are maximally hard for blocky rank. 
We prove that the same phenomenon holds for spiky rank.
\begin{theorem}[Informal]
A random Boolean matrix of size $N \times N$ has spiky rank of order $\Omega\big(\tfrac{N}{\log N}\big)$ with high probability. 
\end{theorem}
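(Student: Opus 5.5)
The plan is a counting argument: count the Boolean matrices of spiky rank at most $r$ and compare with $2^{N^2}$. The obstacle is that spiky matrices carry real parameters, so the set of spiky rank-one matrices is infinite. I will therefore encode a Boolean matrix with a small spiky decomposition by a discrete object, namely the combinatorial type of the decomposition, and show that each type accounts for only a few Boolean matrices.

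\textbf{Step 1: combinatorial skeletons.} Each spiky matrix is $B\circ(u\otimes v)$, where $B$ is blocky. The blocky pattern $B$ is determined by assigning each row and each column a label in $\{0,1,\dots,N\}$, where label $0$ means ``not in any block'' and a nonzero label is the index of the block. This gives at most $(N+1)^{2N}$ patterns. For a decomposition $M=\sum_{i=1}^r B_i\circ(u_i\otimes v_i)$, the tuple of patterns $(B_1,\dots,B_r)$ has at most $(N+1)^{2Nr}=2^{O(rN\log N)}$ choices.

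\textbf{Step 2: Boolean matrices per skeleton.} Fix the patterns. The entries of $M$ are then given by polynomial maps of the variables $u_i,v_i$, of which there are $2rN$ real variables in total. Each entry $M_{xy}=\sum_i [B_i]_{xy}\,u_i(x)v_i(y)$ is a polynomial of degree at most $2$. The Boolean matrices realizable with this skeleton are among the $\{0,1\}$-patterns achieved by $N^2$ quadratic polynomials in $2rN$ variables. To count them, I use Warren's theorem (or Milnor--Thom), applied to the $2N^2$ polynomials $M_{xy}$ and $M_{xy}-1$. Realizability forces each entry to be a zero of one of them, and the set of realizable zero/nonzero patterns has size at most $(c\cdot 2N^2\cdot 2/(2rN))^{2rN}=2^{O(rN\log N)}$ (the Rónyai--Babai--Ganapathy zero-pattern bound handles zero patterns directly). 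This gives at most $2^{O(rN\log N)}$ Boolean matrices per skeleton.

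\textbf{Step 3: conclusion.} In total at most $2^{O(rN\log N)}$ Boolean matrices have spiky rank at most $r$. If $r\le cN/\log N$ for a small constant $c$, this is $o(2^{N^2})$. Hence a uniformly random Boolean matrix has spiky rank $\Omega(N/\log N)$ with probability $1-2^{-\Omega(N^2)}$.

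The main obstacle is Step 2, where the continuous parameters must be controlled. The key point to check is that the zero-pattern bound gives $(O(\text{number of polynomials}\cdot\text{degree}/\text{variables}))^{\text{variables}}$, which here is $N^{O(rN)}$. If that bound turns out to be awkward, a fallback is to work directly from the equations $M_{xy}\in\{0,1\}$ as a finite union of varieties, and count sign conditions via Warren's bound on the polynomials $M_{xy}-\tfrac12$. The sign of $M_{xy}-\tfrac12$ determines the Boolean entry, and Warren's bound then gives at most $(8e\cdot N^2/(2rN))^{2rN}$ sign patterns, which suffices.
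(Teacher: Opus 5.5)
Your proposal is correct and is essentially the paper's argument. The paper also fixes the $r$ blocky skeletons ($N^{O(rN)}$ choices), treats the entries as degree-$2$ polynomials in $2rN$ real variables, and applies Warren's theorem to get $N^{O(rN)}$ realizable matrices per skeleton. The only difference is that the paper counts sign patterns of the $\pm1$ matrices $J-2M$, which is exactly your fallback with $M_{xy}-\tfrac12$; this yields the slightly stronger bound for sign spiky rank. Your primary zero-pattern count (R\'onyai--Babai--Ganapathy) handles exact spiky rank directly and works equally well.
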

For real matrices, the situation is more extreme: the trivial upper bound of $N$ turns out to be essentially tight.  

\begin{theorem}[Informal]
    A random real matrix of size $N \times N$ has spiky rank at least $\tfrac{N}{2}$ with high probability.
\end{theorem}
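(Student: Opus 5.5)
We prove the statement by a dimension-counting argument. The set of $N\times N$ real matrices of spiky rank at most $r$ is contained in a finite union of images of polynomial maps from parameter spaces of dimension well below $N^2$ when $r<N/2$. Such a union has Lebesgue measure zero. Consequently, for any absolutely continuous distribution on $\R^{N\times N}$ (for instance i.i.d.\ Gaussian entries), the spiky rank is at least $N/2$ with probability one.

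\emph{Parametrizing one spiky matrix.} By the definition of spiky matrices, a spiky rank one matrix is $B\circ(u\otimes v)$ with $B$ blocky and $u,v\in\R^N$. The support pattern $B$ ranges over a finite set: at most $(N+1)^{2N}$ choices, since each row and each column is assigned to a block or to "zero". For a fixed pattern $B$, the map $(u,v)\mapsto B\circ(u\otimes v)$ is polynomial in $2N$ real parameters. Note that distinct blocks can be scaled independently, because each block $(S_i\times T_i)$ carries $u|_{S_i}\otimes v|_{T_i}$, and these are disjoint pieces of $u$ and $v$. So each spiky matrix with pattern $B$ is captured by exactly $2N$ parameters.

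\emph{Parametrizing a sum of $r$ of them.} For a fixed tuple of patterns $(B_1,\dots,B_r)$, the matrices $\sum_{j=1}^r B_j\circ(u_j\otimes v_j)$ form the image of a polynomial map $\R^{2Nr}\to\R^{N^2}$. The image of a $C^1$ map from $\R^{k}$ into $\R^{m}$ with $k<m$ has measure zero, by Sard's theorem or simply because a Lipschitz-on-compacts map does not increase Hausdorff dimension. If $r<N/2$, then $2Nr<N^2$. The set $\{M:\spiky(M)\le r\}$ is a finite union, over at most $(N+1)^{2Nr}$ pattern tuples, of such images. Hence it is a null set, and a random real matrix has $\spiky(M)\ge N/2$ almost surely.

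\emph{Where the care is needed.} The only subtle point is the parameter count: one must check that a spiky matrix really needs no more than $2N$ parameters for each pattern, rather than one rank-one factor per block. This is exactly the disjointness observation above. If one instead insisted on a separate $u,v$ per block, the count would blow up and the bound would fail.

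For an "with high probability" statement under a discrete distribution, for example entries uniform in a large finite set $S$, we would replace the measure argument with Schwartz–Zippel. Each image lies in an algebraic variety of dimension at most $2Nr<N^2$, so some nonzero polynomial vanishes on it. We would then need an explicit bound on the degree of that polynomial, obtained by elimination, and take $|S|$ large relative to this degree times the number of patterns. This degree bookkeeping is the step I expect to be most technical.
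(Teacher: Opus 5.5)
Your proposal is correct and follows essentially the same argument as the paper. You fix the tuple of blocky patterns $B_1,\dots,B_r$, write the corresponding matrices as the image of a polynomial map $\R^{2rN}\to\R^{N\times N}$, and use $2rN<N^2$ for $r<N/2$ to get a Lebesgue-null image; the finite union over patterns then gives the claim for any absolutely continuous distribution. The only cosmetic difference is in justifying the null-image step: the paper views $\R^{2rN}$ as a null subset of $\R^{N^2}$ and uses that $C^1$ maps preserve null sets, while you invoke Sard's theorem or Hausdorff dimension. Your closing Schwartz--Zippel paragraph goes beyond what the paper proves and is not needed for the statement.
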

This mirrors what is known for many other natural measures in complexity theory, such as circuit complexity and communication complexity, where random functions or matrices almost always exhibit the largest possible complexity.

\subsubsection{Bounds for Specific Matrices.}\label{sec:bounds_specific}
The lower bounds discussed above, while strong, are for random (non-explicit) matrices. 
For applications in complexity theory, however, it is essential to establish lower bounds for explicit families of matrices. 

For blocky rank, the best known lower bounds for explicit families of matrices are of order $\Omega(\log N)$ (Greater-Than \cite{DY24}, Inner-Product, Disjointness \cite{HHH23}).

For spiky rank, we develop a general lower bound framework (\cref{thm:sprlbframe}) and apply it to two examples: adjacency matrices of the Hamming cube and expander graphs. The common properties of these graphs that we use are:
\begin{description}[noitemsep]
    \item[Thinness:] every small subgraph has smaller edge-to-node ratio compared to the entire graph; 
    \item[Induced matchings:] there are large induced matchings in all large subgraphs.
\end{description}
This framework can only yield lower bounds up to $\Theta(\log N)$, the bottleneck is in the use of a variation of a lemma from \cite[Lemma~12]{AvrahamYehudayoff2022}.

\medskip
\noindent
\emph{Hamming distance.}  Consider the matrix $\HDone{n}$ of size $N \times N$ with $N = 2^n$, whose rows and columns are indexed by all $n$-bit strings, and where  
\[
\HDone{n}(x,y) = 
\begin{cases}
1 & \text{if $x$ and $y$ differ in exactly $1$ coordinate,}\\
0 & \text{otherwise.}
\end{cases}
\]

Applying our framework to this matrix yields the following bound:  

\begin{restatable}{theorem}{lbforhd} 
\label{thm:lower-bound-hd1}
    $\spiky(\HDone{n}) \geq \Omega(\sqrt{\log N}).$
\end{restatable}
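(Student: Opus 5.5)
We argue by induction on $n$ and aim to show $\spiky(\HDone{n}) \ge c\sqrt{n}$. Fix a decomposition $\HDone{n} = \sum_{i=1}^r S_i$ with each $S_i = B_i \circ (u_i \otimes v_i)$ spiky. The plan has two parts. First, we find a large subcube on which one summand is forced to vanish. Second, we recurse on the restriction of $\HDone{n}$ to that subcube; since the restriction is a copy of $\HDone{m}$ for smaller $m$, and restricting the remaining summands to the subcube gives a spiky decomposition of $\HDone{m}$ with fewer terms, the induction applies.

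The structural step uses the two properties highlighted above, thinness and induced matchings. The support of a spiky matrix is a disjoint union of combinatorial rectangles $R_j \times C_j$, and on each rectangle the entries are $u_i(x)v_i(y)$. A nonzero rank-one block has a support that is itself a rectangle $R'\times C'$, and it must cover the $1$-entries of $\HDone{n}$ that lie inside it. Every rectangle contained in the support of $\HDone{n}$ is a complete bipartite subgraph of the cube. Since the cube has no $K_{2,2}$, such a rectangle is a star. Large blocks that also cover zero entries have to be cancelled by other summands.

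The intended counting goes as follows. Consider the induced graph of $\HDone{n}$ on a subcube of dimension $m$. It has $m2^{m-1}$ edges, and its edge-to-vertex ratio $m/2$ grows, which is the thinness property: small subsets of the cube span few edges relative to their size (edge-isoperimetry gives at most $\tfrac12|S|\log|S|$ edges). We want to show that some summand has few "useful" blocks. One way is to use a variant of the Avraham--Yehudayoff lemma \cite[Lemma~12]{AvrahamYehudayoff2022}, applied to the large induced matchings of the cube. Within any subcube, a perfect matching along one direction is induced. On an induced matching $\{(x_k,y_k)\}$, the submatrix of $\HDone{n}$ is the identity. Each $S_i$ restricted to that matching submatrix is again spiky, which means a permuted block-diagonal of rank-one blocks.

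The goal is to pick a coordinate direction and a restriction of $\Theta(\sqrt{n})$ coordinates such that some $S_i$ becomes identically zero, or becomes diagonal and can be absorbed, on the remaining subcube of dimension $n - O(\sqrt{n})$. Iterating this $r$ times must leave a nonempty cube, which gives $r \ge \Omega(\sqrt n)$. The budget arithmetic is what produces $\sqrt{n}$ rather than $n$. Each round costs about $i$ fixed coordinates in round $i$, for instance by pigeonholing over the $i$ surviving summands. So $\sum_{i\le r} i \approx r^2 \le n$.

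The main obstacle is the cancellation issue. Blocks of different $S_i$ can cover zero entries and cancel each other, so no single summand needs to look like a star. To handle this, I would first try a row/column partition argument. By choosing a random restriction of a few coordinates together with a random direction, we aim to show that for some $i$ all blocks of $S_i$ meeting the subcube lie inside single fibers of the partition. Then $S_i$ restricted to the subcube has a block-diagonal structure that a second restriction can kill. Making this zeroing step rigorous with only $O(i)$ fixed coordinates is where I expect the real work lies.
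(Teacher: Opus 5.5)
\textbf{Gap: the zeroing step.} The decisive step of your plan is unproved: in round $i$ you need a subcube of codimension $O(i)$ on which some summand vanishes (or ``becomes diagonal''). Moreover, this mechanism cannot work as conceived. A summand $S_i$ vanishes on $Q\times Q$ only if each of its blocks misses $Q\times Q$. Now take a summand containing one block whose row and column sets meet every subcube of small codimension. A rank-one term with full support is the extreme case, and a spiky summand may combine such a block with many tiny ones. Such a summand is never zeroed and never becomes diagonal under any subcube restriction. So nothing in your argument guarantees that a killable summand exists in each round, and once only such summands remain you have no way to reach a contradiction. The missing idea is to handle large blocks \emph{collectively, via rank}, instead of one summand at a time. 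In the paper, a block is big if its numbers of rows and columns sum to more than $s=2^{\sqrt n}$. The blocks of a spiky matrix use disjoint rows and disjoint columns, so each summand has fewer than $2N/s$ big blocks. Hence the big parts of all $t$ summands have total rank at most $2tN/s$, however they cancel.

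\textbf{How the paper handles the small blocks.} Only the small blocks are treated combinatorially, with no induction on subcubes. Thinness, in the form $|E|\le |V|\log_2|V|$ for subgraphs of the cube, shows that a small block $R\times C$ covers at most $O(\sqrt n)(|R|+|C|)$ ones of $\HDone{n}$. So all small blocks together cover $O(tN\sqrt n)$ of the $nN$ ones. If $t\le c\sqrt n$, at least half the ones are uncovered. Pigeonholing over the $2n$ classes $\{(x,x\oplus e_i): x_i=j\}$ then gives an $N/4\times N/4$ permutation submatrix whose ones no small block touches; here rows with $x_i=j$ against columns with $y_i=1-j$ restrict $\HDone{n}$ to a permutation matrix. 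Small blocks may still cover zero entries there, which is exactly your cancellation worry. The Avraham--Yehudayoff lemma, applied once per summand, removes them at a loss of a factor $4$ each. This leaves a permutation submatrix of size $N/4^{t+1}$ on which $\HDone{n}$ equals the sum of the big parts. Comparing ranks gives $2^{\sqrt n}\le 2t\cdot 4^{t+1}$, which contradicts $t\le c\sqrt n$. So the $\sqrt n$ comes from balancing the thinness constant $\approx\log s$ against $4^t\lesssim s$, not from a budget $\sum_{i\le r} i\le n$. Also, the final rows and columns are not a subcube, so your self-reduction to $\HDone{m}$ plays no role.

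\textbf{Two smaller errors.}
\begin{itemize}
\item The cube does contain $K_{2,2}$ (its $4$-cycles), so all-ones rectangles inside $\HDone{n}$ are stars or $2\times 2$, not only stars.
\item A direction-$i$ perfect matching is not an induced matching of the graph. It is, however, a permutation submatrix in the row/column sense above, which is what is actually needed.
\end{itemize}
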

This bound is not far from the optimal; upper bound on the spiky rank is $\spiky(\HDone{n}) \leq \br(\HDone{n})\leq \log N$, where the last upper bound follows from the following blocky matrix decomposition: Let $M_i$ be the matrix such that $M_i(x,y) = 1$ iff $x_i \neq y_i$ and all the other entries of $x$ and $y$ coincide. Then $M_i$ is a permutation matrix, so $\br(M_i) = 1$ and $\HDone{n} = M_1 + \dots + M_n$.

This lower bound is quite remarkable considering that $\HDone{n}$ matrix is very simple with respect to many other matrix complexity measures, as we discuss in \cref{sec:matrix_parameters}.
\begin{problem}
   Is $\spiky(\HDone{n}) \geq \Omega(\log N)$? 
\end{problem}

\medskip
\noindent
\emph{Expander graphs.}  
As a second application, let $G$ be an $(N, d, \lambda)$-spectral expander: a $d$-regular graph with the second largest eigenvalue of the adjacency matrix not exceeding $\lambda$, denote its adjacency matrix with $M_G$. 
We prove:  
\begin{restatable}{theorem}{lbforexp} 
\label{thm:lower-bound-exp}
    $\spiky(M_G) \geq \Omega\!\left(\min\!\left(\tfrac{d}{\lambda}, \log \tfrac{N}{d^2}\right)\right).$
\end{restatable}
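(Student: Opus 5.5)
We apply the general framework (\cref{thm:sprlbframe}) to $M_G$, so the task is to verify its two hypotheses, \emph{thinness} and \emph{induced matchings}, with parameters governed by $d/\lambda$ and $\log(N/d^2)$. The intuition behind the framework is the following. Suppose $M_G=\sum_{i=1}^r S_i$ with each $S_i$ spiky. Each diagonal block of $S_i$ is rank one, so the nonzero support of every block is a combinatorial rectangle. Because $M_G$ is Boolean and sparse, a rank-one block that carries many edges must be cancelled by the other summands. So in any small decomposition, some spiky matrix must cover, on a large subgraph, many edges compared to its vertex count. Thinness rules this out, and the induced-matching condition forces such a cover to exist.

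\textbf{Thinness via the expander mixing lemma.} For $S\subseteq V$ the mixing lemma gives $e(S)\le \frac{d|S|^2}{2N}+\frac{\lambda|S|}{2}$. Hence a set of size $|S|\le \epsilon N$ has edge-to-vertex ratio at most $\epsilon d/2+\lambda/2$. The whole graph has ratio $d/2$. So small subgraphs are thinner than $G$ by a factor of roughly $\max(\epsilon,\lambda/d)$. For rectangles $A\times B$ we will use the bipartite version, $e(A,B)\le \frac{d|A||B|}{N}+\lambda\sqrt{|A||B|}$.

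\textbf{Induced matchings.} In any large induced subgraph $G[S]$, with $|S|$ at least some constant times $N/d$, we need an induced matching of size about $e(S)/d^2$, or $|S|/d$ up to constants. We will build it greedily. Each vertex has degree at most $d$, so picking an edge $uv$ and deleting $N(u)\cup N(v)$ and their incident edges removes at most $2d^2$ edges. By the mixing lemma, $G[S]$ retains average degree $\Omega(d|S|/N)$ as long as $\lambda$ is small compared to the density. Repeating therefore gives a matching of size $\Omega(e(S)/d^2)$.

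\textbf{Plugging into the framework.} The framework should yield a lower bound equal to the number of rounds in which we can keep shrinking to a smaller subgraph that still has the required density. Each round costs a constant factor in density, which is where the $d/\lambda$ term comes from: once $|S|\lesssim \lambda N/d$, the error term $\lambda|S|$ of the mixing lemma dominates. Each round also costs a multiplicative shrinkage of size, which caps the number of rounds at $\log(N/d^2)$: the induced-matching step needs $|S|\gtrsim d^2$. Taking the minimum gives $\Omega(\min(d/\lambda,\log(N/d^2)))$.

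\textbf{Main obstacle.} The hard part is matching our parameters exactly to the quantitative form of \cref{thm:sprlbframe}. In particular, the variant of \cite[Lemma~12]{AvrahamYehudayoff2022} may require an induced matching whose size is a fixed fraction of $|S|$, whereas we only get $|S|\cdot(\text{density})/d$. I would first try to show the framework tolerates a density-dependent loss per round. If it does not, I would instead restrict to the regime $d\le \sqrt{N}$, where $\log(N/d^2)$ is nontrivial, and absorb the $d^2$ factor into the logarithm.
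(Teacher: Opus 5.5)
Your ingredients are the right ones: the framework of \cref{thm:sprlbframe}, the mixing lemma for thinness, and a greedy matching. But you never instantiate the framework, and you misread what (P2) requires. As written, the argument does not close.

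The concrete problem is (P2). It is not a statement about vertex-induced subgraphs $G[S]$. It quantifies over every $M'\le M_G$ obtained by deleting at most a $\gamma$-fraction of the ones, and asks for a $D\times D$ permutation submatrix of $M'$ itself. No mixing lemma is needed there. Your version would also fail: for $|S|$ a constant $c$ times $N/d$, the lower bound $d|S|^2/(2N)-\lambda|S|/2$ is negative once $\lambda>c$, so you are not even guaranteed an edge. What works is plain greedy on $M'$. With $\gamma=1/2$, $M'$ has at least $Nd/2$ ones and at most $d$ ones per row and column. Pick a one $(u,v)$, then discard the at most $d$ rows with a one in column $v$ and the at most $d$ columns with a one in row $u$. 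Each step destroys at most $2d^2$ ones, so you get $D=N/(4d)$. Your worry that the framework needs $D$ to be a fixed fraction of $N$ is unfounded: $D$ enters only through $\log(Ds/(2N))$. So there is no need to restrict to $d\le\sqrt N$ or to tolerate a per-round density loss.

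The missing quantitative step is the choice $s=\lambda N/(10d)$ and $k=\lambda$. For $|S|,|T|\le s$, the bipartite mixing lemma gives
\[e(S,T)\le \Bigl(\tfrac{ds}{N}+\lambda\Bigr)\sqrt{|S||T|}\le 2\lambda\sqrt{|S||T|}\le\lambda(|S|+|T|),\]
which is (P1).

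The two terms then come directly out of the framework, not from rounds of density loss or a requirement $|S|\gtrsim d^2$:
\begin{itemize}
\item The first term is the covering count $\gamma\,\spar(M_G)/(kN)=d/(2\lambda)$, since the small blocks of $t$ spiky matrices cover at most $2Nkt$ ones.
\item The second term is $\log(Ds/(2N))=\log(\lambda N/(80d^2))$. It comes from shrinking the permutation submatrix by a factor $4$ per summand, measured against the rank at most $2Nt/s$ of the big blocks. The $d^2$ is the product of $N/D\asymp d$ and $N/s\asymp d/\lambda$. The extra $\lambda$ only helps, since $\lambda\ge 1/\sqrt2$ whenever $N\ge 2d$ (compare the trace of $M_G^2$, which is $Nd$).
\end{itemize}
This yields $\Omega(\min(d/\lambda,\log(N/d^2)))$, and it is exactly how the paper argues.
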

This implies (see \cref{cor:log-lb-for-expanders}) an explicit $\Omega(\log N)$ lower bound for the spiky rank of adjacency matrices of $(N, \Theta(\log^2 N), \Theta(\log N))$-spectral expanders. 

\medskip
\noindent
\emph{Inner Product.}  The Boolean inner product matrix is defined by $\IP_n(x,y) = \sum_{i \in [n]} x_i y_i \bmod 2$ for $x,y \in \{0,1\}^n$. For $\IP_n$ we obtain bounds that are weak but improve over the trivial ones:
 $$\frac{n}{\log n} \leq \spiky(\IP_n) \leq 2^{(1-\delta)n},$$
for some constant $\delta > 0$. The lower bound is established in \cref{sec:ip_bound}, while the upper bound follows from non-rigidity results (\cref{sec:intro-rigidity}).

\medskip
\noindent
\emph{Disjointness.} The Disjointness matrix $\Disj_n$ is the $2^n \times 2^n$ Boolean matrix whose rows and columns are indexed by $\{0,1\}^n$, with $\Disj_n(x,y)=1$ iff $x$ and $y$ have no common $1$'s. Similarly, for the Disjointness matrix we observe the same lower bound as for $\IP_n$ (\cref{sec:ip_bound}), so combined with Williams' upper bound \cite{williamsOrthogonal2024} we have
\[ \frac{n}{\log n} \le \spiky(\Disj_n) \le 5^{n/5}.\]

We believe the true lower bounds for $\Disj_n$ and $\IP_n$ are much closer to the upper bounds, and in particular ask:

\begin{problem}
Is it true that $\spiky(\IP_n) = \omega(n)$ or $\spiky(\Disj_n) = \omega(n)$? 
\end{problem}

\begin{figure}[h!]
    \centering
    \scalebox{0.9}{\definecolor{color2}{HTML}{B5E5CF}
\definecolor{color1}{HTML}{FCB5AC}
\definecolor{color3}{HTML}{888888}
\definecolor{color4}{HTML}{2222FF}
\begin{tikzpicture}[scale=0.3]

\def\scaleWidth{50}
\def\scaleHeight{1}
\def\fillFraction{0.3} 
\def\cornerRadius{0.7}

\def\cornerRadius{0.5} 


\begin{scope}
    \clip
      (\cornerRadius,0) -- (\fillFraction*\scaleWidth,0) -- (\fillFraction*\scaleWidth,\scaleHeight)
      -- (\cornerRadius,\scaleHeight) arc[start angle=90,end angle=180,radius=\cornerRadius cm] -- cycle;

    \foreach \i in {-0.3,0,...,50} { 
        \draw[very thick, color1!70] (\i,0) -- (\i-1,\scaleHeight);
    }
\end{scope}

\newcommand{\placeflag}[6]{%
  \pgfmathsetmacro{\x}{#1*\scaleWidth}
  \ifx#4a 
    \draw[very thick,#5] (\x,0) -- (\x,\scaleHeight+#6);
    \node[draw=black,fill=white,rotate=30,anchor=south west,inner sep=2pt,rounded corners=3pt] 
      at (\x,\scaleHeight+#6) {#2};
    \node[below] at (\x,0) {#3};
  \else 
    \draw[very thick,#5] (\x,\scaleHeight) -- (\x,-#6);
    \node[draw=black,fill=white,rotate=30,anchor=north west,inner sep=2pt,rounded corners=3pt] 
      at (\x,-#6) {#2};
    \node[above] at (\x,\scaleHeight) {#3};
  \fi
}

\placeflag{0.10}{$\HDone{n}$: \cref{thm:lower-bound-hd1}}{$\sqrt{\log N}$}{a}{color1}{5.5}
\placeflag{0.20}{$\IP_n, \Disj_n$: \cref{sec:ip_bound}}{$\tfrac{\log N}{\log \log N}$}{a}{color1}{5.5}
\placeflag{0.30}{Expanders: \cref{cor:log-lb-for-expanders}}{$\log N$}{a}{color1}{6}
\placeflag{0.40}{$\underset{\text{\cref{sec:circuits}}}{\Sigma \circ \text{ReLU lower bounds}}$}{$\log^{5/2} N$}{b}{color1}{13}
\placeflag{0.85}{$\underset{\text{\cref{sec:rigidity}}}{\text{Rigidity bounds}}$}{$\tfrac{N}{2^{(\log \log N)^{O(1)}}}$}{b}{color1}{8}
\placeflag{0.95}{$\underset{\text{\cref{thm:all_matrices_are_hard}}}{\text{Random Boolean Matrix}}$}{$\tfrac{N}{\log N}$}{a}{color1}{8}

\placeflag{0.60}{$\Disj_n$ \cite{williamsOrthogonal2024}}{$N^{0.47}$}{a}{color2}{8}
\placeflag{0.70}{$\IP_n$ \cite{AW17}}{$N^{1-\delta}$}{a}{color2}{8}

\draw[thick]
  (\cornerRadius,0) -- (\scaleWidth-\cornerRadius,0)
  arc[start angle=270,end angle=360,radius=\cornerRadius cm] --
  (\scaleWidth,\scaleHeight-\cornerRadius)
  arc[start angle=0,end angle=90,radius=\cornerRadius cm] --
  (\cornerRadius,\scaleHeight)
  arc[start angle=90,end angle=180,radius=\cornerRadius cm] --
  (0,\cornerRadius)
  arc[start angle=180,end angle=270,radius=\cornerRadius cm] -- cycle;
\end{tikzpicture}}
    \caption{
    \definecolor{color2}{HTML}{B5E5CF}
\definecolor{color1}{HTML}{FCB5AC}
\definecolor{color3}{HTML}{888888}
\definecolor{color4}{HTML}{2222FF}
    The picture illustrates the upper and lower bounds for spiky rank: above the scale there are known \colorbox{color1}{lower bounds} and \colorbox{color2}{upper bounds}. Below the scale are consequences of the corresponding lower bounds. The dashed part of the scale represents the state of the art of explicit lower bounds. Here $N$ is always the size of the matrix, so for $\HDone{n}$, $\Disj_n$, and $\IP_n$, $N = 2^n$. }
    \label{fig:hdone}
\end{figure}

\subsubsection{Relations to other matrix parameters}\label{sec:matrix_parameters}
When introducing a new matrix parameter, a natural and useful question to understand is how it compares to existing ones: is it stronger, weaker, or qualitatively different? 

\paragraph{Blocky rank.} Since spiky rank can be seen as an algebraic strengthening of blocky rank, the first step is to ask whether this additional flexibility actually increases its power. Is blocky rank equivalent to spiky rank? 

\begin{question}
Does every Boolean matrix $M$ satisfy $\br(M) \leq \poly(\spiky(M))$?
\end{question}

We answer this question qualitatively: there is a dimension-free relation between spiky and blocky rank:

\begin{restatable}{theorem}{thmbrlefspr}
\label{thm:brlefspr}
    For any Boolean matrix $M$, 
    $\br(M)\leq \spiky(M)^{O(\spiky(M))}.$
\end{restatable}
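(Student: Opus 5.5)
I would prove a stronger, inductive statement. For a matrix $M$ whose entries lie in a finite set $\Lambda\subset\R$ with $|\Lambda|\le q$, I would aim to show $\br(M)\le f(q,r)$, where $r=\spiky(M)$, $f(q,r)=(qr)^{O(r)}$, and $q$ stays bounded by $\poly(r)$ along the induction. Write $M=\sum_{i=1}^r B_i\circ(u_iv_i^T)$. For $T\subseteq[r]$, let $C_T$ be the set of entries $(x,y)$ lying in a diagonal block of $B_i$ exactly for $i\in T$. Its indicator is $\prod_{i\in T}B_i\circ\prod_{i\notin T}(J-B_i)$, where $J$ is the all-ones matrix. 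Expanding the product writes it as a signed sum of at most $2^r$ entrywise products of blocky matrices. An entrywise product of blocky matrices is again blocky, since it is the common refinement of the equivalence relations. Hence $\br(\mathbb 1_{C_T})\le 2^r$. On $C_T$ we have $M=L_T:=\sum_{i\in T}u_iv_i^T$, a matrix of rank $\le|T|$.

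The induction peels off $B_1$. First, $M\circ(J-B_1)$ agrees with $\sum_{i\ge2}B_i\circ(u_iv_i^T)$ off the blocks of $B_1$. It is therefore itself a sum of $r-1$ spiky matrices, namely $(J-B_1)\circ B_i\circ(u_iv_i^T)$; each of these is spiky because $(J-B_1)\circ B_i=B_i-B_1\circ B_i$ is a difference of two blocky matrices and can be split accordingly, at the cost of doubling the count. It still takes values in $\Lambda\cup\{0\}$. Second, $M\circ B_1$: on a block $X\times Y$ of $B_1$, $M-u_1v_1^T$ is a sum of $r-1$ spiky matrices. To keep the value set finite, I would regularise $u_1,v_1$. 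On the rectangle $X\times Y$, for each subset $T\ni1$ the entries in $C_T$ satisfy $u_1(x)v_1(y)=M(x,y)-\sum_{i\in T\setminus\{1\}}u_i(x)v_i(y)$.

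The key lemma I would prove is a rescaling and quantisation step. Within a single block of $B_1$, after partitioning rows into $(qr)^{O(1)}$ classes according to the ratios $u_1(x)/u_1(x_0)$ forced by the finitely many Boolean constraints, one can replace $u_1v_1^T$ by a matrix that is constant on each class-rectangle. Such a matrix is a blocky combination with few terms, uniformly across all blocks of $B_1$ because the classes are defined by value patterns rather than by the block. After subtracting it, the remaining matrix is a sum of $r-1$ spiky matrices with value set of size $\le q\cdot\poly(r)$.

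Combining the two parts gives a recursion of the shape $f(q,r)\le 2f(q',r-1)\cdot(qr)^{O(1)}$ with $q'\le q\cdot\poly(r)$. Starting from $q=2$, this solves to $f\le r^{O(r)}$, which is \cref{thm:brlefspr}.

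The main obstacle is the quantisation lemma. A rank-one block $u_1v_1^T$ need not take few values on its rectangle: $M$ is constrained only on the part $C_T$ of the rectangle, and elsewhere the other spiky terms can compensate arbitrary real values. My first attempt would be a dimension and Ramsey-type argument. Consider the row vectors $(u_i(x))_{i\le r}$ restricted to a common block of all $B_i$. The Boolean constraints $\langle a_x,c_y\rangle\in\{0,1\}$ on a product set force $\{a_x\}$ to span a lattice-like configuration with at most $2^{O(r)}$ distinct inner-product profiles against $\{c_y\}$, as in the fact that a Boolean matrix of rank $k$ has at most $2^k$ distinct rows. I would then merge rows with equal profiles, which replaces each real spiky term by a blocky combination whose size is controlled by $r^{O(r)}$.
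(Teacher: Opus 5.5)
\textbf{There is a genuine gap.} It sits where you flag it yourself: the quantisation lemma carries the whole argument and is not proved. As formulated, the Boolean constraints cannot deliver it.
\begin{itemize}
\item The identities $u_1(x)v_1(y)=M(x,y)-\sum_{i\in T\setminus\{1\}}u_i(x)v_i(y)$ involve the other unknown real vectors. So they do not confine the ratios $u_1(x)/u_1(x_0)$ to any finite set: already $J=J\circ(u\mathbf{1}^T)+J\circ((\mathbf{1}-u)\mathbf{1}^T)$ holds for every $u\in\R^N$. Any such lemma would have to assert that a better representation exists, and nothing in your sketch produces one.
\item Your fallback, that a rank-$k$ Boolean matrix has at most $2^k$ distinct rows, needs Booleanity on a full rectangle. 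Inside a block, however, $C_T$ is the block minus the blocks of up to $r$ other blocky matrices.
\item The induction on $r$ around the lemma is also broken, because $(J-B_1)\circ B_i$ is not blocky. So $M\circ(J-B_1)$ need not be a sum of $r-1$ spiky matrices. For example, $M=J-I=I\circ(-J)+J\circ J$ with $B_1=I$ gives $M\circ(J-B_1)=J-I$. For $N\ge 3$ this still has spiky rank $2$, since the support of a spiky matrix is blocky and the off-diagonal is not.
\item Splitting $(J-B_1)\circ B_i=B_i-B_1\circ B_i$ gives $2(r-1)$ terms, so $r$ does not decrease. Treating the two resulting $(r-1)$-term sums separately instead loses the finite-value hypothesis that the induction needs.
\item Even granting your recursion, iterating $q'\le q\cdot\poly(r)$ $r$ times gives $q=r^{O(r)}$ and hence $f=r^{O(r^2)}$, not $r^{O(r)}$.
\end{itemize}

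\textbf{The missing idea.} Drop the spiky structure right after your first step, and induct on the number of blocky matrices covering the unconstrained entries rather than on $r$. Your set-up with $C_T$ and $L_T$ is exactly the paper's starting point.
\begin{itemize}
\item Take a block $D$ of the entrywise product of the $B_i$, $i\in T$. Then $L_T|_D$ has rank at most $r$ and is Boolean on $S=C_T\cap D$. Moreover, $D\setminus S$ is covered by at most $r$ blocky matrices.
\item \cref{lem:brcompl} shows that such a matrix agrees on $S$ with a $\pm1$-combination of $(t+r)^{O(t+r)}$ blocky matrices, where $t$ is the number of covering blocky matrices.
\item For the proof, fix a column basis $c_1,\dots,c_r$. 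Rows whose basis entries all lie in $S$ are determined by their $\{0,1\}$-pattern on the basis, which gives at most $2^r$ rectangles.
\item Every other row has a basis entry in some block $N_{ij}$ of the $i$-th covering matrix. This sorts these rows into at most $rt$ groups.
\item Deleting the columns of $N_{ij}$ from such a group removes the $i$-th covering matrix entirely. So induction on $t$ applies, giving $T_t=rt\,T_{t-1}+2^r$.
\item Restricting to $S$ costs a factor $2^r$ by inclusion--exclusion, as in \cref{lem:bc-to-br}.
\item The blocks $D$ lie in disjoint rows and columns, so their decompositions combine. Summing over the $2^r$ types then yields $r^{O(r)}$.
\end{itemize}
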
 

This situation is reminiscent of tantalizing \emph{log-rank conjecture}. There, too, a combinatorial parameter - the partition number - is vastly strengthened by adding algebraic property via matrix rank. Similar to  \cref{thm:brlefspr}, an exponential upper bound is known, and the challenge is whether this relation can be improved to a quasi-polynomial one.

\paragraph{$\gamma_2$-norm.} 

Since our main goal is to establish lower bounds on spiky rank, it is natural to compare it with matrix parameters that are more amenable to analysis -- particularly matrix norms. A well-studied example, with many applications in complexity theory, is the \emph{$\gamma_2$-norm} (see \cite{balla2025factorization} and references within). For a matrix $M$, it is defined as 
$$\gamma_2(M)=\min_{U, V : M=UV}\|U\|_{\row}\|V\|_{\col},$$
where $\|U\|_{\row}$ denotes the maximum $\ell_2$-norm of the rows of $U$, and $\|V\|_{\col}$ denotes the maximum $\ell_2$-norm of the columns of $V$.

Strong lower bounds are known for $\gamma_2$-norm for functions of interest; for instance, $\gamma_2(\IP_n) \ge N^{\Omega(1)}$, where $\IP_n(a,b) = \sum_{i\in[n]} a_i b_i \bmod 2$ (see e.g. \cite{lee2009lower}). This motivates comparing $\gamma_2$ with rank-based parameters such as blocky rank and spiky rank. For blocky rank, it is known that for every Boolean matrix $M$, $\br(M) \geq \log \gamma_2(M)$ \cite[Proposition 3.1]{HHH23}. For spiky rank, while an analogous inequality currently seems out of reach, we observe a simple non-explicit separation (\cref{sec:gamma-2-vs-br}):
\begin{theorem} \label{thm:gamma-2-vs-br}
    There exists a Boolean matrix $M$ such that $\spiky(M) \ge \Omega(\gamma_2^2(M))$.
\end{theorem}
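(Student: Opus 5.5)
We will prove \cref{thm:gamma-2-vs-br} by a counting argument over a structured family of Boolean matrices that all have small $\gamma_2$-norm but are too numerous to all have small spiky rank. A natural family is the $N\times N$ Boolean matrices that are \emph{$d$-sparse}, meaning each row and each column has at most $d$ ones. For such a matrix $\gamma_2(M)\le d$; factoring $M=M\cdot I$ even gives $\gamma_2(M)\le\sqrt d$, since $\|M\|_{\row}\le\sqrt d$ and $\|I\|_{\col}=1$. A cleaner choice is the family of all Boolean matrices with at most $d$ ones per row, for which $\gamma_2(M)\le \|M\|_{\row}\,\|I\|_{\col}\le\sqrt d$. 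Thus $\gamma_2^2(M)\le d$, and it suffices to exhibit a matrix in this family with $\spiky(M)=\Omega(d)$.

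The first step is to count this family. The number of Boolean matrices with exactly $d$ ones in each row is $\binom{N}{d}^N\ge (N/d)^{dN}$, which is $2^{\Omega(dN\log N)}$ when, say, $d\le \sqrt N$. The second step is to upper bound the number of Boolean matrices with $\spiky(M)\le r$. This is the crux, because spiky matrices have real entries, so naive counting fails. We would use the random Boolean matrix result of \cref{sec:bounds_random}, whose proof presumably shows that the number of $N\times N$ Boolean matrices with spiky rank at most $r$ is $2^{O(rN\log N)}$. The argument would go as follows. A sum of $r$ spiky matrices is determined by its $r$ underlying blocky patterns, giving at most $N^{2rN}$ choices, together with real rank-one weights. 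Once the patterns are fixed, the matrix $\sum_i B_i\circ(u_i\otimes v_i)$ is a polynomial map of degree $2$ in $2rN$ real variables. A Warren / Milnor--Thom type bound on the number of sign patterns, or rather zero/one patterns, then counts the Boolean outcomes. Applying it to the $N^2$ polynomials $p_{xy}$ and $p_{xy}-1$ yields at most $(O(N^2)/(rN))^{O(rN)}=2^{O(rN\log N)}$ Boolean matrices per pattern choice. Multiplying the two counts still gives $2^{O(rN\log N)}$.

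Comparing the two counts, if $r=c\,d$ for a small constant $c$, then $2^{O(cdN\log N)}<2^{\Omega(dN\log N)}$ for $d\le\sqrt N$. Hence some $M$ in the family has $\spiky(M)>cd\ge c\,\gamma_2^2(M)$, which proves the theorem. To make this non-trivial, take $d=\sqrt N$; then $M$ has spiky rank $\Omega(\sqrt N)$ while $\gamma_2(M)\le N^{1/4}$.

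The main obstacle is the counting of low spiky rank Boolean matrices in the second step, specifically making sure the Warren-type bound produces $2^{O(rN\log N)}$ rather than something with an extra factor of $N$ in the exponent. If the random-matrix theorem of the paper already states this count explicitly, we simply invoke it. Otherwise we would carry out the polynomial sign-pattern count, taking care that ``equals $0$ or $1$'' is captured by sign conditions on $p_{xy}$ and $p_{xy}-1$.
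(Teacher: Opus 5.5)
Your proposal is correct and is essentially the paper's proof: the paper takes a left-$d$-regular bipartite graph, bounds $\gamma_2(M_G)\le\sqrt d$ through the factorization $M_G=M_G\cdot I$, and compares $\binom{N}{d}^N$ with the $N^{6Nr}$ count from \cref{lem:counting_spr} to obtain $\spiky(M_G)\ge\spiky_\pm(M_G)\ge d/18$. The one difference is in the counting: the paper counts by sign spiky rank (a Boolean $M$ of spiky rank $r$ yields the $\pm1$ matrix $J-2M$ of spiky rank at most $r+1$). That lets it use plain Warren on nonvanishing polynomials instead of your zero/one-pattern variant, and it also gives the slightly stronger bound for $\spiky_\pm$.
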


The reverse direction -- whether blocky rank is upper-bounded by a function of $\gamma_2$-norm -- is a conjecture asked in $\cite{HHH23}$, and is equivalent to a conjecture in operator theory. Very recently, Goh and Hatami \cite{goh2025block} proved that for $N \times N$ Boolean matrix $M$, $\br(M) \leq 2^{\gamma_2(M)} (\log N)^2$, and the same bound also applies to spiky rank.

\paragraph{Sparsity.} One of the most basic and easily computable parameters of a matrix is its \emph{sparsity}, the number of non-zero entries. We show that sparse matrices necessarily have low spiky rank, a fact that becomes useful when relating spiky rank to matrix rigidity (\cref{lem:rigidity_vs_sparsity}). In particular, for any real matrix $M$ with  sparsity $m$, we prove $$\spiky(M) \leq \sqrt{m}. $$ For Boolean matrices, this bound can be sharpened, and even expressed directly in terms of blocky rank (\cref{thm:sparsity_vs_br_Boolean}). Specifically, for any Boolean matrix $M$ with sparsity $m$, it holds 
$$\br(M)\leq O\left(\frac{\sqrt{m}\log\log m}{\log m}\right).$$

\paragraph{Sign and approximate measures.} 

Some computational models are naturally characterized not by exact matrix parameters but by their approximate or sign variants. For instance, randomized communication complexity is lower-bounded by approximate rank, while unbounded-error communication complexity is captured by sign-rank.
Formally, for a matrix measure $\mu\colon \{0,1\}^{N \times N} \to \mathbb{R}$, we define its approximate version by
\[ \mu_\epsilon(M) \coloneqq \inf\{\mu(M') \mid \|M-M'\|_\infty \le \epsilon \text{ , } M' \in \R^{N \times N} \}.\]
Similarly, the sign version of $\mu$ is $\mu_\pm(M) \coloneqq \inf\{\mu(M') \mid \sign(M) = \sign(M')\}$. 

Approximate measures tend to be stronger than their exact counterparts, and often defy intuition developed for them. For example, the approximate version of the log-rank conjecture is false \cite{CMS20}. In case of spiky rank,  it is sharply separated from all related approximate and sign measures of a matrix: $\spiky(\HDone{n}) \geq \sqrt{\log N}$  (\cref{thm:lower-bound-hd1}), while $\rk_{\pm}(\HDone{n}) = O(1)$ \cite{GHIS25}, $\gamma_{2,\varepsilon}(\HDone{n}) = O(1)$ \cite{CHZZ24}, $\br_\epsilon(\HDone{n}) = \poly(\log \log N)$ (see \cref{sec:approx-br-bound}). 

This highlights an interesting contrast between the spiky and blocky rank. For blocky rank, there is a dimension-free relation with its sign version: we show in \cref{sec:br-vs-sign-br} that $\br(M) \le 2^{\br_\pm(M)}$. By contrast, for spiky rank we have $\spiky(\HDone{n}) \ge \Omega(\sqrt{\log N})$ but $\spiky_\pm(\HDone{n}) \le \rk_\pm(\HDone{n}) = O(1)$, ruling out any dimension-free relation between $\spiky$ and $\spiky_\pm$.

Thus, proving lower bounds for sign spiky rank is an even harder challenge. At the same time, this opens a promising direction to seek explicit versions of the known non-explicit bounds below the logarithmic barrier of our framework. For example, we know that the adjacency matrix of a random degree-$d$ graph has sign spiky rank $\Omega(d)$ (\cref{thm:large-br-bounded-deg}). A natural candidate for derandomization is a spectral expander, leading to the following question:
\begin{question}
    What is $\spiky_\pm(M_G)$ for a spectral expander $G$? 
\end{question}

\paragraph{Organization.} \cref{sec:prelim} introduces preliminaries. In \cref{sec:nonexplicit bounds}, we prove the existence of Boolean and real matrices with high spiky rank. The applications to matrix rigidity and circuit complexity are presented in \cref{sec:rigidity} and \cref{sec:circuits}, respectively. \cref{sec:explicit bounds} establishes spiky rank bounds for specific explicit matrices. In \cref{sec:relations}, we explore connections between spiky rank and other matrix parameters. Finally, \cref{sec:approx_sign} demonstrates separations between the approximate and sign versions of blocky and spiky rank.

\section{Preliminaries} \label{sec:prelim}

Our convention in this paper is to use capital letters for matrix dimensions, e.g.\ $A \in \mathbb{R}^{M \times N}$. We use the notation $\circ$ for the entrywise product. The notation $[k]$ represents the set of integers $\{1, \ldots, k\}$.

For a field $\F$, we denote $0_\F$ for the additive identity and $1_\F$ for the multiplicative identity. Addition, multiplication, and the entrywise product of matrices over $\F$ are denoted $+_\F$, $\cdot_\F$, and $\circ_\F$ respectively.

For a matrix $M \in \mathbb{F}^{N \times N}$ its sparsity $\spar(M)$ is defined as the number of its non-$0_\F$ elements.

We write $\mathbb{1}_S$ for $S \in [N] \times [N]$ for the matrix in $\{0,1\}^{N \times N}$, where $\mathbb{1}_S[i,j] = 1$ iff $(i,j) \in S$.

\paragraph{Blocky and spiky measures.} As some of our results generalize to arbitrary fields, we rewrite \cref{def:matrices,def:ranks} here with respect to an arbitrary field $\F$.

\begin{definition}
    A matrix $M \in \F^{M \times N}$ is \emph{blocky} if there is an integer $k$ and two collections of disjoint sets $S_1, \dots, S_k \subseteq [M]$ and $T_1, \dots, T_k \subseteq [N]$ such that $M[i, j] = 1_\F$ if $(i,j) \in \bigcup_{\ell \in [k]} S_i \times T_i$ and $M[i, j] = 0_\F$ otherwise.

    A matrix $M \in \F^{M \times N}$ is \emph{spiky} if there exists a blocky matrix $B$ and a rank-one matrix $C$ (where rank is taken over $\F$) such that $M = B \circ_\F C$.
\end{definition}
\begin{definition}
    The \emph{blocky rank} of a matrix $M \in \F^{M \times N}$, denoted $\br^{\F}(M)$, is the minimum $r$ such that $M = \sum_{i \in [r]} \alpha_i B_i$ where each $\alpha_i \in \F$ and each $B_i$ is blocky.
    
    The \emph{spiky rank} of a matrix $M \in \F^{M \times N}$, denoted $\spiky^{\F}(M)$, is the minimum $r$ such that $M = \sum_{i \in [r]}S_i$ where each $S_i$ is spiky.
\end{definition}

When the underlying field is $\R$, we write simply $\br(M)$ and $\spiky(M)$.

Here are some easy facts about spiky rank:

\begin{fact}[Subadditivity]
    For any field $\F$ and matrices $A, B \in \F^{M \times N}$, $\spiky^{\F}(A + B) \leq \spiky^{\F}(A) + \spiky^{\F}(B)$.
\end{fact}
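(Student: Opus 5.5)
The statement to prove is subadditivity: $\spiky^{\F}(A+B) \le \spiky^{\F}(A) + \spiky^{\F}(B)$. The plan is to concatenate optimal decompositions directly from the definition of spiky rank.

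Let $r = \spiky^{\F}(A)$ and $s = \spiky^{\F}(B)$. By definition we can write
\[
A = \sum_{i \in [r]} S_i \quad\text{and}\quad B = \sum_{j \in [s]} T_j,
\]
where every $S_i$ and every $T_j$ is spiky, i.e.\ of the form $B' \circ_\F C'$ with $B'$ blocky and $C'$ of rank one over $\F$. These minima are attained because spiky rank takes nonnegative integer values. It is also finite: each single-entry matrix is spiky, being a one-block blocky matrix times a rank-one matrix.

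Adding the two decompositions gives
\[
A + B = \sum_{i \in [r]} S_i + \sum_{j \in [s]} T_j,
\]
which expresses $A+B$ as a sum of $r+s$ spiky matrices. Since $\spiky^{\F}(A+B)$ is the minimum length of such a sum, it is at most $r+s$.

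There is no real obstacle here. The only thing to check is that the definition uses plain sums of spiky matrices, with no coefficients or disjointness conditions, so that concatenating two decompositions is allowed. Edge cases where $A$ or $B$ is the zero matrix are handled by the empty sum, or by noting that the zero matrix is itself spiky.
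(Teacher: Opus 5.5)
Your proposal is correct: concatenating optimal spiky decompositions of $A$ and $B$ gives a decomposition of $A+B$ into at most $\spiky^{\F}(A)+\spiky^{\F}(B)$ spiky matrices. The paper states this fact without proof, and your argument is exactly the immediate one it takes for granted, since spiky rank is defined via plain sums with no coefficients or disjointness constraints.
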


\begin{fact}[Upper bounds]
    For any field $\F$ and matrix $A \in \F^{M \times N}$, $\spiky^{\F}(A) \leq \rank^\F(A)$ (where $\rank^\F$ is the rank over $\F$) and $\spiky^{\F}(A) \leq \br^{\F}(A)$.
\end{fact}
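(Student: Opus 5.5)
We prove the two inequalities of the Upper bounds fact separately, and both come straight from the definitions.

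For $\spiky^{\F}(A) \leq \rank^{\F}(A)$, let $r = \rank^{\F}(A)$ and write $A = \sum_{i\in[r]} u_i v_i^{\top}$ as a sum of $r$ rank-one matrices over $\F$. The $N_1\times N_2$ all-$1_\F$ matrix $J$ is blocky: take $k=1$, $S_1=[N_1]$, $T_1=[N_2]$. Hence each $u_iv_i^{\top} = J\circ_\F (u_iv_i^{\top})$ is spiky, and $A$ is a sum of $r$ spiky matrices. If $A=0$, then $A$ is the empty sum, and if one insists on at least one summand, $0 = J\circ_\F 0$ is spiky.

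For $\spiky^{\F}(A)\leq \br^{\F}(A)$, write $A=\sum_{i\in[r]}\alpha_i B_i$ with each $B_i$ blocky and $\alpha_i\in\F$. It suffices to check that $\alpha_i B_i$ is spiky. Here $\alpha_i B_i = B_i\circ_\F(\alpha_i J)$. Since $\alpha_i J = (\alpha_i \mathbb{1})\mathbb{1}^{\top}$ has rank at most one, $\alpha_i B_i$ is spiky, and if $\alpha_i=0$ the term can simply be dropped. So $A$ is a sum of at most $r$ spiky matrices.

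There is no real obstacle. The only point to watch is the convention that "rank-one" in the definition of spiky means rank at most one, so that zero scalars are harmless; otherwise we just discard the zero terms.
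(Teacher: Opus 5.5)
Your proof is correct and is the same argument the paper relies on. The paper states this as an easy fact without a formal proof, and its only justification, given in passing, is that every rank-one matrix is spiky (its entrywise product with the all-ones blocky matrix) and that every blocky decomposition is already a spiky one, via $\alpha_i B_i = B_i \circ (\alpha_i J)$.
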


\paragraph{Sign blocky rank and sign spiky rank.} We study the sign versions blocky rank and spiky rank over the reals. In the following definition, we assume that the matrices have no 0-entries. For a matrix $A$, the value $\sign(A)$ is the $\sign$ function applied element-wise: 

\[ \sign(A)[i, j] = \begin{cases} 1 & \mbox{if } A[i, j] > 0\\ -1 & \mbox{if } A[i, j] < 0 \end{cases}. \]

\begin{definition} \label{def: sign variants}
    Let $A$ be a matrix in $\mathbb{R} ^{M \times N}$.
    \begin{itemize}
        \item The \emph{sign blocky rank} of $A$, denoted $\sbr(A)$, is defined as
    \[\sbr(A) = \min_{B \in \R^{M \times N}}\{\br(B): \sign(A)=\sign(B)\}.\]

        \item     The \emph{sign spiky rank} of $A$, denoted $\spiky_{\pm}(A)$, is defined as
    \[\spiky_{\pm}(A) = \min_{B \in \R^{M \times N}}\{\spiky(B): \sign(A)=\sign(B)\}.\]
    \end{itemize}    
\end{definition}

\begin{fact}[Upper bounds for sign measures.]
    For any matrix $A \in \R^{M \times N}$, $\sbr(A) \leq \br(A)$ and $\spiky_{\pm}(A) \leq \spiky(A)$.
\end{fact}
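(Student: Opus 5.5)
The plan is to witness both minima with the choice $B = A$. The only thing to check is that $A$ itself is an admissible candidate in \cref{def: sign variants}.

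\textbf{Admissibility.} In \cref{def: sign variants}, $\sbr(A)$ is the minimum of $\br(B)$ over all $B \in \R^{M \times N}$ with $\sign(B) = \sign(A)$. Likewise, $\spiky_{\pm}(A)$ is the minimum of $\spiky(B)$ over the same set of $B$. The matrix $B = A$ lies in $\R^{M \times N}$. It has no zero entries, by the standing assumption of that definition, so $\sign(B)$ is well defined. Trivially $\sign(B) = \sign(A)$.

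\textbf{Conclusion.} Since $A$ belongs to the feasible set, each minimum is at most its value at $B = A$. This gives
\[
\sbr(A) \le \br(A) \qquad\text{and}\qquad \spiky_{\pm}(A) \le \spiky(A).
\]
Both $\br(A)$ and $\spiky(A)$ are finite, because every real matrix is a sum of finitely many rank-one (hence spiky) matrices and a linear combination of single-entry blocky matrices. So the inequalities are meaningful.

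There is no real obstacle here. The only subtlety is the convention that the sign measures are defined only for matrices without zero entries, and that assumption is exactly what makes $B = A$ admissible.
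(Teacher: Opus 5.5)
Your proof is correct and is the intended argument. Since $\sign(A)=\sign(A)$, the matrix $B=A$ is feasible in the minimizations defining $\sbr$ and $\spiky_{\pm}$, so each minimum is at most $\br(A)$ or $\spiky(A)$ respectively. The paper states this as a fact without proof, relying on exactly this observation.
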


\section{Non-explicit lower bounds for spiky rank} \label{sec:nonexplicit bounds} 
\subsection{Lower bound for random Boolean matrices}

 We first count the number of $N \times N$ Boolean matrices that have spiky rank $r$ for a fixed number $r \leq N$. It is sufficient to count the number of  $\pm 1$-matrices of with spiky rank $r$. Indeed, if $M \in \{0,1\}^{N \times N}$,  then  its signed version $M' \in \{\pm 1\}^{N \times N}$ defined as $M' = J -2 M$ with $J$ being the $N \times N$ all-ones matrix, has $\spiky(M') \leq r+1$.

In fact, we will actually count the number of $\pm1$-matrices with sign spiky rank $r$ (\cref{def: sign variants}). As this measure is upper-bounded by spiky rank, this gives us an estimate for the number of matrices with spiky rank $r$ as well. 

The idea is to use Warren's theorem \cite{warren1968lower} to count the number of possible sign patterns. The same idea and argument were used to count the number of matrices with sign rank $r$ in \cite{alon2016sign}). Before stating the theorem, let us set it up first.

Let $P_1, P_2, \ldots, P_m$ be real polynomials in $\ell$ variables each. For $x \in \R^{\ell}$, such that no polynomial among $\{P_i\}$ vanishes at $x$, its sign pattern is the vector 
$$\big(\sign(P_1(x)), \sign(P_2(x)), \ldots, \sign(P_m(x)) \big) \in \{-1,1\}^m.$$

\begin{theorem}[\cite{warren1968lower}]\label{thm:warren}
Let $P_1, P_2, \ldots, P_m$ real polynomials, each in $\ell$ variables and of degree at most $k$. If $m \geq \ell$, then the number of different sign pattern for $x$ ranging over all $\R^{\ell}$ is at most $\left(4ekm/\ell\right)^{\ell}$.
\end{theorem}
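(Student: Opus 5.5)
The plan is the classical route of Warren: reduce counting sign patterns to counting connected components of a semialgebraic set, bound those components by counting critical points, and control the critical points with a B\'ezout-type estimate. The $1/\ell$ in the bound comes from only ever needing at most $\ell$ of the polynomials at a time.

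\textbf{Step 1: from sign patterns to components.} Let $V=\bigcup_i\{P_i=0\}$. Each sign pattern realized by some $x$ is realized on an open set: the set of $x$ with that pattern is a union of connected components of $\R^\ell\setminus V$, and the sign vector is constant on each component. So it suffices to show that $\R^\ell\setminus V$ has at most $(4ekm/\ell)^\ell$ connected components.

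\textbf{Step 2: reduce to bounded components and to few polynomials at a time.} First I make the components bounded: intersect with a large ball, or add the polynomial $R^2-\|x\|^2$ for a large $R$, chosen so that every component meets the ball. Then I perturb the $P_i$ slightly, replacing each by $P_i\pm\varepsilon$ for generic small $\varepsilon$, so that the hypersurfaces are smooth and in general position. This does not decrease the number of components.

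The topological heart, which I take from Warren's argument, is the following. In general position, each bounded component $C$ of the complement contains or is bounded by a point where some coordinate-like function is minimized on $\overline C$. Such a point lies on the intersection of some $j\le\ell$ of the hypersurfaces and is a critical point there. Hence
\[
\#\{\text{components}\}\;\le\;\sum_{j=0}^{\ell}\binom{m}{j}\,c_j,
\]
where $c_j$ bounds the number of components, or critical points of a generic linear function, of a nonsingular intersection $\{P_{i_1}=\dots=P_{i_j}=0\}$.

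\textbf{Step 3: B\'ezout/Milnor count and arithmetic.} For fixed $i_1,\dots,i_j$, the critical points of a generic linear functional $L$ on $\{P_{i_1}=\cdots=P_{i_j}=0\}$ are solutions of a square polynomial system. This system consists of the $j$ equations $P_{i_s}=0$ together with the Lagrange conditions that $\nabla L,\nabla P_{i_1},\dots,\nabla P_{i_j}$ are linearly dependent. Its degrees are at most $k$ and at most $j(k-1)+1\le jk$, so B\'ezout (or the Milnor--Thom bound) gives $c_j\le (2k)^\ell$ up to a factor of $2$. Summing then gives
\[
\sum_{j\le\ell}\binom{m}{j}(4k)^\ell\;\le\;\Bigl(\frac{em}{\ell}\Bigr)^\ell(4k)^\ell,
\]
where I use $m\ge\ell$ and the standard estimate $\sum_{j\le\ell}\binom mj\le(em/\ell)^\ell$.

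\textbf{Main obstacle.} The delicate point is Step 2: justifying that a generic perturbation preserves (or only increases) the number of components, and that every bounded component is charged to a critical point on an intersection of at most $\ell$ hypersurfaces. That requires Sard's theorem to get transversality, plus care with components that are unbounded or touch many hypersurfaces. I would handle this exactly as Warren does, using the ``lowest point in a generic direction'' argument. Getting the exact constant $4e$ rather than a worse one is pure bookkeeping in Step 3.
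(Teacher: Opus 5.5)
The paper does not prove this theorem; it quotes it from Warren and uses it as a black box in \cref{lem:counting_spr}. So the question is whether your sketch stands on its own. Step~1 and the overall architecture are the classical ones. But Step~3, which you call bookkeeping, is where the proof actually lives, and as written it fails.

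\textbf{The gap in Step~3.} The $1/\ell$ gain needs each $j$-subset to contribute $(O(k))^\ell$ \emph{independently of $j$}, and your count does not deliver this.
\begin{itemize}
\item For $j\ge 2$, dependence of $\nabla L,\nabla P_{i_1},\dots,\nabla P_{i_j}$ means all $\binom{\ell}{j+1}$ maximal minors (each of degree $\le j(k-1)$) vanish. That is not a square system in $x$.
\item Even granting a square system with $j$ equations of degree $k$ and $\ell-j$ of degree $\le jk$, B\'ezout gives $k^j(jk)^{\ell-j}$. At $j\approx\ell/2$ this is $(\ell/8)^{\ell/2}$ times $(2k)^\ell$.
\item With Lagrange multipliers and ordinary B\'ezout you get $k^{\ell+j}$, which is also too large.
\end{itemize}
In the regime $m=O(\ell)$, the resulting bound overshoots $(4ekm/\ell)^\ell$ by a factor $\ell^{\Omega(\ell)}$.

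The fallback ``or the Milnor--Thom bound'' does not connect to your Step~2. Milnor--Thom bounds the number of connected components of $Z_S=\{P_i=\pm\varepsilon_i,\ i\in S\}$. Your lowest-point argument instead charges regions to \emph{critical points} of $L$ on $Z_S$, and a single component of $Z_S$ can contain the lowest points of many different regions.

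\textbf{Two smaller problems.}
\begin{itemize}
\item Replacing $P_i$ by only one of $P_i\pm\varepsilon$ can decrease the number of regions: compare $x_1^2$ with $x_1^2+\varepsilon$. You must keep both, i.e.\ work in $\{|P_i|>\varepsilon_i\ \forall i\}$ with generic $\varepsilon_i$ below $\min|P_i|$ at chosen representatives. Subsets are then counted by $2^j\binom{m}{j}$ rather than $\binom{m}{j}$. This is the $2^\ell$ you silently inserted when passing from ``$(2k)^\ell$ up to a factor $2$'' to $(4k)^\ell$.
\item Adding $R^2-\|x\|^2$ as an extra polynomial changes $m$ to $m+1$ and requires $k\ge 2$.
\end{itemize}

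\textbf{A clean repair.} Drop the critical-point charging. Instead use the inequality $b_0(\R^\ell\setminus\bigcup_i Z_i)\le\sum_{|S|\le\ell}b_0(Z_S)$, with $Z_\emptyset=\R^\ell$, for smooth, two-sided, transversal hypersurfaces:
\begin{itemize}
\item Add the hypersurfaces one at a time.
\item A connected two-sided closed hypersurface splits a connected region into at most two pieces, so $b_0(U\setminus Z)\le b_0(U)+b_0(Z)$.
\item Recurse inside each $Z_t$ using the traces $Z_t\cap Z_i$, $i<t$.
\end{itemize}
This also handles unbounded regions, so no ball is needed. Milnor's theorem then gives $b_0(Z_S)\le k(2k-1)^{\ell-1}$. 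Its proof passes, via a sum of squares, to a single hypersurface of degree $2k$, where the Lagrange system \emph{is} square. Summing, $\sum_{j\le\ell}2^j\binom{m}{j}\,k(2k-1)^{\ell-1}\le\tfrac12(2k)^\ell\,2^\ell(em/\ell)^\ell\le(4ekm/\ell)^\ell$.

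Alternatively, you can keep your charging if you count critical points with multipliers $[\lambda_0:\dots:\lambda_j]\in\mathbb{P}^j$. The multihomogeneous B\'ezout number is $\binom{\ell}{j}k^j(k-1)^{\ell-j}\le(2k)^\ell$, which is independent of $j$. Unbounded regions then still need separate treatment.
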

\begin{lemma}\label{lem:counting_spr}
   The number of $N \times N$ sign matrices with sign spiky rank $r \le N/2$ is at most $N^{6rN}$. 
\end{lemma}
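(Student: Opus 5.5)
We follow the Alon--Moran--Yehudayoff style counting argument: first enumerate the combinatorial skeleton (the blocky matrices), then apply \cref{thm:warren} to the algebraic part (the rank-one factors). A sign matrix $A$ with $\spiky_\pm(A)=r$ has a real matrix $B$ with $\sign(B)=\sign(A)$, no zero entries, and $B=\sum_{t\in[r]} B_t\circ(u_t v_t^{\top})$. Here each $B_t$ is blocky and $u_t,v_t\in\R^N$.

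\textbf{Step 1: count blocky skeletons.} A blocky $N\times N$ matrix is determined by a label for each row and each column. The label is the index of its block in $[N]$, or ``zero'' if the row or column lies outside every block. So there are at most $(N+1)^{2N}$ blocky matrices. The number of $r$-tuples $(B_1,\dots,B_r)$ is then at most $(N+1)^{2rN}\le N^{O(rN)}$; more precisely, it is at most $N^{3rN}$ for $N\ge 2$, using $(N+1)^2\le N^3$.

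\textbf{Step 2: apply Warren's theorem for a fixed skeleton.} Fix the tuple $(B_1,\dots,B_r)$. Take as variables the $2rN$ entries of $u_1,v_1,\dots,u_r,v_r$, so $\ell=2rN$. For each entry $(i,j)$ define
\[P_{ij}(u,v)=\sum_{t\in[r]} B_t[i,j]\,u_t[i]\,v_t[j].\]
This is a polynomial of degree at most $k=2$, and there are $m=N^2$ such polynomials. Since $r\le N/2$, we have $\ell=2rN\le N^2=m$, which is exactly the hypothesis $m\ge\ell$ of \cref{thm:warren}; this is where the assumption $r\le N/2$ is used. Every sign matrix realized with this skeleton is a sign pattern of $(P_{ij})$ at a point where no $P_{ij}$ vanishes. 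By \cref{thm:warren}, the number of such patterns is at most
\[\left(\frac{4e\cdot 2\cdot N^2}{2rN}\right)^{2rN}=\left(\frac{4eN}{r}\right)^{2rN}\le (4eN)^{2rN}.\]

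\textbf{Step 3: combine.} The total count is at most $N^{3rN}(4eN)^{2rN}$. We need this to be at most $N^{6rN}$, that is, $(4eN)^2\le N^3$, or $N\ge 16e^2\approx 118$. The main obstacle is only this constant bookkeeping for small $N$. There are two routes for small $N$: tighten the skeleton count (a blocky matrix is determined by the row-label map and the column-label map, and labels can be normalized), or note that for small $N$ the trivial bound $2^{N^2}\le N^{6rN}$ holds whenever $r\ge1$ and $N^2\le 6rN\log_2 N$. The latter holds for all $N\le 2^{6}$ even with $r=1$, which covers the gap, since $N\le 118$ gives $N/\log_2N\le 6$ up to $N\approx 36$. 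The residual range $36<N<118$ is handled by the sharper estimate $(4eN/r)^{2rN}$ together with $r\ge1$; otherwise, the constant in the exponent is adjusted accordingly.
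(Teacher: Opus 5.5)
Your approach is the same as the paper's: enumerate the blocky skeletons $(B_1,\dots,B_r)$, then apply Warren's theorem to the $N^2$ degree-$2$ polynomials in the $2rN$ coordinates of the rank-one factors, with $r\le N/2$ giving $m\ge \ell$. Steps 1 and 2 are correct. Step 3, however, does not close as written, because your small-$N$ patch has numerical errors.
\begin{itemize}
\item The inequality $N\le 6\log_2 N$ holds only for $2\le N\le 29$, not for all $N\le 2^6$; at $N=64$ it reads $64\le 36$.
\item For $r=1$ the ``sharper estimate'' $(4eN/r)^{2rN}$ coincides with $(4eN)^{2rN}$. Together with your loosened skeleton count $N^{3rN}$, that is exactly the bound that forced $N\ge 16e^2\approx 118$.
\end{itemize}
So the case $r=1$, $30\le N\le 118$ is left uncovered. ``Adjusting the constant'' is not an option, since the lemma fixes the exponent $6rN$. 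Also, $(N+1)^2\le N^3$ fails at $N=2$.

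The repair is your first route, taken literally: keep the skeleton count $(N+1)^{2rN}$ rather than loosening it. The total is then at most $\bigl(4eN(N+1)/r\bigr)^{2rN}$, which is at most $N^{6rN}$ whenever $4e(N+1)\le rN^2$. For every $r\ge 1$ this holds once $N\ge 12$, and the trivial bound $2^{N^2}\le N^{6rN}$ covers $2\le N\le 29$. Alternatively, $r=1$ can be handled directly: a spiky matrix with no zero entries must have the all-ones blocky factor, so there are at most $2^{2N}\le N^{6N}$ such sign patterns.

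The paper's own write-up is no more careful here. It counts skeletons as $N^{4rN}$ and asserts $\bigl(4e\cdot\tfrac{2N^2}{2rN}\bigr)^{2rN}\le N^{2rN}$, which actually requires $r\ge 4e$. The tight skeleton count above is what makes the stated constant $6$ valid for all $r\ge 1$.
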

\begin{proof}
Let $M$ be a $N\times N$ sign matrix with sign spiky rank $r$.
Let $A \coloneqq A_1 + \dots + A_r \in\R^{N\times N}$ be the sign representation of $M$: $M=\sign(A_1 + \dots + A_r)$. Let us represent each $A_i = B_i \circ C_i$ where $B_i$ is a blocky matrix and $C_i$ is a rank-$1$ matrix. There are at most $N^{4rN}$ collections $B_1, \dots, B_r$, so let us fix one and count the number of potential matrices $M$ that appear as we change $C_1, \dots, C_r$. 

With fixed $B_1, \dots, B_r$ each entry of $A$ can be written as a degree-2 polynomial in $2 N r$. Let for each $i \in [r]$, $C_i = x_i^Ty_i$ for some vectors $x_i,y_i \in \R^{N}$. Then for $j,k \in [N]$ we have $A_{jk} = \sum_{i \in [r]} (B_i)_{jk} \cdot x_{ij} y_{ik}$. Then by
 \cref{thm:warren} we get that the number of different sign matrices $M$ of form $\sign(B_1 \circ C_1 + \dots + B_r \circ C_r)$ with fixed $B_1, \dots, B_r$ is at most $\left(4e \cdot \tfrac{2N^2}{2rN}\right)^{2rN} \leq N^{2rN}$, so in total there are at most $N^{4rN} \cdot n^{2rN}$ matrices of sign spiky rank $r$.
\end{proof}

Combining the lemma with the observation that $\spiky_{\pm}(M) \leq \spiky(M)$, we get that a uniformly random Boolean matrix has high spiky rank with high probability. More precisely,
\begin{theorem}\label{thm:all_matrices_are_hard}
    If $M$ is a $N \times N$ uniformly random Boolean matrix, then 
    \[\Pr \Big[\spiky(M) \geq \frac{N}{12\log N}\Big] \geq \Pr \Big[\spiky_{\pm}(M) \geq \frac{N}{12\log N}\Big] \geq 1-{2^{-\frac{N^2}{2}}}. \] 
\end{theorem}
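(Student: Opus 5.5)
The plan is a counting argument: compare the number of $N\times N$ matrices of small sign spiky rank, bounded by \cref{lem:counting_spr}, with the total number $2^{N^2}$ of Boolean matrices.

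The first inequality is immediate from $\spiky_{\pm}(M) \le \spiky(M)$ (the upper-bound fact for sign measures). If $\spiky_\pm(M)\ge t$ then $\spiky(M)\ge t$, so the first event contains the second.

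For the second inequality we must handle the fact that $M$ is $0/1$, while $\spiky_\pm$ was defined for matrices with no zero entries. I will interpret $\spiky_\pm(M)$ through the associated sign matrix $M' = J - 2M$, as in the discussion before \cref{lem:counting_spr}. The map $M\mapsto M'$ is a bijection, and a sign representation of $M'$ of spiky rank $r$ adds at most one to the rank; this can be absorbed into the constant, or avoided by reading $\spiky_\pm(M)$ as $\spiky_\pm(M')$.

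Set $r_0 = \lfloor N/(12\log N)\rfloor$, which satisfies $r_0 \le N/2$. By \cref{lem:counting_spr}, applied for each $r\le r_0$ and summed, the number of sign matrices with $\spiky_\pm < N/(12\log N)$ is at most
\[
\sum_{r\le r_0} N^{6rN} \le (r_0+1)\,N^{6r_0N} \le (r_0+1)\,2^{6N\log N\cdot N/(12\log N)} = (r_0+1)\,2^{N^2/2}.
\]
Dividing by $2^{N^2}$ gives a failure probability of at most $(r_0+1)2^{-N^2/2}$.

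The main obstacle is the stray polynomial factor $r_0+1$ (plus the possible $+1$ in rank from the $J-2M$ shift), which prevents matching the clean bound $2^{-N^2/2}$ exactly. I would first observe that the proof of \cref{lem:counting_spr} actually gives the stronger estimate $N^{4rN}\cdot N^{2rN}$ with slack in the Warren estimate: $(4e N/r)^{2rN}$ is much smaller than $N^{2rN}$. So the count with fewer than $r_0$ summands can be bounded by $N^{6r_0N}/(r_0+1)$, or equivalently one can shave a constant in the exponent. If that is too fiddly, I would accept a bound of the form $1-2^{-N^2/2+O(\log N)}$, which is what the argument naturally yields. The constant $12$ is chosen precisely so that $6r_0N\log N \le N^2/2$, leaving the remaining $N^2/2$ bits of entropy as the failure exponent.
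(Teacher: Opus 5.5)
Your proposal is correct and follows the paper's proof: compare the bound of \cref{lem:counting_spr} with $2^{N^2}$, get at most $N^{6N\cdot N/(12\log N)} = 2^{N^2/2}$ sign matrices of small sign spiky rank, and obtain the first inequality from $\spiky_{\pm}\le\spiky$. The $(r_0+1)$ factor never arises: the zero matrix is spiky, so padding with zero summands shows the lemma's count $N^{6rN}$ already covers every sign matrix of sign spiky rank \emph{at most} $r$, and no sum over $r$ is needed; the paper handles the $J-2M$ shift no more carefully than you do, simply identifying Boolean matrices with their sign versions.
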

\begin{proof}
    The number of matrices of sign spiky rank less than $N/(12\log N)$ is at most $N^{6N^2/(12\log N)} \le 2^{N^2/2}$ by \cref{lem:counting_spr} and the total number of $N \times N$ sign matrices is $2^{N^2}$.
\end{proof}

\subsection{Lower bound for real matrices}
The spiky rank of any $N \times N$ real matrix is trivially bounded above by $N$, since the real rank of a matrix is at most $N$, and every rank-one matrix is also a spiky matrix.
For Boolean matrices, even blocky rank can be bounded above by $O({N}/{\log N})$. However, for general real matrices, we show that the trivial upper bound is essentially tight: there exist matrices whose spiky rank is $\Omega(N)$. The key point in the proof as in \cref{lem:counting_spr} is that all spiky matrices with a fixed blocky pattern can be seen as evaluations of a polynomial.

We will need the following standard probabilistic notions: a distribution $\mu$ over $\R^m$ is \emph{absolutely continuous} if there exists an integrable density function $f\colon \R^m \to \R^+$ such that for every Lebesgue-measurable set $A \subseteq \R^m$ we have $\mu(A) = \int_A f d\lambda$ where $\lambda$ is the Lebesgue measure. Examples of absolutely continuous distributions include the uniform distribution over $[0,1]^m$ and the Gaussian distribution. 

\begin{theorem}\label{thm:linear-spiky-rank}
    Let $\mu$ be any absolutely continuous probability distribution over $\mathbb{R}_{N \times N}$. Then \[\Pr_{M \sim \mu}[\spiky(M) \ge N/2] = 1.\] 
    In particular this holds for $M \sim [0,1]^{N \times N}$.
\end{theorem}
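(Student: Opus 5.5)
We show that the set of real matrices with $\spiky(M) < N/2$ has Lebesgue measure zero in $\R^{N\times N}$. Since $\mu$ is absolutely continuous, a null set has $\mu$-probability zero, and the theorem follows. As in \cref{lem:counting_spr}, the idea is to fix the blocky patterns and view the remaining degrees of freedom as the parameters of a polynomial map.

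\emph{Finitely many patterns.} Let $r < N/2$. Any matrix with $\spiky(M)\le r$ can be written as $\sum_{i\in[r]} B_i\circ (x_i y_i^T)$ with $B_i$ blocky and $x_i,y_i\in\R^N$. Padding with zero summands handles spiky rank strictly less than $r$. There are only finitely many choices of the tuple $(B_1,\dots,B_r)$. So the set $\{M : \spiky(M)\le r\}$ is a finite union of images of maps
\[
\Phi_{B_1,\dots,B_r}\colon \R^{2rN}\to\R^{N\times N},\qquad (x_1,y_1,\dots,x_r,y_r)\mapsto \sum_{i\in[r]} B_i\circ(x_iy_i^T).
\]
Each such map is polynomial (of degree $2$) and hence smooth.

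\emph{Each image is null.} The domain has dimension $2rN < N^2$ because $r<N/2$. A smooth (in particular $C^1$, locally Lipschitz) map from $\R^{k}$ to $\R^{m}$ with $k<m$ has image of Lebesgue measure zero. One way to see this is to write $\R^k$ as a countable union of compact cubes. On each cube the map is Lipschitz, so the image of a cube of side $\delta$ lies in a ball of radius $O(\delta)$. Covering a unit cube by $\delta^{-k}$ subcubes then gives total image volume $O(\delta^{m-k})\to 0$. Alternatively, one can cite the easy case of Sard's theorem, or the fact that the image of a polynomial map lies in a proper algebraic set, which has measure zero. A finite union of null sets is null.

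\emph{Conclusion and obstacle.} By the first two steps, $\Pr_{M\sim\mu}[\spiky(M)<N/2]=\mu(\text{null set})=0$. The uniform distribution on $[0,1]^{N\times N}$ is absolutely continuous, so the "in particular" clause follows. The only step needing care is the measure-zero claim for images of smooth maps into higher dimension. This is standard, and I would cite it or give the short Lipschitz covering argument above. Everything else is bookkeeping: the parameter count $2rN$ and the finiteness of the set of blocky patterns.
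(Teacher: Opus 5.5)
Your proposal is correct and matches the paper's proof. Both fix one of the finitely many tuples of blocky patterns and parametrize the resulting family by a degree-$2$ polynomial map $\R^{2rN}\to\R^{N\times N}$. Both then conclude from $2rN<N^2$ that each image, and hence the finite union, is Lebesgue-null. The paper phrases this last step as ``$C^1$ maps send null sets to null sets'' applied to $\R^{2rN}$ viewed inside $\R^{N^2}$, which is the same fact as your Lipschitz-covering or mini-Sard argument.
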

\begin{proof}
    It suffices to show that the Lebesgue measure of the set of all matrices with spiky rank at most $N/2-1$ is zero. This immediately implies the existence of a matrix with spiky rank at least $N/2$.

    Let $M \in \mathbb{R}^{N \times N}$ be a matrix with $\spiky(M) = r$. Then $M = \sum_{i \in [r]} S_i$, where each $S_i$ is a spiky matrix. Each spiky matrix $S_i$ can be viewed as an entrywise product of a blocky matrix $B_i$ and a rank-1 matrix $a_i^T b_i$ for some vectors $a_i,b_i \in \mathbb{R}^n$. 
    Fix a collection of blocky matrices $B_1, \dots, B_r \in \{0,1\}^{N \times N}$, and define the corresponding set of spiky rank-$r$ matrices with the blocky pattern $B_1,\dots,B_r$:
    
    $$\mathcal{S}_{B_1, \dots, B_r} =
    \Big\{\sum_{i \in [r]} (a_i^T b_i) \circ B_i \mid a_1,\dots,a_r,b_1,\dots,b_r \in \mathbb{R}^N\Big\},$$ where $\circ$ is the Hadamard (entrywise) product. 
    Then, the set of all matrices in $\mathbb{R}^{N \times N}$ of spiky rank at most $r$ is equal to the union over all such choices of $B_1, \dots, B_r$:
    $\bigcup_{B_1, \ldots B_r} \mathcal{S}_{B_1, \dots, B_r}.$ Since the number of blocky matrices is finite (trivially it does not exceed $2^{N^2}$), to show that the union set has measure zero it is sufficient to show that each set $\mathcal{S}_{B_1,\dots,B_r}$ has measure zero whenever $r\le N/2-1$. 

    To show that $\mathcal{S}_{B_1, \dots, B_r}$ has Lebesgue measure zero, define a smooth function $f\colon \R^{2r \cdot n} \to \R^{n \times n}$ as follows: for  $a_1, \ldots a_r, b_1, \ldots, b_r \in \R^{N}$ and $i,j \in [N]$, let
    \[f(a_1, b_1, \dots, a_r, b_r)[i,j] = \sum_{\ell\in[r]} 1_{B_{\ell}[i,j] = 1} \cdot a_{\ell}[i] \cdot b_{\ell}[j].\]
    The image of $f$ is exactly $\mathcal{S}_{B_1, \dots, B_r}$.
    \begin{fact}
    \label{fact:measure-zero-mapping}
        Suppose $f_1, \dots, f_n\colon \R^m \to \R$ are polynomials and $A \subseteq \R^m$ is a set with Lebesgue measure zero. Then $f(A) \coloneqq \{(f_1(x),\dots,f_n(x)) \mid x \in A\}$ also has Lebesgue measure zero.
    \end{fact}
    \begin{proof}
        Since polynomials are $C^1$-differentiable (in fact, infinitely differentiable) we apply the more general standard result (see e.g. \cite[Proposition~1.3]{GG73}) that states that any $C^1$-differentiable transformation maps a measure-zero set to a measure-zero set.
    \end{proof}
    
    Since $\R^{2r \cdot N}$ has measure zero when $2rN < N^2$ and $\mathcal{S}_{B_1,\dots,B_r} = f(\R^{2r \cdot N})$, by \cref{fact:measure-zero-mapping} $\mathcal{S}_{B_1, \dots, B_r}$ has measure zero for all $r < N/2$, and so does their finite union. 
\end{proof}

\section{Connection to matrix rigidity} \label{sec:rigidity}

\begin{definition}[Rigidity]
For a matrix $M \in \F^{N \times N}$ over a field $\F$, the \emph{rank-$r$ rigidity} $\cR^\F_M(r)$ is the minimum number of entries that must be modified to reduce the rank of $M$ to $r$. Formally,
\[
\cR^\F_M(r) = \min \{\, s \mid M = A +_\F C, \; \rank^\F(A) \leq r, \; \spar(C) \leq s \,\}.
\]
\end{definition}

We show that matrices with large spiky rank must also be highly rigid. Our main result is the following generalization of \cref{thm:spr_lower_bound_on_rigidity} to arbitrary fields.

\begin{theorem} \label{thm:spr-rigidity-fields}
    Let $M$ be a matrix over a field $\F$ and $0 < r \leq \spiky^{\F}(M)$. Then, 
    $$\cR^{\F}_M(r) \geq \frac{(\spiky^{\F}(M) - r)^2}{4}.$$
\end{theorem}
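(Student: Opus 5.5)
The plan is to combine subadditivity of spiky rank with two upper bounds: spiky rank is at most rank, and a sparse matrix has small spiky rank. Suppose $M = A +_\F C$ with $\rank^\F(A) \le r$ and $\spar(C) = s = \cR^\F_M(r)$. By subadditivity and the fact $\spiky^\F(A) \le \rank^\F(A)$, we get
\[
\spiky^\F(M) \le \spiky^\F(A) + \spiky^\F(C) \le r + \spiky^\F(C).
\]
It therefore suffices to prove a sparsity lemma: every matrix $C$ over $\F$ with $s$ nonzero entries satisfies $\spiky^\F(C) \le 2\sqrt{s}$. Plugging this in gives $\spiky^\F(M) - r \le 2\sqrt{s}$. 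The hypothesis $r \le \spiky^\F(M)$ makes the left side nonnegative, so we may square to obtain $s \ge (\spiky^\F(M)-r)^2/4$.

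The main step is the sparsity lemma. First note that any matrix with at most one nonzero entry in each row and each column is spiky. Its support consists of disjoint $1\times 1$ blocks, so it is blocky, and the entrywise product with a suitable rank-one matrix realises arbitrary values on those entries. For instance, with entries $c_{i,\pi(i)}$, take $u_i = c_{i,\pi(i)}$ on the supported rows, $v_j = 1$, and let the blocky mask be the partial permutation. So a matrix whose nonzero pattern is a matching has spiky rank at most $1$.

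Now split $C$ by heavy lines. Call a row \emph{heavy} if it contains more than $\sqrt{s}$ nonzeros, and define heavy columns the same way. There are fewer than $\sqrt{s}$ heavy rows and fewer than $\sqrt{s}$ heavy columns.

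Each heavy row is a rank-one matrix, as is each heavy column (after removing the entries already covered by heavy rows). Together these pieces contribute at most $2\sqrt{s}$ spiky matrices.

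Unfortunately this budget is already spent before the light part is handled. So instead we use a single threshold $t$ and balance. Let $R$ be the set of rows with more than $t$ nonzeros, so $|R| \le s/t$. Let $C_1$ be $C$ restricted to $R$, and $C_2$ the rest; every row of $C_2$ has at most $t$ nonzeros.

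Since $\rank(C_1) \le s/t$, we have $\spiky(C_1) \le s/t$.

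For $C_2$, apply the same split to columns. Columns of $C_2$ with more than $t$ nonzeros number at most $s/t$, and each gives a rank-one piece. What remains, $C_3$, has at most $t$ nonzeros in every row and every column. By König's edge-colouring theorem for bipartite graphs, the support of $C_3$ decomposes into at most $t$ matchings, so $\spiky(C_3) \le t$.

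In total $\spiky(C) \le 2s/t + t$. Choosing $t = \sqrt{2s}$ gives $2\sqrt{2s}$, which is slightly too large.

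To reach the constant $2$, handle heavy rows and heavy columns simultaneously. Take heavy rows $R$ and heavy columns $K$ (more than $t$ nonzeros each), so $|R|, |K| \le s/t$. The submatrix supported on $R\times[N] \cup [N]\times K$ has rank at most $|R|+|K|$. The bound $|R|+|K| \le 2s/t$ is still weak, so use a sharper count. The entries on $R \times [N]$ number at least $|R|t$, and those on $[N]\times K$ outside $R$ number at least $|K|t - |R||K|$. Writing $a=|R|$ and $b=|K|$, this gives $s \ge (a+b)t - ab$.

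Trying $t = \sqrt{s}$ yields $(a+b)\sqrt{s} - ab \le s$. This is consistent with $a+b$ as large as $2\sqrt{s}$, which is fine, but then adding $t$ for the light part overshoots again.

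I expect this constant-chasing to be the main obstacle. The fallback is to use the slightly weaker lemma $\spiky(C) \le c\sqrt{s}$, which already gives the theorem's form up to a constant. To get exactly $2\sqrt{s}$, I would instead do a greedy argument: repeatedly peel off either a maximum matching (cost $1$) or a full line (cost $1$, a rank-one matrix). The bookkeeping should show that $k$ steps remove at least about $k^2/4$ entries. The reason is that as long as no line has more than $k/2$ remaining nonzeros, König's theorem guarantees a matching of size at least (remaining entries)$/(k/2)$. This is exactly the bound the statement's constant $4$ suggests.
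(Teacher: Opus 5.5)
\textbf{There is a genuine gap: the sparsity lemma is never proved with the constant the statement needs.} Your reduction is exactly the paper's: subadditivity, $\spiky^{\F}(A)\le\rank^{\F}(A)\le r$, and squaring using $r\le\spiky^{\F}(M)$. But that reduction hinges on $\spiky^{\F}(C)\le 2\sqrt{s}$, and you do not establish it. The bounds you actually prove give $2\sqrt{2s}$, which yields only $\cR^{\F}_M(r)\ge(\spiky^{\F}(M)-r)^2/8$. That is not the stated theorem, so the ``fallback'' does not suffice. The closing greedy sketch is not a proof either, and its justification is wrong. A matching of size at least (remaining entries)$/(k/2)$ only shrinks the remainder by a factor $1-2/k$ per step, which does not force termination within $k/2$ steps. (Repeatedly removing \emph{maximum} matchings need not even lower the maximum degree.) What you need is K\"onig's edge-colouring theorem applied once: maximum degree at most $t$ gives a partition into at most $t$ matchings.

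\textbf{The missing idea is a joint count of the heavy lines.} In your own $C_1/C_2/C_3$ decomposition you bounded the heavy rows $R$ of $C$ and the heavy columns $K$ of $C_2$ separately by $s/t$ each. However, they are charged to disjoint sets of entries: $(t+1)|R|\le\spar(C_1)$ and $(t+1)|K|\le\spar(C_2)$, so $|R|+|K|\le s/(t+1)$. Take $t=\lfloor\sqrt{s}\rfloor$. This gives fewer than $\sqrt{s}$ rank-one lines, plus at most $t\le\sqrt{s}$ matchings for $C_3$ (whose rows and columns all have at most $t$ nonzeros). Hence $\spiky^{\F}(C)<2\sqrt{s}$, and the theorem follows.

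The paper avoids treating rows altogether, via a stronger observation you missed. A matrix with at most one nonzero per \emph{column} (rows may repeat) is already spiky, since its support is a union of blocks $\{i\}\times T_i$ with pairwise disjoint $T_i$. So the paper removes the fewer than $\sqrt{m}$ columns with more than $\sqrt{m}$ nonzeros, each of rank one. It then peels off one nonzero from every column, at most $\sqrt{m}$ times, and needs no matching theory at all.
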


The key ingredient is that sparse matrices have small spiky rank.
\begin{lemma}\label{lem:rigidity_vs_sparsity}
Let $M \in \F^{N \times N}$ be a matrix over an arbitrary field $\F$ with at most $m$ non-zero entries. Then,
$$\spiky^{\F}(M) \leq 2\sqrt{m}.$$ 
\end{lemma}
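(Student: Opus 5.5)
Split the support of $M$ by row and column weight. Fix the threshold $t=\sqrt{m}$. Call a row \emph{heavy} if it contains more than $t$ non-zero entries, and define heavy columns in the same way. Since there are only $m$ non-zero entries, at most $m/t=\sqrt{m}$ rows are heavy, and likewise at most $\sqrt{m}$ columns are heavy. Write
\[
M = M_R + M_C + M_L,
\]
where $M_R$ keeps the entries lying in heavy rows, $M_C$ keeps the entries lying in heavy columns but not in heavy rows, and $M_L$ keeps the remaining entries. Every row and every column of $M_L$ then has at most $t$ non-zero entries. By subadditivity it is enough to bound the spiky rank of each of the three parts.

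The heavy-row part $M_R$ is supported on at most $\sqrt{m}$ rows. Each such row, viewed as a matrix with that single row and zeros elsewhere, has rank one and is therefore spiky. This gives $\spiky^{\F}(M_R)\le \sqrt{m}$, and in the same way $\spiky^{\F}(M_C)\le\sqrt{m}$. This simple count is too lossy: it already uses up the whole budget of $2\sqrt{m}$ before $M_L$ is handled. So this part needs to be sharpened.

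The light part $M_L$ is handled with matchings. Its support is a bipartite graph (rows versus columns) of maximum degree at most $t$. By König's edge-colouring theorem, it decomposes into at most $t$ matchings. A matrix whose support is a matching is spiky: take the blocky matrix whose blocks are the $1\times1$ cells of the matching, and take the rank-one matrix $u v^T$ with $u_i v_j$ equal to the entry at each matched cell $(i,j)$. This choice of $u,v$ is possible because each row and each column occurs in at most one matched cell. Hence $\spiky^{\F}(M_L)\le t$.

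To land exactly on $2\sqrt{m}$, I would merge the heavy rows into the matching argument rather than paying for them separately. The plan is to split only by rows. Let $H$ be the set of heavy rows, with $|H|\le\sqrt m$, and cover them with $|H|$ rank-one matrices. The rest, $M'$, has every row of weight at most $\sqrt m$ but possibly heavy columns.

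For $M'$, use a spiky matrix whose blocks are $1\times k$ row segments. Such a matrix only needs distinct rows and distinct column sets in its different blocks, so a single spiky matrix can contain one full column of $M'$ in a block, provided the blocks are disjoint. Concretely, each column of $M'$ is a $k\times 1$ block. Several column blocks with disjoint row sets fit into one spiky matrix, and any entry in column $j$ and row $i$ can be realised through the rank-one pattern $u_i v_j$ restricted to that block.

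Thus I need to partition the entries of $M'$ into families of "column-star" blocks with pairwise disjoint row sets, using about $\sqrt m$ families. Equivalently, I colour the columns so that no two columns of the same colour share a row, with each entry handled inside its own column block. Since every row of $M'$ meets at most $\sqrt m$ columns, I would colour greedily over rows, or again use König's theorem on the bipartite graph with at most $\sqrt m$ edges per row. This is the step I expect to be the main obstacle. The conflict graph on columns can have large degree, so the fallback is to further split $M'$ by column weight, treating the at most $\sqrt m$ heavy columns as the $1\times k$ analogue and the light remainder through König's theorem, accepting a bound of $2\sqrt m$ up to this accounting. This matches the constant stated in the lemma if the heavy rows and heavy columns are absorbed into the matching colour classes. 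That absorption is the detail I would verify carefully.
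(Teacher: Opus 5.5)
\paragraph{Gap: the argument as written only gives $3\sqrt{m}$.}
Your argument proves $\spiky^{\F}(M)\le 3\sqrt{m}$ and no more. The three parts are: at most $\sqrt m$ heavy rows, at most $\sqrt m$ heavy columns, and at most $\sqrt m$ matchings from K\"onig's theorem. Re-tuning the threshold $t$ in that split does not help either, since it gives $2m/t+t\ge 2\sqrt{2m}$. You never carry out the step meant to reach $2\sqrt m$, and the version you formulate fails. You colour \emph{whole columns} of $M'$ so that no two columns of one colour share a row. Take the line--point incidence matrix of a projective plane of order $q$. Every row has weight $q+1<\sqrt m$, so $M'=M$, and $m\approx q^3$. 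But any two columns share a row, so you would need $q^2+q+1\approx m^{2/3}$ colours, far more than $\sqrt m$. The proposed ``absorption'' of heavy rows and columns into matching classes is also left unjustified. Matchings are simply too restrictive a building block here.

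\paragraph{The missing observation.}
A spiky matrix needs only one non-zero entry per row (or per column), not a matching. Moreover, its column blocks need not be whole columns. Suppose $P$ has at most one non-zero entry in each row, say in column $c(i)$ for row $i$. Then the blocks $\{i : c(i)=j\}\times\{j\}$ have pairwise disjoint row sets and pairwise disjoint column sets. So $P=B\circ(uv^T)$ with $u_i=P_{i,c(i)}$ and $v\equiv 1$. This is just your own ``column blocks with disjoint row sets'' remark, applied to pieces of columns rather than whole columns.

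So after removing the heavy rows, let $P_\ell$ consist of the $\ell$-th non-zero entry of each row of $M'$, in any fixed order. Each $P_\ell$ is spiky, and at most $\sqrt m$ of them are non-zero. The total is therefore at most $2\sqrt m$.

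This is exactly the paper's proof up to transposition. The paper first removes the fewer than $\sqrt m$ columns with more than $\sqrt m$ non-zeros, one rank-one matrix each. It then repeatedly subtracts a matrix containing one non-zero entry from each remaining column.
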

We defer the proof to \cref{sec:sparsity}. This immediately yields the following relationship between spiky rank and matrix rigidity.

\begin{proof}[Proof of \cref{thm:spr-rigidity-fields}]
    Assume $\cR^{\F}_M(r) = s$ for some $r \leq N$. Then $M$ can be written as $M = A +_{\F} C$, where $A \in \F^{N \times N}$ has  $\rank^{\F}(A)=r$ and $C \in \F^{N \times N}$ has at most $s$ non-zero entries. 
    By the subadditivity of spiky rank over $\F$, we have
    $$ \spiky^{\F}(M) = \spiky^{\F}(A+_{\F}C) \leq \spiky^{\F}(A) + \spiky^{\F}(C) \leq r + 2\sqrt{s},$$
    where the last inequality follows from \cref{lem:rigidity_vs_sparsity} and the trivial observation that $\spiky^{\F}(M) \leq \rank^{\F}(M)$ for all $M$.
    Rearranging gives $s \geq (\spiky(M) - r)^2/4$, which completes the proof.
\end{proof}

\begin{remark}
    The condition $0 <r \leq \spiky^{\F}(M)$ in~\cref{thm:spr-rigidity-fields} limits its applicability for proving intermediate rigidity bounds. For instance, for Valiant’s target rank $\epsilon N$, the theorem applies only to matrices already known to have spiky rank $\Omega(N)$. 
\end{remark}

\begin{remark}
\cref{lem:rigidity_vs_sparsity}, thus also \cref{thm:spr-rigidity-fields}, is tight for real matrices: a random real matrix has spiky rank $\Omega(N)$ by \cref{thm:linear-spiky-rank}, and thus sparsity $\Omega(N^2)$ and rigidity $\Omega(N^2)$, the maximum possible. For Boolean matrices, we can slightly improve the sparsity and spiky rank tradeoff (see \cref{thm:sparsity_vs_br_Boolean}).
\end{remark}

To discuss the implications of \cref{thm:spr-rigidity-fields} for communication complexity, 
we use the following result of Razborov \cite{razborovRigidity} and Wunderlich \cite{wunderlich2012theorem}.

\begin{theorem}[{\cite[Theorem 5.15]{wunderlich2012theorem}}]\label{thm:wunderlich}
    Let $\F$ be a finite field and let $M$ be a $N \times N$ matrix. If \[\cR_M^\F\left(2^{(\log\log N)^{\omega(1)}}\right) \geq \Omega\left(\frac{N^2}{2^{(\log\log N)^{O(1)}}}\right).\]

Then $M \notin \clsPH^{cc}$.
\end{theorem}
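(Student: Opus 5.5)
Since \cref{thm:wunderlich} is a statement about $\clsPH^{cc}$ and not about spiky rank, I would prove its contrapositive: every $M \in \clsPH^{cc}$ is \emph{non}-rigid. That is, for $M \in \clsPH^{cc}$ I want a decomposition $M = A + C$ over $\F$ with
\[
\rank^\F(A) \le 2^{(\log\log N)^{O(1)}} \quad\text{and}\quad \spar(C) \le \frac{N^2}{2^{(\log\log N)^{c}}}
\]
for every fixed constant $c$. This contradicts the hypothesis once the quantifiers are matched: the rigidity hypothesis must fail at some rank $2^{(\log\log N)^{O(1)}}$, which is in particular $2^{(\log\log N)^{\omega(1)}}$. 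I will keep track of which constants are fixed by $M$ and which are chosen freely.

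\textbf{Step 1: $\clsPH^{cc}$ as bounded-depth circuits over rectangles.} Write $n = \log N$. By Babai–Frankl–Simon, a matrix in $\Sigma_k^{cc}$ is computed by a depth-$k$ circuit with unbounded-fanin AND/OR gates. The circuit has size $s = 2^{\mathrm{polylog}(n)} = 2^{(\log\log N)^{O(1)}}$. Each input gate is the $0/1$ indicator of a combinatorial rectangle $R \subseteq [N]\times[N]$. Here $k$ and the polylog exponent depend only on $M$.

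\textbf{Step 2: Razborov–Smolensky approximation.} Let $p$ be the characteristic of $\F$. Replace each OR/AND gate by the standard probabilistic polynomial over $\F_p \subseteq \F$. This polynomial has degree $O(\ell(p-1))$ and error probability $2^{-\ell}$ on each fixed input. Composing over the $k$ levels gives a random polynomial $P$ in the $s$ rectangle indicators with degree $t = (O(\ell))^{k}$. By a union bound over the $s$ gates, $P$ agrees with the circuit on each fixed input $(x,y)$ with probability at least $1 - s2^{-\ell}$. Averaging over the $N^2$ entries, some fixed $P$ agrees with $M$ on all but $s2^{-\ell}N^2$ entries. I choose $\ell = (\log\log N)^{c'}$ with $c'$ large enough that $s2^{-\ell} \le 2^{-(\log\log N)^{c}}$; this is possible because $\log s = (\log\log N)^{O(1)}$.

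\textbf{Step 3: low rank of the approximant.} Let $A$ be the matrix of values of $P$. Each monomial of $P$ is a product of rectangle indicators. Such a product is again the indicator of a rectangle, so it is a rank-one $0/1$ matrix over any field. The number of monomials is at most $s^{t} = 2^{(\log\log N)^{O(1)}\cdot(\log\log N)^{O(c'k)}} = 2^{(\log\log N)^{O(1)}}$, so $\rank^\F(A)$ is bounded by this quantity. Setting $C = M - A$ then gives the desired decomposition.

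I expect the main obstacle to be Step 2. The delicate points are making the probabilistic polynomial work over the prescribed characteristic $p$ rather than just $\F_2$, and balancing $\ell$ against $s$ so that the error fraction is $1/2^{(\log\log N)^{c}}$ for an arbitrarily large $c$ while the rank stays $2^{(\log\log N)^{O(1)}}$. Here the constant in the $O(1)$ may grow with $c$, and this is consistent with the $\omega(1)$ versus $O(1)$ quantifiers in the statement.
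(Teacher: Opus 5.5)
Your argument is correct, and it is the standard direct proof of Razborov's theorem. The paper gives no proof of its own here: it imports the statement as a black box from \cite{wunderlich2012theorem}, which reproves and strengthens Razborov's unpublished result \cite{razborovRigidity} in a more general setting.

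Your route is self-contained and gives exactly the statement the paper uses. It combines four ingredients:
\begin{itemize}
\item the Babai--Frankl--Simon description of $\Sigma_k^{cc}$ by depth-$k$ circuits of size $s=2^{(\log\log N)^{O(1)}}$ over rectangles;
\item the Razborov--Smolensky probabilistic polynomial over $\F_p\subseteq\F$;
\item averaging, which fixes a polynomial that is wrong on at most $s2^{-\ell}N^2$ entries;
\item the fact that a product of rectangle indicators is again a rectangle indicator, so the approximant has $\F$-rank at most $(s+1)^{t}$ with $t=(\ell(p-1))^{k}$.
\end{itemize}
Citing the source buys only brevity and a stronger statement that the paper never needs. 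Your choice $\ell=(\log\log N)^{c'}$, with $c'$ larger than both $c$ and the exponent of $\log s$, is right. Letting the rank exponent depend on $c$ is compatible with the quantifiers.

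Three small points to tighten.

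First, the closing sentence of your opening paragraph is inaccurate. A rank of $2^{(\log\log N)^{O(1)}}$ is not ``in particular'' of the form $2^{(\log\log N)^{\omega(1)}}$. What you need is that $\cR^\F_M(r)$ is non-increasing in $r$. Then for any $r(N)=2^{(\log\log N)^{\omega(1)}}$ and all large $N$ you get $\cR^\F_M(r(N))\le \cR^\F_M(2^{(\log\log N)^{C}})\le N^2/2^{(\log\log N)^{c}}$. Taking $c$ larger than the exponent hidden in the hypothesis gives the contradiction. This is the same monotonicity the paper uses in the proof of \cref{cor:spiky_ph}.

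Second, in the Babai--Frankl--Simon normal form the innermost predicate is a fixed Boolean combination $g(\varphi(x,\vec u),\psi(y,\vec u))$. It need not itself be a rectangle; it can be, for instance, the complement of one. Any such leaf is a $0/1$ combination of the four rectangle indicators $\mathbb{1}[\varphi=a]\cdot\mathbb{1}[\psi=b]$. Expanding the monomials therefore multiplies the count by at most $4^{t}$, which is absorbed into $2^{(\log\log N)^{O(1)}}$.

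Third, the degree bound $t=(\ell(p-1))^{k}$ is of the form $(\log\log N)^{O(1)}$ only because $p$ is a constant. This holds since the finite field $\F$ is fixed.
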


    As an immediate consequence we obtain:

    \begin{corollary} \label{cor:spiky_ph}       
        Suppose a ${N \times N}$ Boolean matrix $M$  satisfies 
        $$\spiky^{\F}(M) \geq \frac{N}{ 2^{(\log\log N)^{O(1)}}},$$ where $\F$ is the finite field. Then $M \notin \clsPH^{\text{cc}}$. 
    \end{corollary}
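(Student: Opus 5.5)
The plan is to combine \cref{thm:spr-rigidity-fields} with Razborov--Wunderlich's \cref{thm:wunderlich}. Set $s := \spiky^{\F}(M)$, so by hypothesis $s \geq N/2^{(\log\log N)^{c}}$ for some constant $c$. Choose the target rank $r := 2^{(\log\log N)^{g(N)}}$, where $g(N) = \omega(1)$ grows slowly enough that $r \leq s/2$ still holds. For instance, taking $g(N) = \log\log\log N$ gives $r = 2^{(\log\log N)^{\log\log\log N}} = N^{o(1)}$. Since $N/2^{(\log\log N)^{c}} = N^{1-o(1)}$, the condition $r \le s/2$ holds for all sufficiently large $N$. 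This also ensures $0 < r \leq s$, which is what \cref{thm:spr-rigidity-fields} requires.

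Next apply \cref{thm:spr-rigidity-fields} over the finite field $\F$ at this rank $r$:
\[
\cR^{\F}_M(r) \;\geq\; \frac{(s-r)^2}{4} \;\geq\; \frac{s^2}{16} \;\geq\; \frac{N^2}{16\cdot 2^{2(\log\log N)^{c}}} \;=\; \frac{N^2}{2^{(\log\log N)^{O(1)}}}.
\]
Since $r = 2^{(\log\log N)^{\omega(1)}}$, this is exactly the rigidity hypothesis of \cref{thm:wunderlich}, and that theorem gives $M \notin \clsPH^{cc}$.

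The main point requiring care is the asymptotic bookkeeping. One must check that the $\omega(1)$ in the target rank can be chosen compatibly with $r \le s/2$; this works because any quasi-polylog-in-$\log N$ exponent is still $N^{o(1)}$. One must also handle the quantifier convention of \cref{thm:wunderlich}: if it demands the bound for every $\omega(1)$ function, note that rigidity is monotone decreasing in $r$, so using a smaller $g$ only helps, and we can take any admissible slowly growing $g$. Beyond this, the argument is a direct composition of the two theorems, noting that \cref{lem:rigidity_vs_sparsity} and hence \cref{thm:spr-rigidity-fields} hold over arbitrary fields, in particular finite ones.
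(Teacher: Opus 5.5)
Your argument is correct and is essentially the paper's: both compose \cref{thm:spr-rigidity-fields} with \cref{thm:wunderlich}. The paper applies the rigidity bound at rank $\spiky^{\F}(M)/2$ and then uses monotonicity of $\cR^{\F}_M$ in the rank. You instead apply it directly at the explicit target rank $2^{(\log\log N)^{\log\log\log N}} \le \spiky^{\F}(M)/2$, which makes explicit the growth condition on the $\omega(1)$ that the paper leaves implicit.

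One small point: your aside about a universal reading of the $\omega(1)$ in \cref{thm:wunderlich} should be dropped. Monotonicity cannot rescue that reading, since for fast-growing exponents the target rank exceeds $N$ and the rigidity is $0$. Under that reading the hypothesis would hold for no matrix, so the existential reading you actually use is the only meaningful one.
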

    \begin{proof}
    Applying \cref{thm:spr-rigidity-fields} with $r = \spiky^\F(M)/2$, we obtain
    \[
        \cR_M^\F\!\left(\tfrac{N}{2^{(\log\log N)^{O(1)}}}\right) 
        \;\;\geq\;\; \Omega\!\left(\frac{N^2}{2^{(\log\log N)^{O(1)}}}\right).
    \]
    Since 
    \[
        \cR_M^\F\!\left(2^{(\log\log N)^{\omega(1)}}\right) 
        \;\;\geq\;\; \cR_M^\F\!\left(\tfrac{N}{2^{(\log\log N)^{O(1)}}}\right),
    \]
    the claim follows directly from \cref{thm:wunderlich}.
\end{proof}

\section{Connection to circuit complexity} \label{sec:circuits}
\begin{definition}
A \emph{$\relu$ gate} (rectified linear unit) is a function $\ell\colon \{0,1\}^n \to \R^{+}$ such that there exist a vector $w \in \R^n$ and a scalar $a \in \R$ such that for all $x \in \{0,1\}^n$
$$\ell(x) = \max \{0, \langle w,x \rangle + a\}.$$
\end{definition}

Intuitively, a ReLU function is a \textit{weighted} linear threshold function. For our purposes it will be convenient  to view a ReLU gate as a matrix:
a $\relu$ gate is a matrix $L\colon \{0,1\}^n \times \{0,1\}^n \to \R^{+}$ such that there exist vectors $w_1, w_2 \in \R^n$ and a scalar $\alpha \in \R$ for which
$$L(x,y) = \max \{0, \langle w_1,x \rangle + \langle w_2,y \rangle + \alpha\}.$$

We now show that $\Sigma \circ \relu$ \-circuit complexity -- the number of $\relu$ gates -- of any function is upper bounded by its spiky rank up to an $O(n)$ factor. As a first step, we bound the spiky rank of a single ReLU gate.

\begin{claim}
    Let $L\colon \{0,1\}^n \times \{0,1\}^n\to \R^{+}$ be a $\relu$ gate. Then $$\spiky(L) \leq 3n + 3.$$
\end{claim}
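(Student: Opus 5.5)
The plan is to write $L$ as an entrywise product of a threshold pattern and an affine function, decompose the threshold pattern into few blocky matrices, and then absorb the affine factor into spiky matrices. Put $a(x) = \langle w_1, x\rangle + \alpha$ and $b(y) = \langle w_2, y\rangle$. Then
\[
L(x,y) = \bigl(a(x)+b(y)\bigr)\cdot \mathbb{1}[a(x) \ge -b(y)],
\]
so $L = T \circ (a\mathbf{1}^T + \mathbf{1} b^T)$, where $T(x,y)=\mathbb{1}[a(x)+b(y)\ge 0]$ is Boolean. If $T=\sum_{k} B_k$ with each $B_k$ blocky, then
\[
L = \sum_k \bigl(B_k\circ (a\mathbf{1}^T) + B_k \circ (\mathbf{1}b^T)\bigr).
\]
Each summand is a blocky matrix times a rank-one matrix, hence spiky, so $\spiky(L)\le 2\,\br_+(T)$. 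Here $\br_+(T)$ denotes the number of blocky matrices needed in a sum with all coefficients equal to $1$; such a decomposition is what we will construct.

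Next, reduce $T$ to a greater-than matrix. Let $v_0<v_1<\dots<v_{K-1}$ be the distinct values of $a(x)$ over $x\in\{0,1\}^n$, so $K\le 2^n$. Define the rank $\rho(x)\in\{0,\dots,K-1\}$ by $a(x)=v_{\rho(x)}$, and the threshold $t(y)\in\{0,\dots,K\}$ as the number of $v_i$ that are $< -b(y)$. Then $T(x,y)=\mathbb{1}[\rho(x)\ge t(y)]$. So $T$ is obtained from the $K\times(K+1)$ matrix $G[i,j]=\mathbb{1}[i\ge j]$ by duplicating rows and columns (and permuting). Duplication preserves blockiness: the block structure just inflates row and column index sets. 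Hence it suffices to write $G$ as a sum of few blocky matrices.

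For $G$, use the dyadic decomposition. Indices lie in $\{0,\dots,2^{m}-1\}$ with $m=n+1$. Write $\mathbb{1}[i\ge j]=\mathbb{1}[i=j]+\sum_{\ell=0}^{m-1} D_\ell$. Here $D_\ell$ is the sum over all dyadic intervals $I$ of length $2^{\ell+1}$ of the product (upper half of $I$ as rows) $\times$ (lower half of $I$ as columns). For every pair $i>j$ there is exactly one level $\ell$, namely the highest bit where $i$ and $j$ differ, so the sum is exact. Within a level, distinct intervals have disjoint row sets and disjoint column sets, so each $D_\ell$ is blocky. The diagonal $\mathbb{1}[i=j]$ is a partial identity, hence also blocky.

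This gives $G$ as a sum of $m+1=n+2$ blocky matrices, and therefore
\[
\spiky(L)\le 2(n+2) = 2n+4 \le 3n+3
\]
for $n\ge1$; the case $n=0$ is trivial, since then $L$ is a $1\times 1$ matrix. The only step needing care is the reduction to $G$: handling ties among the values of $a$, the threshold value $t(y)=K$ that forces the extra bit, and the check that duplicated or absent indices keep each piece blocky. I do not expect a real obstacle beyond this bookkeeping, and the slack between $2n+4$ and $3n+3$ absorbs any off-by-one in the number of levels.
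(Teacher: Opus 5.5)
Your proof is correct. It has the same skeleton as the paper's: split the threshold pattern of $L$ into about $n$ blocky matrices, then split the affine part into rank-one pieces. Both steps, however, are carried out differently.

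The paper cites Avraham--Yehudayoff for the fact that any threshold matrix on $\{0,1\}^n\times\{0,1\}^n$ is a disjoint sum of at most $n+1$ blocky matrices. It then restricts $L$ to each piece and argues that each piece splits into three spiky matrices, because its blocks are submatrices of the rank-$\le 3$ matrix $L'$.

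You differ in two ways.
\begin{itemize}
\item You prove the threshold decomposition yourself. Sorting the values of $a(x)$ turns $T$ into a row/column duplication of the greater-than matrix, and the dyadic decomposition handles that matrix, with each level a union of blocks on disjoint row and column sets.
\item You replace the block-by-block rank argument with the identity $L = T\circ(a\mathbf{1}^T+\mathbf{1}b^T)$ and distributivity of $\circ$. This makes the splitting immediate, and since $\alpha$ is absorbed into $a$, it needs two rank-one terms rather than three.
\end{itemize}
The result is a self-contained argument giving $2n+4$, which beats $3n+3$ for $n\ge 2$.

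One small remark. Your comment about slack absorbing off-by-one errors is not accurate at $n=1$, where $2n+4=3n+3$ exactly. Your count is exact, though, so nothing is lost. You can also improve it: columns with $t(y)=K$ are identically zero in $T$, so you can drop the index $K$ and take $m=n$. That gives $n+1$ blocky matrices, matching Avraham--Yehudayoff, and hence $\spiky(L)\le 2n+2$.
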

\begin{proof}
Let $w_1, w_2 \in \R^n$ and $\alpha \in \R$ be such that $L(x,y) = \max \{0, \langle w_1,x \rangle + \langle w_2,y \rangle + \alpha\}.$
Denote the underlying function $L'(x,y) = \langle w_1,x \rangle + \langle w_2,y \rangle + \alpha $. Observe that $L'$ is a sum of three rank-$1$ matrices $\langle w_1,x \rangle$, $\langle w_2,y \rangle$ and $\alpha$ (each depending only on $x$, $y$ or being constant), so $\rank(L') \leq 3$.

Define $T_{L}$ as the threshold function corresponding to $L$ as follows: 
$$
T_{L}(x,y) =
\begin{cases}
    1, \text{ if } L'(x,y) > 0\\ 
    0, \text{ otherwise.}
\end{cases}
$$  
 Avraham and Yehudayoff~\cite{DY24} show that any threshold function on $\{0,1\}^n \times \{0,1\}^n$ can be written as a disjoint union of at most $n+1$ blocky matrices:
$$T_{L}=B_1 + \ldots + B_{n+1},$$
where $B_1, \ldots, B_{n+1}$ are disjoint blocky matrices, i.e., their supports are disjoint rectangles in the matrix.  
We now define weighted matrices $R_1, \ldots, R_{n+1}$ corresponding to $B_i$'s by placing the original values of $L$ on the support of each $B_i$:
$$
R_i(x,y) =
\begin{cases}
L(x,y), \text{ if } B_i(x,y)=1\\
0, \text{ otherwise.}
\end{cases}
$$
 Now since $B_i$'s are disjoint, we have $L = R_1 +\ldots + R_{n+1}.$
The only problem now is that $R_i$ might not be a spiky matrix as it might have blocks of rank greater than 1. However, the blocks of $R_i$ are submatrices of $L'$, so each block of $R_i$ has rank at most $3$. If there is a block $R_i$ of rank greater than $1$, then we can decompose $R_i$ into at most three matrices that have rank-$1$ on that block.  If multiple blocks in the same $R_i$ have rank greater than 1, then we can still partition $R_i$ into at most 3 matrices that have rank-1 on all blocks, because the blocks are on disjoint rows and columns. This shows that each $R_i$ can be decomposed into at most three spiky matrices. Hence, $\spiky(M) \leq 3(n+1)$.
\end{proof}
\begin{proposition}\label{thm:spr_vs_relu}
For $M\colon \{0,1\}^n \times \{0,1\}^n \to \R$, let $s$ be the size of a $\Sigma \circ \relu$ \-circuit that computes $M$. Then 
$$\frac{\spiky(M)}{3(n+1)} \leq s.$$
\end{proposition}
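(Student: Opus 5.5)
The plan is to combine the preceding Claim with the subadditivity of spiky rank. Suppose $M$ is computed by a $\Sigma \circ \relu$ circuit of size $s$, so that
\[
M(x,y) = \sum_{i \in [s]} c_i L_i(x,y)
\]
for some real coefficients $c_i$ and $\relu$ gates $L_1,\dots,L_s$. Each $L_i$ is viewed as a matrix on $\{0,1\}^n \times \{0,1\}^n$ via the splitting of its input into $x$ and $y$, exactly as above. If the circuit also has an output bias (a constant term), it adds one rank-one matrix. Alternatively, the bias can be absorbed as a $\relu$ gate $\max\{0,\alpha\}$ with $\alpha>0$ and suitable sign of coefficient, so the bound is unaffected up to this convention.

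\textbf{Step 1: scalars.} Spiky rank is invariant under multiplication by a nonzero scalar. If $S = B \circ C$ with $B$ blocky and $C$ rank one, then $cS = B \circ (cC)$, and $cC$ is still rank one (or zero, in which case the term is dropped). Hence $\spiky(c_i L_i) \le \spiky(L_i)$.

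\textbf{Step 2: bound each gate.} By the Claim, $\spiky(L_i) \le 3n+3 = 3(n+1)$ for every $i$.

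\textbf{Step 3: sum.} By subadditivity (the Fact in the preliminaries),
\[
\spiky(M) \le \sum_{i\in[s]} \spiky(c_i L_i) \le 3(n+1)\,s .
\]
Rearranging gives $s \ge \spiky(M)/(3(n+1))$.

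There is no real obstacle here, since all the work is in the Claim (which relies on the Avraham--Yehudayoff decomposition of threshold matrices). The only point to check is the bookkeeping: the circuit's output is a linear combination of gates, and spiky rank is defined as a plain sum of spiky matrices rather than a linear combination. Step 1 resolves this, because scaling preserves spikiness.
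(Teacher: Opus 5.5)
Your proof is correct and follows the paper's argument: apply the per-gate Claim $\spiky(L_i)\le 3n+3$ and sum using subadditivity. Your Step 1 also covers negative output coefficients, which cannot be pushed inside a ReLU, by noting that a nonzero scalar multiple of a spiky matrix is still spiky; the paper writes $M=\sum_i L_i$ and leaves this point implicit.
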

\begin{proof}
   For the lower bound, let $M = \sum_{i=1}^s L_i$, where $L_i$ are ReLU functions. Then,
\[\spiky(M) \leq \sum_{i=1}^s \spiky(L_{i}) \leq s \cdot (3n+3).\qedhere\]
\end{proof}

\section{Explicit lower bounds for spiky rank} \label{sec:explicit bounds}
\subsection{Lower bound framework}
In this section we give a framework for proving spiky ranks lower bounds that applies to thin matrices such as $\HDone{n}$ and adjacency matrices of expander graphs.

\begin{theorem}\label{thm:sprlbframe}
    Let $M \in \R^{N\times N}$, $s, k, D$ be positive integers and $\gamma \in (0, 1)$. If 
    \begin{enumerate}[noitemsep,label=(P\arabic*)]
        \item For every $S \subseteq [N]$ and $T\subseteq [N]$ with $|S|\leq s$ and $|T|\leq s$ we have $\spar(M|_{S\times T}) \le k \cdot (|S| + |T|)$. \label{item:thin}
        \item For every $M' \le M$ (entrywise) such that $\spar(M - M') \le \gamma \cdot \spar(M)$, there exist sets $S \subseteq [N]$ and $T\subseteq[N]$ such that $|S| = |T| = D$ and $M'|_{S\times T}$ is a permutation matrix. \label{item:permutation} 
    \end{enumerate}
    then $$\spiky(M) \geq \Omega\left(\min\left(\frac{\gamma \cdot \spar(M)}{ k N}, \log\left( \frac{Ds}{2N} \right)\right)\right).$$
\end{theorem}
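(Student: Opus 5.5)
Suppose $M=\sum_{i=1}^r S_i$ with $S_i=B_i\circ C_i$ spiky, and assume $r$ is smaller than both quantities in the minimum (with suitable constants). We aim for a contradiction. The plan has two phases. First, use thinness \ref{item:thin} to show that small blocks carry few nonzeros of $M$. Then discard them and apply \ref{item:permutation} to find a $D\times D$ permutation submatrix of what remains. Finally, argue via the \cite[Lemma~12]{AvrahamYehudayoff2022}-type lemma that a permutation matrix cannot be covered by few large blocks.

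\emph{Step 1 (removing small blocks).} Call a block $S\times T$ of some $B_i$ \emph{small} if $|S|\le s$ and $|T|\le s$, and \emph{large} otherwise. The small blocks of a fixed $B_i$ have disjoint row sets and disjoint column sets. By \ref{item:thin}, the number of nonzeros of $M$ inside small blocks of $B_i$ is at most $\sum k(|S|+|T|)\le 2kN$. Summing over $i$, the small blocks cover at most $2kNr$ nonzero entries of $M$, and this is at most $\gamma\,\spar(M)$ when $r\le \gamma\spar(M)/(2kN)$. Let $M'$ be $M$ with these entries zeroed. Here I interpret $M'\le M$ as support containment, which is presumably intended since $M$ is nonnegative in the applications. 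By \ref{item:permutation} we get $S_0,T_0$ of size $D$ with $M'|_{S_0\times T_0}$ a permutation matrix $P$.

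\emph{Step 2 (structure on the permutation).} Every nonzero of $M'$ is now covered only by large blocks of the $B_i$; we may also restrict attention to entries that lie in some block at all. Restrict everything to $S_0\times T_0$. The blocks there may become small, but each originated from a block with at least $s$ rows or at least $s$ columns in $[N]$. The goal is to show that $r$ is at least roughly $\log(Ds/N)$.

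The approach I would take is a probabilistic restriction. Pick a random subset of rows and columns so that each large block survives with only a bounded number of rows or columns. Then on each diagonal entry of $P$, the spiky sum restricted to $S_0\times T_0$ has the form $\sum_i \mathbb 1[\text{entry in } B_i]\,a_i(x)b_i(y)$, which must be nonzero on the permutation and zero off it. The zero-off-diagonal constraints, combined with rank-one structure, should force a combinatorial ``distinguishing'' condition. Each off-diagonal pair $(x,\pi(y))$ that lies in exactly the same set of blocks as $(x,\pi(x))$ and $(y,\pi(y))$ leads to a cancellation constraint. The Avraham–Yehudayoff lemma then says that $D$ elements of a permutation need $\Omega(\log D')$ blocky patterns to be separated, where $D'\approx Ds/N$ measures how many permutation entries can share large blocks.

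\emph{Main obstacle.} The hard step is the Step~2 analysis. A block with many rows in $[N]$ but few inside $S_0$ must still produce a constraint, and a block with $\ge s$ columns only intersects $T_0$ in about $sD/N$ columns on average. That averaging is presumably where $Ds/(2N)$ comes from. I would first try a counting or averaging argument to find a sub-permutation of size about $Ds/(2N)$ on which every relevant block is genuinely large. Then I would use rank-one cancellations to show that any two permutation entries with identical membership vectors in $\{0,1\}^r$ force a nonzero off-diagonal entry. With $r$ patterns there are at most $2^r$ classes, each class of size at most about 1 after the restriction, so $2^{O(r)}\ge Ds/(2N)$, giving $r=\Omega(\log(Ds/2N))$.
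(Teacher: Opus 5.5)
\textbf{There is a genuine gap in Step 2.} Your Step 1 is correct and matches the paper's proof. The small blocks cover at most $2kNr \le \gamma\,\spar(M)$ nonzeros, so (P2) applied to $M'$ gives a $D\times D$ permutation submatrix none of whose permutation entries lies in a small block. Step 2, however, does not work as sketched, for three reasons.
\begin{itemize}
\item On $S_0\times T_0$ the actual sum $M=\sum_i S_i$ need not be a permutation matrix. The entries you zeroed to form $M'$ can still be nonzero there. So the claim that the spiky sum is ``nonzero on the permutation and zero off it'' is false until the small blocks are removed from the picture.
\item Your key claim is false: that two permutation entries with the same membership vector force a nonzero off-diagonal entry. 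Lying in the support of $B_i$ does not put two entries in the same block of $B_i$. Even if they share a block of $B_i$ for every $i$ in some set $P$, the $2\times 2$ submatrix on their rows and columns is $\sum_{i\in P} a_i b_i^{T}$. This can equal the identity as soon as $|P|\ge 2$. Classifying entries by \emph{which} block of each $B_i$ contains them only caps each class at $r$ entries. The number of classes can then be as large as $(2N/s+1)^r$, which gives nothing.
\item The quantity $Ds/(2N)$ does not come from averaging large blocks against $T_0$. You are also using the Avraham--Yehudayoff lemma on the wrong part of the decomposition.
\end{itemize}

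\textbf{The missing idea is a rank bound on the large part, with \cref{lem:non-intersecting-subset} used only on the small part.}
\begin{itemize}
\item Write $S_i = X_i + H_i$, where $X_i$ collects the small blocks and $H_i$ the large ones.
\item The blocks of one spiky matrix have disjoint row sets and disjoint column sets. Hence each $S_i$ has at most $2N/s$ large blocks, each of rank one. So $\rk\bigl(\sum_i H_i\bigr)\le 2Nr/s$.
\item The block pattern of each $X_i$ avoids the permutation entries, by the construction of $M'$. Apply \cref{lem:non-intersecting-subset} once for each $i$, losing a factor $4$ each time. This gives a sub-permutation $A\times B$ with $|A|=|B|=D/4^r$ on which no small block appears at all.
\item On $A\times B$, $M$ therefore agrees with $M'$ and is a (generalized) permutation matrix of rank $D/4^r$. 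It also equals $\bigl(\sum_i H_i\bigr)|_{A\times B}$.
\item Hence $D/4^r \le 2Nr/s$, i.e.\ $Ds/(2N)\le r4^r\le 8^r$, which gives $r=\Omega(\log(Ds/2N))$.
\end{itemize}
No random restriction or cancellation analysis is needed. The large blocks are handled purely by counting them and using that each contributes rank one.
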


We need the following lemma for the proof.
The proof of it is identical to the proof of  \cite[Lemma~12]{DY24}, but we include it here for completeness. 
\begin{lemma}\label{lem:non-intersecting-subset}
    Let $I_N$ be identity matrix of size $N\times N$. If there is a blocky (or spiky) matrix $B$, such that $B \land I_N = 0$ (i.e. there is no entry where both $I_N$ and $B$ are nonzero). Then there exists $V \subset [N]$, s.t. $|V| \geq N/4$ and $B|_{V\times V} = 0$.
\end{lemma}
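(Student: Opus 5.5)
The plan is to reduce the spiky case to a clean combinatorial statement about disjoint rectangles, and then find $V$ by a random two-colouring of the blocks followed by an averaging argument. The blocky case is the special case of a spiky matrix whose rank-one factor is all ones, so it suffices to treat spiky $B$.

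\emph{Step 1: describe the support exactly.} Write $B = B_0 \circ (u v^T)$, where $B_0$ is blocky with disjoint row sets $S_1,\dots,S_k$ and disjoint column sets $T_1,\dots,T_k$. Set $S_i' = \{j \in S_i : u_j \neq 0\}$ and $T_i' = \{j \in T_i : v_j \neq 0\}$. Then the support of $B$ is exactly $\bigcup_i S_i' \times T_i'$, and the sets $S_i'$ (resp.\ $T_i'$) are still pairwise disjoint. The hypothesis $B \land I_N = 0$ says that no $(j,j)$ lies in the support, i.e.\ $S_i' \cap T_i' = \emptyset$ for every $i$. Consequently every index $j \in [N]$ lies in at most one $S_i'$, its ``row block'', and in at most one $T_{i'}'$, its ``column block''. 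Moreover, whenever both blocks exist we have $i \neq i'$.

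\emph{Step 2: random labelling.} Independently for each block $i \in [k]$, choose a label $c_i \in \{\mathrm{R},\mathrm{C}\}$ uniformly at random. Put $j$ into $V$ iff both of the following hold:
\begin{itemize}
\item either $j$ has no row block, or its row block $i$ has $c_i = \mathrm{R}$;
\item either $j$ has no column block, or its column block $i'$ has $c_{i'} = \mathrm{C}$.
\end{itemize}

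\emph{Correctness.} Suppose $j, j' \in V$ and $B[j,j'] \neq 0$. Then $j \in S_i'$ and $j' \in T_i'$ for the same $i$. Since $j \in V$ with row block $i$, we get $c_i = \mathrm{R}$; since $j' \in V$ with column block $i$, we get $c_i = \mathrm{C}$. This is a contradiction, so $B|_{V \times V} = 0$.

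\emph{Size.} Each $j$ is constrained by at most two labels, and these belong to distinct blocks $i \neq i'$, so they are independent. Hence $\Pr[j \in V] \ge 1/4$, so $\mathbb{E}|V| \ge N/4$, and some labelling yields $|V| \ge N/4$.

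The only delicate point is Step 1 in the spiky case. The underlying blocky pattern $B_0$ may well meet the diagonal; it is the zero coordinates of $u$ or $v$ that kill those entries. Passing to the trimmed sets $S_i', T_i'$ is exactly what restores the disjointness $S_i' \cap T_i' = \emptyset$ that the colouring argument needs.
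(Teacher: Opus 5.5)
Your proof is correct and is essentially the paper's argument: your random $\mathrm{R}/\mathrm{C}$ labelling of blocks is the paper's choice of i.i.d.\ bits $\xi_a$. In both, $V = S \cap T$, where $S$ deletes the rows of blocks with $\xi_a = 0$ and $T$ deletes the columns of blocks with $\xi_a = 1$, and both finish with the same probability-at-least-$1/4$ averaging. Your trimming to the support sets $S_i', T_i'$ in the spiky case is a worthwhile addition: the paper runs the blocky argument verbatim, which implicitly requires that no index lies in both the row set and the column set of the same block, and your step ensures exactly that.
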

\begin{proof}
    Let blocks of $B$ be $A_1 \times B_1, A_2\times B_2, \ldots, A_k \times B_k$. Define two random subsets of $[N]$ as follows. Let $\xi_1, \xi_2, \ldots, \xi_k$ be i.i.d. uniformly distributed on $\{0, 1\}$. 
    
    Let $S$ be the complement of $\bigcup_{a: \xi_a = 0} S_a$, and $T$ be the complement of $\bigcup_{a: \xi_a = 1} T_a$. Let $I = \{i \in [N]\mid (i, i) \in S\times T\}$. Note that $B|_{S\times T}$ is zero matrix. For each $i \in [N]$, the probability that $(i, i) \in S \times T$ is $1/4$, because $B_{ii} = 0$. Thus there is a choice for $S \times T$ so that $|I|\geq N/4$. For such $S$ and $T$, we define $V = S \cap T$.
\end{proof}

\begin{proof}[Proof of \cref{thm:sprlbframe}]
    \begin{figure}[h!]
    \centering
    \input{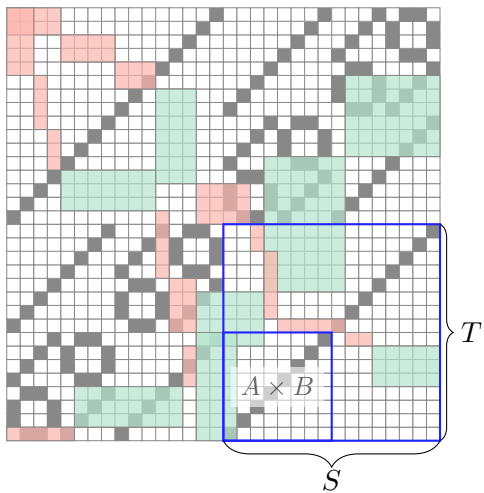}
    \caption{The picture illustrates the proof for the matrix $\HDone{5}$. The parts $T_i$ are colored red, $H_i$ are colored green. By \ref{item:thin} not too many $1$-cells (black ones) are covered by the small blocks, so by \ref{item:permutation} we find a permutation submatrix $S \times T$ with very few covered $1$-cells, and then further shrink it using \cref{lem:non-intersecting-subset} to $A \times B$ that does not contain any red cells at all. Then the rank of $\HDone{5}|_{A \times B}$ is maximal, but the total rank of large blocks is small, which is a contradiction.}
    \label{fig:hdone}
\end{figure}
    
    Let $M = \sum_{i\in[t]} S_i$. We call a block small, if the sum of its dimensions is at most $s$, and big otherwise. Assume the contrary and let $t \leq \frac{\gamma \cdot \spar(M)}{2 kN} $ and $t \leq \frac{1}{4}\log_2\left(\frac{Ds}{2N}\right)$.

    We represent $S_i = T_i + H_i$ where $T_i$ is comprised of the small blocks of the matrix $T_i$ and $H_i$ of the big ones (i.e. the sets of non-zero columns of $T_i$ and $H_i$ are disjoint, same for the non-zero rows).

    By \ref{item:thin}, each $T_i$ covers (blocky matrix \emph{covers} matrix entry, if this entry is non-zero in this matrix) at most $2kN$ $1$-entries of $M$, because the sum of block sizes in a spiky matrix does not exceed $2N$. This implies that all small matrices cover at most $2N tk\leq \gamma \cdot \spar(M)$ ones.

    By \ref{item:permutation} we get a permutation submatrix $S\times T$ and applying \cref{lem:non-intersecting-subset} to it $t$ times, we obtain a permutation
    submatrix $A \times B$ of $M$ with $|A| = |B| = D/2^{2t}$, s.t. for each $i$ the intersection of $T_i$ and $A\times B$ is empty. Thus, we obtain

    \begin{equation}
        \label{eq:rk-lowerbound}
    \rk \Big(M - \sum_{i \in [t]} T_i\Big)\geq \frac{D}{2^{2t}}
    \end{equation}

    On the other hand, each $H_i$ contains at most $2N/{s}$ blocks, that means 
    \begin{equation}
    \label{eq:rk-upperbound}
    \rk\Big(\sum_{i \in [t]} H_i\Big)\leq \sum_{i \in [t]} \rk(H_i) \le t \cdot\frac{2N}{s}.
    \end{equation}
    
    Combining \eqref{eq:rk-lowerbound} and \eqref{eq:rk-upperbound}, we get
    $D \cdot 2^{-2t}\leq 2Nt /{s},$
    which implies
    $\tfrac{Ds}{2N}\leq 2^{3t},$
    which contradicts our assumption.
\end{proof}

\subsection{Hamming-Distance-$1$}\label{sec:hd_bound}

$\HDone{n}$ is the function which determines whether the Hamming distance between the two input strings is 1. Its matrix is the adjacency matrix of the $n$-dimensional Hamming cube. 

\[ \HDone{n}(x, y) = \begin{cases} 1 & \mbox{if } \|x \oplus y\|_1 = 1 \\ 0 & \mbox{otherwise} \end{cases} \]

\label{sec:framework-applications}
\lbforhd*
We will make use of the following lemma.
\begin{lemma}[Lemma~3.10 in \cite{KSP20}]
\label{lem:haussler}
    If $G = (V,E)$ is a subgraph of a Hamming cube, then \[|V| \ge 2^{|E| / |V|}.\]
\end{lemma}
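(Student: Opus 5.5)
We prove the equivalent edge-isoperimetric form: every subgraph $G=(V,E)$ of the $n$-dimensional Hamming cube satisfies
\[
|E| \le \tfrac12 |V|\log_2|V| .
\]
Since $|E|/|V|\le \tfrac12\log_2|V|\le \log_2|V|$, this gives $2^{|E|/|V|}\le |V|$. It suffices to treat induced subgraphs, since adding edges only increases $|E|$. The argument is induction on the dimension $n$, for all vertex sets $V\subseteq\{0,1\}^n$ at once.

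\textbf{Base case and splitting.} For $n=0$ we have $|V|\le1$ and $|E|=0$, so the bound holds. For the inductive step, split $V=V_0\cup V_1$ according to the last coordinate. The edges of $G$ fall into three classes:
\begin{itemize}
\item edges inside $V_0$, which form a subgraph of an $(n-1)$-cube;
\item edges inside $V_1$, likewise a subgraph of an $(n-1)$-cube;
\item cross edges, which join $x0$ to $x1$.
\end{itemize}
The cross edges form a matching, so there are at most $\min(|V_0|,|V_1|)$ of them. Write $a=|V_0|$ and $b=|V_1|$, and assume without loss of generality $a\le b$. Applying the induction hypothesis to both halves gives
\[
|E|\le \tfrac12 a\log_2 a+\tfrac12 b\log_2 b + a .
\]
So it remains to prove the numerical inequality
\[
(a+b)\log_2(a+b)\ \ge\ a\log_2 a + b\log_2 b + 2a \qquad\text{for } 0\le a\le b .
\]

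\textbf{The numerical inequality.} This is the only nontrivial step. If $a=0$ it is trivial, so assume $a>0$. Put $p=a/(a+b)\in(0,1/2]$ and divide by $a+b$. The left side minus the first two terms on the right is $(a+b)\,H(p)$, where $H(p)=-p\log_2 p-(1-p)\log_2(1-p)$ is the binary entropy. The inequality therefore becomes
\[
H(p)\ge 2p \qquad\text{on } [0,1/2].
\]
This holds because $H$ is concave, so on $[0,1/2]$ it lies above the chord joining its values at the endpoints, namely $H(0)=0$ and $H(1/2)=1$. That chord is exactly the line $2p$.

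Combining this with the split completes the induction, and hence proves the lemma.
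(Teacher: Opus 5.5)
Your proof is correct and complete. The paper gives no proof of this lemma; it simply imports it from \cite{KSP20}, so there is no in-paper argument to compare routes with. What you wrote is the standard induction-on-dimension proof of the classical edge-isoperimetric inequality $|E|\le\tfrac12|V|\log_2|V|$. Reducing the numerical step to $H(p)\ge 2p$ on $[0,1/2]$, via concavity of the binary entropy, is a clean way to finish.

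Your bound actually gives $|V|\ge 2^{2|E|/|V|}$, with the average degree in the exponent, which is a factor of $2$ stronger than the stated lemma. In the paper's application to \ref{item:thin}, where $\spar(M|_{S\times T})\le 2|E(G[S\cup T])|$ counts ordered pairs, this only improves the constant $k$.

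If you want to avoid the numerical inequality altogether, there is a one-line entropy proof. Take $X$ uniform on $V$. Each fibre of $X_{-i}$ has size $1$ or $2$, so $\sum_i H(X_i\mid X_{-i}) = 2|E|/|V|$ for the induced subgraph. On the other hand, the chain rule together with the fact that conditioning does not increase entropy gives $\sum_i H(X_i\mid X_{-i})\le H(X)=\log_2|V|$.

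The statement tacitly assumes $V\neq\emptyset$; your final step $|E|/|V|\le\log_2|V|$ uses the same assumption.
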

\begin{proof}[Proof of \cref{thm:lower-bound-hd1}]
    We use \cref{thm:sprlbframe} with $N = 2^n$, $M = \HDone{n}$, $s = 2^{\sqrt{n}}$, $k = 2\sqrt{N}$, $\gamma = 0.5$, $D = N/4$. We now prove two conditions of the theorem. \ref{item:thin} directly from \cref{lem:haussler} applied to $S \cup T$. 
    
    We proceed to prove \ref{item:permutation}. The $1$-entries of $\HDone{n}$ are partitioned into the following $2n$ subsets:
    \[K_{2i-j}=\{(x, y) \mid \HDone{n}(x,y)=1, x_i = j, y_i = 1-j\} \text{ for $i \in [n];$ $j \in \{0,1\}$}.\]
    Thus, each $1$-entry in $\HDone{n}$ belongs to exactly one of these subsets, and each subset has size $2^{n-1}$. The subsets have the following property: if $(x, y)$ and $(v, w)$ are distinct elements of $K_t$, then $\|x- w\|_1 > 1$ and $\|v-y\|_1 > 1$, because we have $x_i = v_i \neq y_i = w_i$ for some $i$. Thus, we have at least half of the $1$-entries present in $M'$, which means one of the groups has at least $\frac{1}{2}2^{n-1}$ elements in the intersection with $M'$. Without loss of generality, we assume that it is the subset $K_1$. Then we set $$S \subseteq \{1\}\times\{0,1\}^{n-1},~~~T\subseteq\{0\}\times\{0,1\}^{n-1},$$
    such that $S \times T$ covers exactly $2^{n-2}$ $1$-entries of $K_1$, which are present in $M'$, and they form a permutation submatrix.
    So, the desired $\Omega(\sqrt{n})$ bound is immediately implied by the theorem.
\end{proof}

\subsection{Expanders}\label{sec:expanders_bound}
 For a graph \(G\), we denote by \(M_G \in \{0,1\}^{N\times N}\) its adjacency matrix, i.e.,
\[
(M_G)_{u,v} \;=\;
\begin{cases}
1 & \text{if } (u,v)\in E,\\
0 & \text{otherwise}.
\end{cases}
\]
 We recall that an \((N,d,\lambda)\)-spectral expander is a \(d\)-regular graph \(G=(V,E)\) on \(N\) vertices such that the second largest eigenvalue of its adjacency matrix \(M_G\) is at most  \(\lambda\) in absolute value (see, for example, \cite{hoory06}).

\lbforexp*
\begin{proof}
    We use \cref{thm:sprlbframe} with $M = M_G$, $s = \tfrac{N\lambda}{10d}$, $k = \lambda$, $D = \tfrac{N}{4d}$, $\gamma = 1/2$. We check the theorem conditions.
    \ref{item:thin} holds by the expander mixing lemma 
        \[e(S, T) \leq \frac{d|S||T|}{N} + \lambda \sqrt{|S||T|} \leq \frac{ds}{N}\sqrt{|S||T|} + \lambda\sqrt{|S||T|} \leq 2\lambda\sqrt{|S||T|}\leq \lambda (|S| + |T|).\]
    Let us prove \ref{item:permutation}. We pick edges greedily. At each step, when edge $e = \{u, v\}$ is picked, we throw away all vertices that are connected either to $u$ or $v$. At each step, at most $2d - 1$ vertices are used, so there will be at least $\tfrac{\gamma N}{2d} \geq \tfrac{N}{4d}$ steps and we get at least $D\times D$ permutation submatrix.

    Applying the theorem, we get the desired bound.
\end{proof}
As an implication, we get an $\Omega(\log N)$ lower bound for $(N, \Theta(\log^2 N), \Theta(\log N))$-expanders:
\begin{corollary}
\label{cor:log-lb-for-expanders}
    There exists an explicit graph $G$ with $N$ nodes and degree $\Theta(\log^2 N)$ such that \[\spiky(M_G) = \Omega(\log N).\]
\end{corollary}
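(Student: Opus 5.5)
The plan is to instantiate \cref{thm:lower-bound-exp} with an explicit family of spectral expanders whose parameters make both terms in the minimum of order $\log N$. Specifically, we want $d = \Theta(\log^2 N)$ and $\lambda = \Theta(\sqrt{d}) = \Theta(\log N)$. Then $d/\lambda = \Theta(\log N)$ and $\log(N/d^2) = \log N - O(\log\log N) = \Theta(\log N)$, so \cref{thm:lower-bound-exp} gives $\spiky(M_G) \ge \Omega(\log N)$ immediately.

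The substantive step is to exhibit, for infinitely many $N$ (or for all large $N$ up to constant factors), an explicit $d$-regular graph on $N$ vertices with $d = \Theta(\log^2 N)$ and second eigenvalue $O(\sqrt d)$.

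\begin{itemize}
\item My first choice is the Ramanujan graphs of Lubotzky--Phillips--Sarnak or Margulis. These are $(p+1)$-regular with $\lambda \le 2\sqrt{p}$ for any prime $p \equiv 1 \pmod 4$, and they exist on $N = \Theta(q^3)$ vertices for primes $q \equiv 1 \pmod 4$. Choosing $p$ prime of order $\log^2 N$ (possible by prime number theorem in arithmetic progressions) gives the desired parameters.
\item An alternative that avoids number-theoretic existence issues is to start from any explicit constant-degree expander family, for example Margulis--Gabber--Galil. Taking its $t$-th power, or a suitable tensor or zig-zag construction, produces degree $d_0^t$ with $\lambda \le \lambda_0^t$. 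Since only $d/\lambda = \Theta(\log N)$ is needed rather than the Ramanujan bound, we can instead aim for any explicit graph with $d = \Theta(\log^2 N)$ and $\lambda \le d/\log N$. This weaker requirement suffices because the minimum in \cref{thm:lower-bound-exp} is then $\Omega(\log N)$.
\end{itemize}

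It remains to check that the hypotheses of \cref{thm:lower-bound-exp} used in its proof are satisfied. In particular, the proof needs $s = N\lambda/(10d) \ge 1$, which holds since $d \ll N$. Plugging in then gives $\Omega(\min(\Theta(\log N), \Theta(\log N))) = \Omega(\log N)$.

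I expect the main obstacle to be the explicitness and degree-matching bookkeeping: making sure that for the given $N$, a graph with exactly the required degree range and $\lambda = O(\log N)$ exists explicitly. If LPS parameters are awkward, I would fall back on the graph-power argument. Powering multiplies both $d$ and $\lambda$ multiplicatively, so choosing $t = \Theta(\log\log N)$ and a base expander with $\lambda_0 \le d_0^{1/2+o(1)}$ gives $d/\lambda \ge d^{1/2-o(1)}$. The $o(1)$ loss may need adjusting $d$ slightly upward, for instance to $\log^{2+\epsilon} N$, but $\log(N/d^2)$ remains $\Theta(\log N)$, which suffices for an $\Omega(\log N)$ bound.
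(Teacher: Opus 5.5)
Your main route is essentially the paper's proof: plug explicit $(p+1)$-regular Ramanujan graphs ($\lambda\le 2\sqrt p$) with $p=\Theta(\log^2 N)$ into \cref{thm:lower-bound-exp}. The paper takes these graphs from Alon's family (fixed $p$, $N_\ell=p^{\Theta(\ell)}$ vertices, $\ell=\sqrt p/\log p$), whereas you use LPS with $p,q$ chosen separately; in that case pick $q$ so that $p$ is a quadratic residue mod $q$, which makes the graph non-bipartite so the mixing lemma in that proof applies verbatim. Your graph-powering fallback is unnecessary and, as you suspect, would not give degree exactly $\Theta(\log^2 N)$ (also, requiring $\lambda\le d/\log N$ at $d=\Theta(\log^2 N)$ is the Ramanujan bound up to constants, not a weaker requirement), so it can be dropped.
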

\begin{proof}
    It is shown in \cite[Section~2.3]{Alon21} that for every prime $p$ with $p \bmod 4 = 1$ there exist two primes $q_1, q_2 = p^{\Theta(1)}$ such that for every $\ell \ge 1$ there exists a degree $p+1$ Ramanujan graph $G_\ell$ with $\Theta((q_1 q_2)^{3\ell})$ nodes, i.e. $G_\ell$ is a $(\Theta((q_1 q_2)^{3\ell}), p+1, 2\sqrt{p})$-spectral expander. Then the number of nodes in $G_\ell$ is $N_\ell = p^{\Theta(\ell)}$, so taking $\ell = \sqrt{p} / \log p$ we get that $N_\ell = 2^{\Theta(\sqrt{p})}$, so $p = \Theta(\log^2 N_\ell)$. Finally, we apply \cref{thm:lower-bound-exp} to get the desired result.
\end{proof}

We get a stronger corollary for \emph{ultra-lossless vertex expanders}. These are known to exist, but there are currently no explicit constructions. 
\begin{theorem}
\label{thm:ultralossless-lb}
    Let $G$ be a $d$-regular graph with $N$ vertices and vertex expansion $d - O(1)$ for at most $N^\delta$ vertex subsets, where $\delta \in (0, 1)$ (i.e. for any $S \subseteq V(G)$, it has at least $(d - O(1))|S|$ neighbors).
    Let $M_G$ be its adjacency matrix.
    Then $\spiky(M_G) \geq \Omega(\min(d, \log N)).$ 
\end{theorem}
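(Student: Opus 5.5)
The plan is to apply \cref{thm:sprlbframe} to $M = M_G$, in the same way as \cref{thm:lower-bound-exp}, but now using vertex expansion instead of the expander mixing lemma to establish thinness \ref{item:thin}. We read the hypothesis as: every $S \subseteq V(G)$ with $|S| \le N^\delta$ has at least $(d-c)|S|$ neighbours, for an absolute constant $c$. We take $\gamma = 1/2$, $s = N^{\delta}/2$, $k = O(1)$, and $D = \Theta(N/d)$. Then the first term of the minimum is $\gamma \cdot Nd/(kN) = \Omega(d)$. The second term is $\log(Ds/2N) = \log(\Theta(N^\delta/d))$. If $d \le N^{\delta/2}$, this is $\Omega(\delta \log N) = \Omega(\log N)$ for constant $\delta$. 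If $d > N^{\delta/2}$, then $\log N = O(d)$ and we may instead pass to $\min(d,\log N) = \Theta(\log N)$. In either case the claimed bound $\Omega(\min(d,\log N))$ follows, provided $d$ is small enough that the logarithm is positive. For the degenerate range $d \ge N^{\delta}$ we only need some adjustment of $s$, and we expect this to be a minor case.

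The main step is \ref{item:thin}: for $|S|,|T| \le s$, show $e(S,T) \le k(|S|+|T|)$ with $k = O(1)$. The standard lossless-expansion argument goes as follows. Let $U = S \cup T$, so $|U| \le N^\delta$. Each vertex of $U$ has $d$ edge-endpoints, and $|N(U)| \ge (d-c)|U|$. Hence at most $c|U|$ of the $d|U|$ edges leaving $U$ land on a vertex already hit by another edge from $U$. This gives the unique-neighbour count: the number of vertices with exactly one neighbour in $U$ is at least $(d-2c)|U|$. An edge inside $U$ (between $S$ and $T$, counted with both endpoints in $U$) has its endpoint $v \in U$ reached by $u \in U$. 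Since $v$ itself has $d$ neighbours, we want to bound how many $v \in U$ have many $U$-neighbours. Counting directly, $d|U| = \sum_{v} \deg_U(v) \ge |N(U)| + \sum_v (\deg_U(v)-1)^+$. So $\sum_v (\deg_U(v)-1)^+ \le c|U|$, and restricting to $v \in U$ gives $2e(U) \le |U| + c|U|$. Therefore $e(S,T) \le 2e(U) = O(|S|+|T|)$, which is \ref{item:thin} with $k = c+1$ up to the factor for ordered pairs.

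For \ref{item:permutation} we reuse the greedy argument from the proof of \cref{thm:lower-bound-exp}. $M'$ keeps at least half of the $Nd$ ones. Repeatedly pick a remaining edge $(u,v)$ of $M'$ and delete all rows and columns adjacent to $u$ or $v$. This removes at most $O(d)$ vertices per side and $O(d^2)$ ones per step, so the process runs for $\Omega(N/d)$ steps. The resulting induced matching is a permutation submatrix of size $D = \Omega(N/d)$.

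The only delicate point is the bookkeeping of parameters in the regime where $d$ is polynomial in $N$. We expect the thinness bound, which is the real content, to be routine given lossless expansion. If the parameter choice with $s = N^\delta/2$ breaks for large $d$, we would first try capping $d$ by passing to $\min(d,\log N)$ directly in the final expression.
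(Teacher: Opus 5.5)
Your route is the paper's: \cref{thm:sprlbframe} with $\gamma=1/2$, $k=O(1)$, $D=\Theta(N/d)$, thinness from lossless vertex expansion, and the greedy induced matching for \ref{item:permutation}. Your collision count on $U=S\cup T$ and your greedy count are both correct. The paper does the count on $S$ alone, $e(S,T)\le |T|+(d|S|-|N(S)|)\le |T|+c|S|$; this needs only $|S|\le N^\delta$ and holds for arbitrary $T$.

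\textbf{The gap.} Your argument leaves the range $d>N^{\delta/2}$ open. With $s=N^\delta/2$, the second term of the framework is $\log\Theta(N^\delta/d)$. At $d=N^\delta/\log N$ this is only $\Theta(\log\log N)$, and for $d\ge N^\delta$ it is $O(1)$. The statement places no upper bound on $d$, so this range must be handled. Your two remedies do not close it. Noting that $\min(d,\log N)=\Theta(\log N)$ there only restates what has to be proved. Enlarging $s$ by itself is not allowed: the hypothesis controls only sets of size at most $N^\delta$, so \ref{item:thin} with $k=O(1)$ is available only for $s=O(N^\delta)$.

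\textbf{The fix.} The missing step is to enlarge $s$ and $k$ together.
\begin{itemize}
\item For $|S|,|T|\le s=\kappa N^\delta$, split $S$ into at most $\lceil\kappa\rceil$ pieces $S_i$ of size at most $N^\delta$. Applying the one-sided bound to each piece gives $e(S,T)\le\lceil\kappa\rceil|T|+c|S|$, so \ref{item:thin} holds with $k=\lceil\kappa\rceil+c$.
\item With $D=N/(8d)$, the framework then gives $\Omega\bigl(\min\bigl(\frac{d}{2(\lceil\kappa\rceil+c)},\log\frac{\kappa N^\delta}{16d}\bigr)\bigr)$.
\item Choose $\kappa=\max(1,d/\log N)$. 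For $d\le\log N$ the bound is $\Omega(\min(d,\delta\log N))$. For $d>\log N$ the first term is at least $\log N/(2(c+2))$, and the second equals $\log(N^\delta/(16\log N))=\Omega(\delta\log N)$.
\end{itemize}
The paper's short proof does not spell out this regime either; with this choice of parameters the argument covers all $d$.
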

\begin{proof}
    The proof is similar to the proof of \cref{thm:lower-bound-exp}, except for the proof of \ref{item:thin}. It is proved in the following way.

    Let $S\subseteq V$ with $|S| \leq N^\delta$. By the assumption of the claim, it has at least $(d - O(1))|S|$ neighbors. On the other hand the sum of degrees of vertices in $|S|$ is $d|S|$. Thus, for any $T \subseteq V(G)$ with $|T| = |S|$ we have $e(S, T) \leq d|S| - (d - O(1)) |S| + |T| = O(|S|)$.
\end{proof}

\subsection{Inner Product and Disjointness}\label{sec:ip_bound}

A simple consequence of the lower bound for random Boolean matrices (\cref{thm:all_matrices_are_hard}) is the following. Suppose an explicit $N \times N$ matrix $M$ contains a $k \times 2^k$ submatrix with all columns distinct. Then $\spiky_\pm(M) \ge \Omega\left(\tfrac{k}{\log k}\right)$. Indeed, such a submatrix necessarily contains every possible $k \times k$ Boolean matrix (up to column permutations, which do not affect $\spiky_\pm$), and hence in particular one with maximum sign spiky rank $\Omega\left(\tfrac{k}{\log k}\right)$.

The largest $k$ for which this holds is exactly the Vapnik–Chervonenkis dimension ${\rm VC}(M)$ of the matrix $M$ \cite{VC71}. Thus we obtain the general bound
\[\spiky_\pm(M) \ge \Omega\left(\frac{\VC(M)}{\log \VC(M)}\right).\]

This observation immediately implies sign spiky rank lower bounds for Inner Product and Disjointness matrices. Recall that $\IP_n(x,y) = \sum_{i \in [n]} x_i y_i \bmod 2$, where $x,y\in \{0,1\}^n$, and $\Disj_n \in \{0,1\}^{N \times N}$ is defined with $\Disj_n(x,y) \coloneqq \bigwedge_{i \in [n]} \lnot x_i \lor \lnot y_i$ where $x$ and $y$ are identified with their binary representations.
\begin{proposition}
    $\spiky_{\pm}(\IP_n)\ge \Omega(\tfrac{n}{\log n})$ and  $\spiky_{\pm}(\Disj_n) \ge \Omega(\tfrac{n}{\log n})$.
\end{proposition}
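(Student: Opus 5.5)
The plan is to apply the VC-dimension observation stated just before the proposition. That observation gives $\spiky_\pm(M) \ge \Omega(\VC(M)/\log \VC(M))$ for any Boolean matrix $M$. So it suffices to exhibit, inside each of $\IP_n$ and $\Disj_n$, a $k \times 2^k$ submatrix with pairwise distinct columns, where $k = \Theta(n)$. Throughout, I view the Boolean matrices as sign matrices through $M \mapsto J - 2M$, as in \cref{lem:counting_spr}. Restricting to a submatrix cannot increase $\spiky_\pm$, since a sign representation of $M$ restricts to one of the submatrix and spiky matrices stay spiky under restriction. Also, the counting in \cref{thm:all_matrices_are_hard} yields a $k\times k$ sign matrix with $\spiky_\pm \ge k/(12\log k)$. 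A shattered set of $k$ rows contains all $2^k$ column patterns, so it contains this hard matrix as a column-permuted submatrix, and column permutation does not change $\spiky_\pm$.

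For $\Disj_n$, I take the $n$ rows indexed by $e_1,\dots,e_n$, and the columns indexed by all $y\in\{0,1\}^n$. Then $\Disj_n(e_i,y) = 1 - y_i$, so the column for $y$ is $\mathbf{1}-y$. As $y$ ranges over $\{0,1\}^n$, these columns are all distinct and realize every pattern. This gives $\VC(\Disj_n)\ge n$.

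For $\IP_n$, I take the same rows $e_1,\dots,e_n$. Here $\IP_n(e_i,y) = y_i$, so the column for $y$ is $y$ itself, and again every pattern appears. This gives $\VC(\IP_n)\ge n$.

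Plugging $k=n$ into the bound $\Omega(k/\log k)$ gives both claims. There is no real obstacle; the only points to check are the hereditary property of $\spiky_\pm$ under submatrices and the invariance under the $0/1$ to $\pm1$ translation. The translation costs at most an additive $1$ (adding $J$, which is spiky), and this is absorbed in the $\Omega$.
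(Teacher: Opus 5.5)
Your proposal is correct and follows the paper's proof: both apply the bound $\spiky_\pm(M) \ge \Omega(\VC(M)/\log \VC(M))$ to the $n$ rows indexed by $e_1,\dots,e_n$, whose columns are $y$ for $\IP_n$ and $\mathbf{1}-y$ for $\Disj_n$, which gives $\VC \ge n$. Your extra checks (heredity under submatrices, invariance under column permutations) are fine. The additive-$1$ cost of the $0/1$-to-$\pm 1$ translation matters only when you pass from $\spiky_\pm$ to the exact $\spiky$, via $\spiky_\pm(J-2M)\le \spiky(M)+1$.
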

\begin{proof}
    The VC-dimension of $\IP_n$ and $\Disj_n$ is at least $n$. 
    Consider the $n \times 2^n$ submatrix of $\Disj_n$ (or $\IP_n$) consisting of the $n$ rows indexed by the standard basis vectors
\[
e_i \coloneqq (0^{i-1},1,0^{n-i}), \quad i \in [n].
\]
For a column indexed by $x \in \{0,1\}^n$, its entries in this submatrix are as follows:
\begin{itemize}
    \item In the inner product matrix $\IP_n$, the column is exactly the vector $x$.
    \item In the Disjointness matrix $\Disj_n$, the column is $1^n - x$, where $1^n$ is the all-ones vector.
\end{itemize}
Thus the submatrix captures all distinct column patterns, corresponding either to $x$ (for $\IP_n$) or its complement (for $\Disj_n$).

\end{proof}

\section{Relations to other measures}\label{sec:relations}
\subsection{Blocky Cover}
We start with a very simple connection that we will subsequently use as a lemma. A \emph{blocky cover} $\bc(M)$ of a Boolean matrix $M$ is the minimum number of blocky matrices that cover all $1$-entries of $M$.
\begin{lemma}
\label{lem:bc-to-br}
    For every $M \in \{0,1\}^{n \times n}$ we have $\spiky(M) \le \br(M) \le 2^{\bc(M)}$.
\end{lemma}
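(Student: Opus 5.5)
The first inequality $\spiky(M)\le\br(M)$ is the upper-bound fact from the preliminaries: every blocky matrix is spiky, and a scalar multiple of a spiky matrix is spiky. So only $\br(M)\le 2^{\bc(M)}$ needs proof. The plan is inclusion--exclusion over the covering blocky matrices, together with the observation that entrywise products of blocky matrices are blocky.

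Let $c=\bc(M)$ and let $B_1,\dots,B_c$ be blocky matrices covering all $1$-entries of $M$. We may assume each $B_i\le M$ entrywise, i.e.\ no $B_i$ covers a $0$-entry of $M$; this is part of the standard meaning of a blocky cover, and it is what makes the identity below exact. Then the support of $M$ equals $\bigcup_i \mathrm{supp}(B_i)$, and for Boolean matrices
\[
M \;=\; J-\prod_{i\in[c]}(J-B_i) \;=\; \sum_{\emptyset\neq I\subseteq[c]} (-1)^{|I|+1}\, B_I,
\qquad B_I \coloneqq B_{i_1}\circ\cdots\circ B_{i_{|I|}},
\]
where all products are entrywise and $J$ is the all-ones matrix. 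To verify the identity pointwise: an entry with $M_{xy}=0$ has every $B_i$ vanishing there, so the right-hand side is $0$. An entry with $M_{xy}=1$ is covered by some nonempty set $K$ of the $B_i$, and the alternating sum equals $\sum_{\emptyset\ne I\subseteq K}(-1)^{|I|+1}=1$.

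Next we show that each $B_I$ is blocky. Take two blocky matrices $A$ and $A'$ with blocks $S_a\times T_a$ and $S'_b\times T'_b$ respectively. Their entrywise product is supported on $\bigcup_{a,b}(S_a\cap S'_b)\times(T_a\cap T'_b)$. The row sets $S_a\cap S'_b$ are pairwise disjoint over pairs $(a,b)$, and so are the column sets $T_a\cap T'_b$. Hence $A\circ A'$ is blocky, and by induction every $B_I$ is blocky.

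This expresses $M$ as a linear combination of at most $2^c-1\le 2^c$ blocky matrices, so $\br(M)\le 2^{\bc(M)}$. The only step needing care is the assumption $B_i\le M$: without it the identity computes the indicator of the union of the supports rather than $M$. I expect this to match the intended definition of a blocky cover, which covers exactly the $1$-entries of $M$.
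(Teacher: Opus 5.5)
Your proposal is correct and follows the paper's proof: inclusion--exclusion over the blocky cover, combined with the fact that blocky matrices are closed under entrywise products. Your version, summing over nonempty $I$ with sign $(-1)^{|I|+1}$, is the exact form of the identity (the paper's displayed sum over all $S$ with sign $(-1)^{|S|}$ literally evaluates to $J-M$), and you also prove the closure under entrywise products, which the paper only asserts.
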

\begin{proof}
    Consider the blocky cover representation of $M = \bigvee_{i \in [\bc(M)]} A_i$. For every set $$S = \{s_1, \dots, s_\ell\} \subseteq [\bc(M)]$$ define $A_S \coloneqq A_{s_1} \circ \dots \circ A_{s_\ell}$ to be the entrywise product of $A_i$ for $i \in S$. Then by the standard inclusion-exclusion principle we have $M = \sum_{S \subseteq [\bc(M)]} (-1)^{|S|} A_S$. Since entrywise-product of blocky matrices is a blocky matrix, this is implies the claimed bound on the blocky rank.
\end{proof}

\subsection{Blocky rank}
\label{sec:sprandbr} 
In this section, we prove the following theorem.
\thmbrlefspr* 
    We call a subset $S \subseteq [N] \times [M]$ of matrix entries \emph{$t$-blocky complement} if  $\bc(\mathbb{1}_{[N]\times[M] \setminus S}) \le t$. 

    The proof proceeds as follows. We first partition the matrix entries into $2^{\spiky(M)}$ classes, according to the subset of spiky matrices in the representation that influence each entry. Each class can be written as a union of subsets of $[N]\times [N]$ with disjoint rows and columns, where each such subset is a $\spiky(M)$-blocky complement. Finally, observing that each of these subsets can be completed to a real matrix of $\spiky(M)$-rank, we obtain the desired result.
    
\begin{proof}[Proof of \cref{thm:brlefspr}]
    Let $M = R_1 + R_2 + \ldots + R_k$ be the spiky representation of $M$ with $k=\spiky(M)$. For each entry $(j,\ell)$ of $M$, we assign a vector $v_{j\ell} \in \{0, 1\}^k$ as follows: $(v_{j\ell})_i = 1$ if $(R_i)_{j\ell}\neq 0$. So, all the entries are divided into $2^k$ types based on the corresponding vector. 

   We denote $M{\uparrow}v$ a matrix obtained from $M$ by zeroing out all entries where $v$ is not assigned, so $M = \sum_{v \in \{0,1\}^k} M{\uparrow}v$.
   We will prove that $\br(M{\uparrow}v)=k^{O(k)}$ for $v \in \{0, 1\}^k$. It implies that $\br(M) = 2^k\cdot k^{O(k)}=k^{O(k)}$ by the subadditivity of blocky rank.

    Fix some $v \in \{0,1\}^k$. Let $P \subseteq [k]$ be the set of $1$-entries in $v$, and $Z \subseteq [k]$ be the set of its $0$-entries.
    We first observe that the support of $M{\uparrow}v$ is contained in the intersection of supports of $R_i$ for $i\in P$. Let $M'$ be the result of zeroing-out elements of $M$ outside of this intersection. In order to obtain $M{\uparrow}v$ from $M'$ one needs to additionally zero-out the entries covered by the union of supports of $R_i$ for $i \in Z$, see \cref{fig:support} for an example. Notice that the set of non-zero elements in $M'$ forms a blocky matrix. 
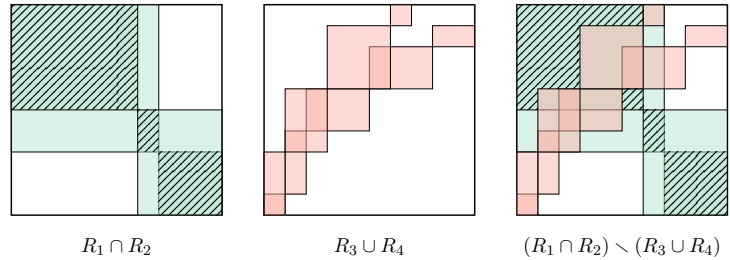
\begin{figure}[h!]
    \centering
    \scalebox{0.7}{\definecolor{color1}{HTML}{B5E5CF}
\definecolor{color2}{HTML}{FCB5AC}
\begin{tikzpicture}[scale=0.4]
\fill[white] (0,0) rectangle (10,10);
\draw[black, thick] (0,0) rectangle (10,10);
\fill[color1,opacity=0.5] (0,3) rectangle (7,10);\draw[black, thin] (0,3) rectangle (7,10);
\fill[color1,opacity=0.5] (7,0) rectangle (10,3);\draw[black, thin] (7,0) rectangle (10,3);
\fill[color1,opacity=0.5] (0,5) rectangle (6,10);\draw[black, thin] (0,5) rectangle (6,10);
\fill[color1,opacity=0.5] (6,0) rectangle (10,5);\draw[black, thin] (6,0) rectangle (10,5);
\fill[pattern=north east lines, pattern color=black]  (0,5) rectangle (6,10);
\fill[pattern=north east lines, pattern color=black]  (6,3) rectangle (7,5);
\fill[pattern=north east lines, pattern color=black]  (7,0) rectangle (10,3);

\fill[white] (12,0) rectangle (22,10);
\draw[black, thick] (12,0) rectangle (22,10);
\fill[color2,opacity=0.5] (12,0) rectangle (13,1);\draw[black, thin] (12,0) rectangle (13,1);
\fill[color2,opacity=0.5] (13,1) rectangle (14,4);\draw[black, thin] (13,1) rectangle (14,4);
\fill[color2,opacity=0.5] (14,4) rectangle (17,6);\draw[black, thin] (14,4) rectangle (17,6);
\fill[color2,opacity=0.5] (17,6) rectangle (20,8);\draw[black, thin] (17,6) rectangle (20,8);
\fill[color2,opacity=0.5] (20,8) rectangle (22,9);\draw[black, thin] (20,8) rectangle (22,9);
\fill[color2,opacity=0.5] (22,9) rectangle (22,10);\draw[black, thin] (22,9) rectangle (22,10);
\fill[color2,opacity=0.5] (12,0) rectangle (13,3);\draw[black, thin] (12,0) rectangle (13,3);
\fill[color2,opacity=0.5] (13,3) rectangle (15,6);\draw[black, thin] (13,3) rectangle (15,6);
\fill[color2,opacity=0.5] (15,6) rectangle (18,9);\draw[black, thin] (15,6) rectangle (18,9);
\fill[color2,opacity=0.5] (18,9) rectangle (19,10);\draw[black, thin] (18,9) rectangle (19,10);
\fill[color2,opacity=0.5] (19,10) rectangle (22,10);\draw[black, thin] (19,10) rectangle (22,10);
\fill[color2,opacity=0.5] (22,10) rectangle (22,10);\draw[black, thin] (22,10) rectangle (22,10);
\fill[color2,opacity=0.5] (22,10) rectangle (22,10);\draw[black, thin] (22,10) rectangle (22,10);

\fill[white] (24,0) rectangle (34,10);
\draw[black, thick] (24,0) rectangle (34,10);
\fill[color1,opacity=0.5] (24,3) rectangle (31,10);\draw[black, thin] (24,3) rectangle (31,10);
\fill[color1,opacity=0.5] (31,0) rectangle (34,3);\draw[black, thin] (31,0) rectangle (34,3);
\fill[color1,opacity=0.5] (24,5) rectangle (30,10);\draw[black, thin] (24,5) rectangle (30,10);
\fill[color1,opacity=0.5] (30,0) rectangle (34,5);\draw[black, thin] (30,0) rectangle (34,5);
\fill[color2,opacity=0.5] (24,0) rectangle (25,1);\draw[black, thin] (24,0) rectangle (25,1);
\fill[color2,opacity=0.5] (25,1) rectangle (26,4);\draw[black, thin] (25,1) rectangle (26,4);
\fill[color2,opacity=0.5] (26,4) rectangle (29,6);\draw[black, thin] (26,4) rectangle (29,6);
\fill[color2,opacity=0.5] (29,6) rectangle (32,8);\draw[black, thin] (29,6) rectangle (32,8);
\fill[color2,opacity=0.5] (32,8) rectangle (34,9);\draw[black, thin] (32,8) rectangle (34,9);
\fill[color2,opacity=0.5] (34,9) rectangle (34,10);\draw[black, thin] (34,9) rectangle (34,10);
\fill[color2,opacity=0.5] (24,0) rectangle (25,3);\draw[black, thin] (24,0) rectangle (25,3);
\fill[color2,opacity=0.5] (25,3) rectangle (27,6);\draw[black, thin] (25,3) rectangle (27,6);
\fill[color2,opacity=0.5] (27,6) rectangle (30,9);\draw[black, thin] (27,6) rectangle (30,9);
\fill[color2,opacity=0.5] (30,9) rectangle (31,10);\draw[black, thin] (30,9) rectangle (31,10);
\fill[color2,opacity=0.5] (31,10) rectangle (34,10);\draw[black, thin] (31,10) rectangle (34,10);
\fill[color2,opacity=0.5] (34,10) rectangle (34,10);\draw[black, thin] (34,10) rectangle (34,10);
\fill[color2,opacity=0.5] (34,10) rectangle (34,10);\draw[black, thin] (34,10) rectangle (34,10);
\fill[pattern=north east lines, pattern color=black] (24,9) rectangle (30,10);
\fill[pattern=north east lines, pattern color=black] (24,8) rectangle (27,9);
\fill[pattern=north east lines, pattern color=black] (24,6) rectangle (27,8);
\fill[pattern=north east lines, pattern color=black] (24,5) rectangle (25,6);
\fill[pattern=north east lines, pattern color=black] (29,5) rectangle (30,6);
\fill[pattern=north east lines, pattern color=black] (30,3) rectangle (31,5);
\fill[pattern=north east lines, pattern color=black] (31,0) rectangle (34,3);

\node at (5,-1.5) {$R_1 \cap R_2$};
\node at (17,-1.5) {$R_3 \cup R_4$};
\node at (29,-1.5) {$(R_1 \cap R_2) \setminus (R_3 \cup R_4)$};

\end{tikzpicture}}
    \caption{The picture illustrates the process for $v=(1100)$, so $P = \{1,2\}$ and $Z = \{3,4\}$. Here the green matrices illustrate $R_{1}$ and $R_2$, the red matrices are $R_3$ and $R_4$, the hatched area in the right square is the support of the matrix $M{\uparrow}v$.}
    \label{fig:support}
\end{figure}

    Let $D$ be a block in $M'$. It suffices to upper bound $\br((M{\uparrow}v)|_D)$, since $\br(M{\uparrow}v)$ is the maximum of the blocky ranks in the blocks. Let $L = \sum_{i \in P} R_i|_D$, $S$ be the set of entries of $D$ that are non-zero in $M{\uparrow}v$, by definition $S$ is a $k$-blocky complement. Then $L$ agrees with $M|_D$ on $S$. This means $L|_S$ is Boolean. The rank of $L$ can be upper bounded as:
\[\rk(L) = \rk(\sum_{i \in P} R_i|_D) \le \sum_{i \in P} \rk(R_{i}|_D) \le k\]
    where $\rk(R_{i}|_D) \leq 1$ for each $i \in P$ since $D$ is a subset of a block in $R_{i}$. To proceed further, we need the following lemma:
    \begin{lemma}\label{lem:brcompl}
    Let $M$ be a real-valued matrix with $\rk(M)\leq r$ and a subset $S$ of its entries be a $t$-blocky complement, such that $M|_S$ is Boolean. Then there exist $m$ blocky matrices $B_1, \dots, B_m$, and coefficients $\alpha_1, \dots, \alpha_m \in \{\pm 1\}$, such that $m = (t+r)^{O(t + r)}$, and $\sum_{i = 1}^m\alpha_i B_i$ agrees with $M$ on $S$: for all $(j, \ell) \in S$ 
    $$\Big(\sum_{i=1}^m\alpha_i B_i\Big)_{j\ell} = M_{j\ell}.$$
    \end{lemma}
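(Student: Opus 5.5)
I would prove \cref{lem:brcompl} by induction on $t$, keeping $r$ as a parameter and aiming for a recursion of the form $m(t,r)\le \mathrm{poly}(t+r)\cdot m(t-1,r')$ with $r'=O(r)$. Unrolling such a recursion gives $(t+r)^{O(t+r)}$. I have not verified that the gluing step below actually achieves this recursion.

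\emph{Base case $t=0$.} Here $S$ is the whole matrix, so $M$ is a Boolean matrix of real rank at most $r$. Such a matrix has at most $2^r$ distinct rows. Group the rows into at most $2^r$ classes $R_1,\dots,R_q$ of identical rows, and let $C_a$ be the set of columns where the rows of class $R_a$ equal $1$. Then $M=\sum_a \mathbb{1}_{R_a\times C_a}$. Each $\mathbb{1}_{R_a\times C_a}$ is a single all-ones rectangle, hence blocky, and all coefficients are $+1$. This gives $m\le 2^r$.

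\emph{Inductive step.} Write the complement of $S$ as covered by blocky matrices $B_1,\dots,B_t$, and let $A_1\times C_1,\dots,A_p\times C_p$ be the blocks of $B_t$. Let $S'$ be the set of entries covered by none of $B_1,\dots,B_{t-1}$. Then $S=S'\setminus\bigcup_j (A_j\times C_j)$, and $S'$ is a $(t-1)$-blocky complement.

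The natural move is to work separately on each horizontal strip $A_j\times[N]$, and on the rows outside $\bigcup_j A_j$. Inside strip $j$, the entries of $S$ are exactly the entries of $S'$ whose column is outside $C_j$. So I would apply the induction hypothesis to the rank-$\le r$ matrix $M|_{A_j\times \overline{C_j}}$ with the $(t-1)$-blocky complement $S'\cap(A_j\times\overline{C_j})$. Since $M$ is Boolean on that set, this yields a $\pm1$ combination $\sum_i \alpha_i B_i^{(j)}$ of $m(t-1,r)$ blocky matrices agreeing with $M$ there.

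I would then try to glue the strips by setting $B_i:=\sum_j B_i^{(j)}$. Two difficulties appear at this point.
\begin{enumerate}
\item The different strips may use different coefficient sign patterns $\alpha$.
\item Blocks of $B_i^{(j)}$ and $B_i^{(j')}$ occupy disjoint rows but may share columns, so $B_i$ need not be blocky.
\end{enumerate}

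\emph{Main obstacle: the gluing step.} I would handle both difficulties in two moves.

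First, I would make the construction canonical, so that the decomposition produced inside a strip depends only on a bounded amount of data. My guess for this data is the ``type'' of the rank-$\le r$ row space, in the spirit of the base case where everything was determined by at most $2^r$ row classes. The reason a bound should exist is that each row $x$ is Boolean on all columns except those lying in the (at most $t$) column blocks that $B_1,\dots,B_t$ assign to $x$. Since the rows lie in an $r$-dimensional space, this should force only $(t+r)^{O(t+r)}$ patterns. Making this counting precise is the part I am least sure of.

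Second, once the strips are canonical, I would merge strips having the same type. The resulting matrices would then be unions of rectangles with disjoint rows, and I would split these column-overlaps using at most $\mathrm{poly}(t+r)$ extra blocky matrices. The tool for the splitting would be the inclusion--exclusion trick of \cref{lem:bc-to-br}, applied only over the $O(t)$ blocky matrices relevant at the current level, to keep the cost polynomial.

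This row-pattern counting together with the merging is where I expect the real work to be. If it fails, I would fall back on a cruder bound that factors through \cref{lem:bc-to-br} and accepts a $2^{O(t)}$ loss per level.
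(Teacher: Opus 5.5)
\textbf{There is a genuine gap: the gluing step your argument depends on is missing.} Your base case is correct and matches the paper. The inductive step fails exactly where you expected.

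Splitting into one strip per block $A_j\times C_j$ of $B_t$ creates $p$ recursive subproblems, and $p$ is bounded only by $N$. Without gluing, the count is $p\cdot m(t-1,r)$, which is not dimension-free. No generic merging lemma can supply the gluing. A sum of rectangles with pairwise disjoint row sets but overlapping column sets can be an arbitrary Boolean matrix (use one single-row rectangle per row), so any merge must rely on structure you have not identified. Your ``canonical type'' count of $(t+r)^{O(t+r)}$ patterns is asserted, not derived. The fallback does not rescue the statement either: a $2^{O(t)}$ loss per level gives $2^{O(t^2)}$, which is worse than $(t+r)^{O(t+r)}$, and it still assumes the strips can be combined somehow.

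\textbf{The missing idea: let the rank, not the block structure of $B_t$, decide which rows need recursion.}
\begin{enumerate}
\item Fix $r$ columns $c_1,\dots,c_r$ spanning the column space. Every row is then a fixed linear function of its entries in these columns.
\item Suppose all basis entries of a row lie in $S$. Then they are Boolean, so there are at most $2^r$ distinct such rows. Handle them as in your base case: one rectangle $\mathbb{1}_{R_a\times C_a}$ per class, with $C_a$ the columns where the common row equals $1$. This agrees with $M$ wherever $M$ is Boolean, in particular on $S$, even though these rows may be non-Boolean elsewhere.
\item Otherwise, some basis entry $(x,c_\ell)$ lies in a block $N_{ij}$ of some covering blocky matrix $L_i$. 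Each column meets at most one block of each $L_i$, so only $rt$ pairs $(i,j)$ can occur. Hence these spoiled rows fall into at most $rt$ classes $R_{ij}$, assigned arbitrarily if several apply.
\item For class $R_{ij}$, delete the columns of $N_{ij}$. Since $R_{ij}$ lies inside the row set of $N_{ij}$, and blocks of $L_i$ have disjoint rows, this removes $L_i$ from the complement entirely. The deleted entries lie in $N_{ij}\subseteq\overline{S}$. So the induction hypothesis applies with $t-1$ and rank at most $r$; this is your strip call, but made only for these at most $rt$ classes.
\item Add up all resulting blocky matrices, extended by zero. The classes partition the rows, so no gluing or sign bookkeeping is needed, and all coefficients are $+1$.
\end{enumerate}
This gives $m(t)\le rt\cdot m(t-1)+2^r$ with $m(0)=2^r$, which unrolls to $(t+r)^{O(t+r)}$.
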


    Now, \cref{lem:brcompl} can be applied to the matrix $L$, and we get $\sum_{i=1}^m \alpha_i B_i$ agrees with $L$ on $S$. Observe that this sum is not necessarily Boolean: we only control its entries in $S$. Now we will get blocky representation of $(M{\uparrow}v)|_D$ based on $\sum_{i=1}^m\alpha_i B_i$: we need to zero-out the entries outside of $S$. 

     By definition $(M{\uparrow}v)|_D = \sum_{i=1}^m (B_i \circ \mathbb{1}_{\overline{S}})$. By \cref{lem:bc-to-br} we have $\br(\mathbb{1}_{\overline{S}}) \le 2^k$ since $S$ is a $k$-blocky complement. Then since entry-wise product of two blocky matrices is a blocky matrix we get $\br(B_i \circ \mathbb{1}_{\overline{S}}) \le 2^k$ for each $i$ and by subadditivity of blocky rank we get $\br((M{\uparrow}v)|_D) \le m \cdot 2^k \le k^{O(k)}$ and by combining these across all blocks $D$ in $M'$ we get the same bound for $\br(M{\uparrow}v)$.
\end{proof}

Now it remains to prove the lemma:
\begin{proof}[Proof of \cref{lem:brcompl}]
        For simplicity, we assume $\rk(M) = r$. The case of a smaller $\rk(M)$ follows immediately. We prove that $m$ can be bounded by $T_t$, where $T_t = rt\cdot T_{t-1}+2^r$ with $T_0 = 2^r$. We prove it by induction on $t$.
    \paragraph{\boldmath Base case: $t = 0$.}

        In this case, $M$ is a Boolean matrix of rank at most $r$. Let $c_1, c_2, \ldots, c_r$ be the indices of columns forming the basis. Thus, every row is fully determined by its entries at positions $c_1, c_2, \ldots, c_r$. So, there can be at most $2^r$ different rows in $M$ and thus $\br(M)\leq 2^r$, as we can represent each type of rows with one rectangle.

     \paragraph{\boldmath Induction step: $t > 0$.}

        Let $L_1, \ldots, L_t$ be the blocky matrices, whose union is $\overline{S}$, and $N_{i1}, \ldots , N_{it_i}$ be blocks of $L_i$ for each $i \in [t]$. We denote with $R$ the set of rows in $M$, and with $C$ the set of columns in $M$.

        Let $c_1, c_2, \ldots, c_r \in C$ be the indices of columns forming the basis. We call a row \emph{spoiled} if there exists $\ell$, such that its $c_\ell$-th entry belongs to $N_{ij}$ for some $i, j$. We assign each of the spoiled rows to the corresponding class $R_{ij}$. If one row corresponds to several classes, we choose one of them arbitrarily. Note that there are at most $rt$ non-empty classes, because each column intersects at most one of the blocks $N_{i1}, N_{i2}, \ldots, N_{it_i}$ for every $i$. Now we will solve the problem separately for each (non-empty) $R_{ij}$ and for $R\setminus \bigcup_{i, j}R_{ij}$.

        First, we use the induction hypothesis for $M|_{R_{ij}\times (C\setminus C(N_{ij}))}$ and $S \cap (R_{ij}\times(C\setminus C(N_{ij})))$, where $C(N_{ij})$ is set of columns forming the block $N_{ij}$. Note that $S \cap (R_{ij}\times(C\setminus C(N_{ij})))$ is $(t-1)$-blocky complement wrt $M|_{R_{ij}\times (C\setminus C(N_{ij}))}$, because $L_i$ is excluded from the complement. Thus, we get exactly the blocky decomposition which agrees with $M|_{R_{ij}\times C}$ on $S\cap (R_{ij}\times C)$.

        Second, we use the same argument as in the induction base to prove that $\br(M|_{(R \setminus \bigcup_{ij}R_{ij}) \times C}) \leq 2^r$.

        Overall, we get blocky decomposition, which agrees with $M$ on $S$ with at most $T_t = rt\cdot T_{t-1} + 2^r$ matrices. 

    It is easy to see using induction that
    $$T_t = r^t t!\cdot T_0 + 2^r\sum_{i = 0}^{t-1}r^i \prod_{j=0}^{i-1}(t-j). $$
    Thus, we can upper bound $T_t$ with
    $T_t \leq2^{O(t\log r + t\log t + r)} + t\cdot 2^{O(r + t\log r +t\log t)}\leq (r+t)^{O(r+t)}.$
\end{proof}

    We observe that without the assumption on the structure of $S$, the lemma is completely false.
\begin{proposition} 
    For every $n \in \mathbb{Z}_{> 0}$ there exists $N$ and a matrix $M \in \mathbb{R}^{N \times N}$ of rank $2$, such that deterministic communication complexity of computing $M(x,y)$ with the promise that $M(x,y) \in \{0,1\}$ is at least $n$.
\end{proposition}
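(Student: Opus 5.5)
We will build $M$ from a partial Boolean function that is already known to be hard for deterministic communication under a promise, and embed it into a rank-$2$ real matrix. The simplest source is Greater-Than-like structure realized geometrically: rows are lines and columns are points in the plane, so $M(x,y)=\langle a_x, b_y\rangle$ with $a_x,b_y\in\R^2$ is automatically of rank at most $2$. Whether $M(x,y)\in\{0,1\}$ then becomes an incidence condition: $y$ lies on the line $\langle a_x,\cdot\rangle = 0$ or $\langle a_x,\cdot\rangle = 1$. So a promise problem of this form is exactly a point--line-pair incidence problem, and we need a configuration whose promise matrix (entries in $\{0,1\}$, the rest ``don't care'') has deterministic promise complexity at least $n$.

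The plan is to encode a known hard total Boolean function $f\colon\{0,1\}^n\times\{0,1\}^n\to\{0,1\}$ (e.g.\ equality $\mathrm{EQ}_n$, whose deterministic complexity is $n+1$ via the fooling set / rank argument, or any function with fooling set of size $2^n$). Enlarge to rank-$2$ vectors with an affine coordinate: take $a_x=(\alpha_x,\beta_x)$ and $b_y=(u_y,1)$, so $M(x,y)=\alpha_x u_y+\beta_x$, an affine function of the scalar $u_y$. For a single row we can only force two values of $u$ to give $0$ and $1$, so a single column scalar per input $y$ is too restrictive; instead we let the matrix have many columns, one for each (input $y$, relevant pair) and rely on the promise: we only require the entries on a chosen set $P$ of positions to be Boolean and to equal $f$. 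The cleanest concrete choice: rows indexed by $x\in[2^n]$, columns by $y\in[2^n]$, $M(x,y)=1-(x-y)$... this is Boolean exactly when $y=x$ or $y=x+1$ with values $1,0$; the promise pairs then form the ``equal vs.\ successor'' problem, which has a fooling set of size $2^n$ (pairs $(x,x)$: mixing $(x,x')$ with $x'\ne x$ either leaves the promise or gives value $0$, but a protocol rectangle containing $(x,x)$ and $(x',x')$ with $x<x'$ must contain $(x,x')$ which is off-promise — so this fails). Hence the real step is to find a configuration where off-promise entries cannot be used as a free pass, i.e.\ a fooling set inside the promise region.

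So I will instead use a monochromatic-rectangle counting argument directly: a deterministic protocol of cost $c$ partitions all entries into $2^c$ rectangles each of which is monochromatic on promise entries. I will choose $M(x,y)=x-y$ scaled, say $M(x,y)=y-x$ on $[2^n]\times[2^n]$ (rank $2$), promise entries being $y=x$ (value $0$) and $y=x+1$ (value $1$). A rectangle $A\times B$ monochromatic on the promise: if it contains promise-$0$ entries $(x,x)$ for $x\in A\cap B$ and no promise-$1$ entries, then no $x\in A$ with $x+1\in B$. I will then argue that the union-of-diagonal structure forces many rectangles: the path $0,1$-entries $(x,x),(x,x+1),(x+1,x+1),\dots$ alternate colors along a ``staircase'', and any rectangle containing two staircase entries of the same color $(x,x),(x',x')$ with $x<x'$ also contains $(x,x')$, which is fine, but also its $1$-color partner constraints. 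The main obstacle is precisely showing $2^{\Omega(n)}$ rectangles are needed here (it may only give $\log$-type bounds, akin to Greater-Than); if this staircase only yields $\Omega(\log n)$, I will switch to taking $N$ much larger than $2^n$ (the statement allows any $N$), e.g.\ $N=2^{2^n}$, so that even a $\log\log N$-type lower bound for this Greater-Than-like promise problem exceeds $n$. That last fallback is the step I expect to carry the proof: establishing a lower bound of $\Omega(\log\log N)$ (or better) for the staircase promise problem, via the standard reduction showing any protocol must ``locate'' the boundary, and then choosing $N$ doubly exponential in $n$.
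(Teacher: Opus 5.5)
There is a genuine gap: the matrix you finally commit to is easy for every $N$, so no choice of $N$ can rescue it. Take $M(x,y)=y-x$ under the promise $y-x\in\{0,1\}$. Alice sends $x \bmod 2$, and Bob, knowing $y\in\{x,x+1\}$, outputs $0$ iff $y\equiv x \pmod 2$. This is a $2$-bit protocol. More generally, any rank-$2$ matrix of the form $g(y)-f(x)$ is solved by Alice sending the parity of $\lfloor f(x)\rfloor$.

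So the hoped-for $\Omega(\log\log N)$ bound for the staircase is false, and there is no ``boundary to locate''. The structural reason is as follows. Call rows $x,x'$ \emph{conflicting} if some column has a promise-$0$ entry in one of them and a promise-$1$ entry in the other. A one-way protocol is exactly a proper colouring of this conflict graph. For $y-x$ the conflict graph is a path ($x\sim x'$ iff $|x-x'|=1$), hence $2$-colourable. The same happens for any choice with a fixed slope $\alpha_x$ in your affine ansatz.

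The missing idea is to make the conflict graph have unbounded chromatic number, which requires genuinely two-dimensional row vectors.
\begin{itemize}
\item Take $u_1,\dots,u_K\in\R^2$ pairwise linearly independent.
\item For each pair $i\neq j$, add a column $t_{ij}$ with $\langle u_i,t_{ij}\rangle=1$ and $\langle u_j,t_{ij}\rangle=0$. This system is solvable precisely because $u_i,u_j$ are independent.
\item The matrix $(i,t)\mapsto\langle u_i,t\rangle$ has rank $2$, and every two rows conflict. Hence any one-way protocol must give all $K$ rows distinct messages, i.e.\ it needs $\log_2 K$ bits.
\item A $c$-bit two-way protocol yields a one-way protocol with at most $2^{c+1}$ bits: Alice sends her replies for every possible transcript of Bob. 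Therefore $c\ge\log_2\log_2 K-1$.
\item Choosing $K=2^{2^{n+1}}$ gives complexity at least $n$.
\end{itemize}
This is the paper's argument. There it is phrased as an infinite problem on $\R^2\times\R^2$ whose conflict graph contains arbitrarily large cliques (points in general position with the origin), from which a finite submatrix is extracted.
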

\begin{proof}
Consider the following communication problem $\mathcal{M}$ that embeds all rank-$2$ matrices: Alice's input is $(x_1,x_2) \in \mathbb{R}^2$, Bob's input is $(y_1,y_2) \in \mathbb{R}^2$. They need to return $x_1 x_2 + y_1 y_2$ whenever this value is in $\{0,1\}$. We will show that the communication cost of this problem is $\omega(1)$ which implies that for every $n$ there is a finite $N \times N$ submatrix of $\mathcal{M}$ that satisfies the claim.

If there is a cost-$k$ protocol for a communication problem, then there is a cost-$2^{k+1}$ one-sided protocol for that problem, since Alice can just send her answers for all potential replies of Bob. Hence, if there exists an $O(1)$-cost protocol for $\mathcal{M}$, then there is a partition $A_1 \sqcup A_2 \sqcup \dots \sqcup A_k = \mathbb{R}^2$ such that for each $A_i$ and $(x,y) \in \mathbb{R}^2$ there exists $b \in \{0,1\}$ such that $\langle z, (x,y) \rangle \neq b$ for every $z \in A_i$ (in that case, Alice can send $i$ and Bob can determine the answer by his input $(x,y)$).

So let $(\mathbb{R}^2, \{(u,v) \in \binom{\mathbb{R}^2}{2} \mid \exists t \in \mathbb{R}^2\colon \{\langle u,t \rangle, \langle v,t \rangle\} = \{0,1\}\})$ be a graph, the partition above is exactly $k$-coloring of this graph. We now claim that every set of points $u_1, \dots, u_n \in \mathbb{R}^2 \setminus \{0\}$ such that $\{0,u_1,\dots,u_n\}$ are in a general position, forms a clique in this graph contradicting that it is $O(1)$-colorable. Indeed, for every pair of points $u_i, u_j$ the system $\langle u_i, t\rangle = 1$ and $\langle u_j, t\rangle = 0$ has a solution: take $t' = (-(u_j)_1, (u_j)_0)$, so the second equation is satisfied by any $\alpha t'$ and then find appropriate $\alpha$ to satisfy the first equation, it fails only in the case if $\{0,u_i,u_j\}$ are on a line, so since $\{0,u_1,\dots, u_n\}$ are in the general position, $\{u_1,\dots, u_n\}$ is indeed a clique.
\end{proof}

\subsection{Sparsity}\label{sec:sparsity}

\begin{lemma}[\cref{lem:rigidity_vs_sparsity} restated]
Let $M \in \F^{N \times N}$ be a matrix over an arbitrary field $\F$ \footnote{we think of $\F$ either $\R$ or a finite field $\F_p$.} and let $\spar(M) \leq m$. Then
$$\spiky^{\F}(M) \leq 2\sqrt{m}.$$ 
\end{lemma}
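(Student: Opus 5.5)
Split the nonzero entries of $M$ by whether they lie in a heavy row (one with more than $\sqrt{m}$ nonzeros) or not. There are fewer than $\sqrt{m}$ heavy rows, since together they carry at most $m$ nonzeros. Write $M = M_H + M_L$, where $M_H$ keeps the heavy rows and zeroes out the rest, and $M_L$ keeps the light rows. By subadditivity it suffices to show $\spiky^{\F}(M_H) \le \sqrt{m}$ and $\spiky^{\F}(M_L) \le \sqrt{m}$.

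For $M_H$ we use rank. $M_H$ has at most $\lfloor\sqrt{m}\rfloor$ nonzero rows, so $\rank^{\F}(M_H) \le \sqrt{m}$. Concretely, each nonzero row is the rank-one matrix $e_i \otimes (\text{row}_i)$, and a rank-one matrix is spiky: take the blocky matrix with a single block $[N]\times[N]$. By the upper-bound fact, $\spiky^{\F}(M_H) \le \sqrt{m}$.

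For $M_L$, every row has at most $d := \lfloor \sqrt{m} \rfloor$ nonzero entries. The goal is to write the support as a union of $d$ partial permutation patterns. A partial permutation matrix with arbitrary nonzero weights is spiky: its blocks are $1\times 1$ with distinct rows and columns, and a $1\times1$ block of any value is rank one. So it suffices to decompose the support of $M_L$ into $d$ sets, each having at most one entry per row and per column. This is an edge coloring of a bipartite graph, and by K\H{o}nig's theorem it needs only the maximum degree many colors. The trouble is that column degrees of $M_L$ are not bounded.

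This is the main obstacle. The fix is to split by column weight too. Call a column heavy if it has more than $\sqrt{m}$ nonzeros; there are fewer than $\sqrt{m}$ of them. The entries of $M_L$ in heavy columns form a matrix supported on fewer than $\sqrt{m}$ columns, hence of rank less than $\sqrt{m}$. The remaining entries lie in light rows and light columns, so they form a bipartite graph of maximum degree at most $\sqrt{m}$. K\H{o}nig's edge-coloring theorem then gives at most $\sqrt{m}$ matchings, i.e.\ at most $\sqrt{m}$ spiky matrices. This yields three pieces: heavy rows, heavy columns within light rows, and the light-light part. The naive total is $3\sqrt{m}$.

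To reach $2\sqrt{m}$, I would combine the two rank parts into a single object before counting. Let $h_r$ and $h_c$ be the numbers of heavy rows and heavy columns. The matrix supported on heavy rows $\cup$ heavy columns has rank at most $h_r + h_c$, and $(h_r + h_c)\sqrt{m} < 2m$ only gives $h_r + h_c < 2\sqrt{m}$, which is not enough on its own. Better is to pick a threshold $\tau$ and balance. Declare rows and columns heavy when they have more than $\tau$ nonzeros. There are at most $m/\tau$ heavy rows and at most $m/\tau$ heavy columns, giving a rank part of at most $2m/\tau$, and the light part needs at most $\tau$ matchings. The total is $2m/\tau + \tau$, minimized at $\tau = \sqrt{2m}$, which gives $2\sqrt{2m} \approx 2.83\sqrt{m}$. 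Still not $2\sqrt{m}$.

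The sharper count comes from counting nonzeros: with $h_r$ heavy rows and $h_c$ heavy columns, $(h_r + h_c)\tau \le m + (\text{overlap})$, so $h_r + h_c \le 2m/\tau$ is the worst case. The constant $2$ therefore needs one more idea. I would absorb heavy columns into the matching step by peeling: repeatedly remove one full row or column with more than $\sqrt{m}$ entries (cost one rank-one matrix each, at most $\sqrt{m}$ removals total since each removes more than $\sqrt{m}$ nonzeros). When no such line remains, apply K\H{o}nig with at most $\sqrt{m}$ matchings. This gives $\spiky^{\F}(M) \le 2\sqrt{m}$, and the counting in this peeling step is where I would be most careful.
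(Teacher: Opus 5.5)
Your final peeling argument is correct. Each removed line has more than $\sqrt{m}$ current nonzeros, disjoint from those removed earlier, so there are fewer than $\sqrt{m}$ removals, and each removed line is a rank-one (hence spiky) matrix. The remainder is a bipartite graph of maximum degree at most $\sqrt{m}$, so K\H{o}nig's theorem splits it into at most $\lfloor\sqrt{m}\rfloor$ weighted partial permutation matrices, each spiky.

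The paper gets the same bound more cheaply by using a larger class of spiky matrices: any matrix with at most one nonzero entry per column is spiky. Grouping the columns by the row of their nonzero gives disjoint row segments $\{i\}\times T_i$, each a rank-one block; equivalently, the matrix is $B\circ(\mathbf{1}v^{T})$, where $\mathbf{1}$ is the all-ones vector and $v_j$ is the nonzero of column $j$. Hence only column counts need control. The paper removes the fewer than $\sqrt{m}$ heavy columns one at a time, then strips one nonzero from every column at most $\sqrt{m}$ times. No edge-coloring theorem and no two-sided peeling are needed.

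In particular, the ``main obstacle'' you identified (unbounded column degrees in $M_L$) is self-imposed. By the transposed observation, your $M_L$, whose rows have at most $\lfloor\sqrt{m}\rfloor$ nonzeros, already splits into $\lfloor\sqrt{m}\rfloor$ matrices with at most one nonzero per row, each spiky. Together with your rank bound on $M_H$, this gives $2\sqrt{m}$ right at your first split, so the threshold-balancing detours are unnecessary.

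Your route does yield a slightly finer decomposition: fewer than $\sqrt{m}$ rank-one matrices plus at most $\sqrt{m}$ weighted matchings, i.e., spiky matrices with only $1\times 1$ blocks. The price is invoking K\H{o}nig's theorem, whereas the paper's argument is a purely greedy count.
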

\begin{proof}
    Let $c_1, \dots, c_N$ denote the number of non-zero entries in each column of $M$, so $\sum_{i=1}^N c_i = m$. 
    We show a process to decompose $M$ into spiky rank-$1$ matrices.
    
    Note that any matrix with a single non-zero column has spiky rank 1.
    Also, a matrix with at most one non-zero entry in each column has spiky rank 1. 

    We now describe a strategy that decomposes $M$ into matrices of these two types. Consider the following game on the column counts $c_1, \ldots, c_N$:
    \begin{description}[noitemsep]
        \item[Type 1 move:] nullify a single $c_i$ (i.e., remove a matrix with one non-zero column).
        \item[Type 2 move:] subtract $1$ from each $c_i$ (i.e., remove a matrix with one non-zero entry per column).
    \end{description}
Each such move corresponds to adding a spiky rank-1 matrix to our decomposition, contributing 1 to the spiky rank. Our goal is to make all $c_i$'s zero using as few moves as possible. Here is a simple strategy that makes only $2\sqrt{m}$ moves:

\begin{enumerate}[noitemsep]
    \item While there exists $i$ such that $c_i > \sqrt{m}$, perform a type 1 move on that column. Since the total sum of $c_i$'s is at most $m$, there can be at most $\sqrt{m}$ such $i$. Thus, this phase requires at most $\sqrt{m}$ moves.
    \item Now $c_i \leq \sqrt{m}$ for all $i$. Perform type 2 moves repeatedly. After at most $\sqrt{m}$ such steps, all $c_i$ become zero.
\end{enumerate}

Altogether, we use at most $2\sqrt{m}$ moves, each corresponding to a spiky rank-1 matrix. Therefore, $\spiky^{\F}(M) \leq 2\sqrt{m}$.
\end{proof}

We will further improve \cref{lem:rigidity_vs_sparsity} for Boolean matrices using the recipe from \cite[Theorem~2]{AvrahamYehudayoff2022} (preprint version of \cite{DY24}). We will need the following simple technical lemma:

\begin{lemma}[Improved Lemma~11~from~\cite{AvrahamYehudayoff2022}]  \label{lem:ay24lem11}
    For large enough $N$, every $M \in \{0,1\}^{N \times N}$  with $\spar(M) \geq \frac{N^2}{\log^6 N}$ contains $\frac{\log N}{100\log\log N}\times \sqrt{N}$ all-$1$ rectangle.
\end{lemma}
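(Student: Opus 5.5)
We follow the classical Kővári--Sós--Turán double-counting argument, with the row side of size $a=\lfloor \log N/(100\log\log N)\rfloor$ and the goal of finding $\sqrt N$ columns that are all-one on some $a$ rows. Write $d_j$ for the number of ones in column $j$, so that $\sum_j d_j = \spar(M)\ge N^2/\log^6 N$. The plan is to count pairs $(A,j)$ with $A\subseteq[N]$, $|A|=a$, and column $j$ all-one on $A$. The number of such pairs equals $\sum_j \binom{d_j}{a}$. By convexity of $x\mapsto\binom{x}{a}$ (Jensen), this is at least $N\binom{\bar d}{a}$, where $\bar d=\spar(M)/N\ge N/\log^6 N$.

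Next we average over all $\binom{N}{a}$ row sets $A$. Some $a$-set $A$ is covered by at least
\[
\frac{N\binom{\bar d}{a}}{\binom{N}{a}}
\]
columns. We then estimate this ratio:
\[
\frac{\binom{\bar d}{a}}{\binom{N}{a}}\ \ge\ \Big(\frac{\bar d-a}{N}\Big)^{a}\ \ge\ \Big(\frac{1}{2\log^6 N}\Big)^{a}=2^{-a(6\log\log N+1)} .
\]
Here we used $a\ll \bar d$, so $\bar d-a\ge \bar d/2$ for large $N$. With $a\le \log N/(100\log\log N)$ the exponent is at most $\tfrac{7}{100}\log N$. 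Hence the number of columns that are all-one on $A$ is at least $N\cdot N^{-0.07}=N^{0.93}\ge\sqrt N$. Choosing any $\sqrt N$ of these columns gives the desired $a\times\sqrt N$ all-one rectangle.

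The only delicate point is making the Jensen step rigorous, since $\binom{x}{a}$ must be extended to a convex function on the reals. We take $\binom{x}{a}=x(x-1)\cdots(x-a+1)/a!$ for $x\ge a-1$ and $0$ otherwise; this extension is convex and nondecreasing. Its use is legitimate because $\bar d\ge a$. All remaining constants (the $100$, $\log^6$, and $\sqrt N$) leave ample slack, so the remaining steps are routine checks for $N$ large enough. If integrality matters, we replace $a$ by its floor and $\sqrt N$ by its ceiling; neither affects the estimate.
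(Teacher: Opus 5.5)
Your argument is correct and is essentially the paper's proof. The paper simply invokes Lemma~5.4 of \cite{RaoYehudayoffBook} with $\varepsilon=(1000\log^6 N)^{-1}$, and that lemma rests on the same K\H{o}v\'ari--S\'os--Tur\'an count of pairs (row set, all-one column) via convexity and averaging that you carry out explicitly; one small nit is that downstream the rectangle is used as having \emph{at least} $\frac{\log N}{100\log\log N}$ rows, so you should round $a$ up rather than down, and your slack ($N^{0.93}$ versus $\sqrt N$) absorbs this easily.
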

\begin{proof}
The reference we use is Lemma 5.4 in \cite{RaoYehudayoffBook}, which says that if $\varepsilon>0$ and $k$ are such that
\[
   \varepsilon N \ge 2k
   \quad\text{and}\quad
   k \le \frac{\log N}{2\log\bigl(2e/\varepsilon\bigr)},
\]
then every matrix $M$ with $\spar(M)\ge \varepsilon N^2$ contains a $1$-submatrix with at least $k$ rows and $\sqrt{n}$ columns.  Applying this lemma with
\(
   \varepsilon = (1000\log^6 N)^{-1}
\)
completes the proof.
\end{proof}

\begin{theorem}\label{thm:sparsity_vs_br_Boolean}
    Any Boolean matrix $M$ with sparsity $m$ satisfies
    $$\br(M)\leq O\left(\frac{\sqrt{m}\log\log m}{\log m}\right).$$ 
\end{theorem}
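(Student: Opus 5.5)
The plan is to adapt the recipe of \cite[Theorem~2]{AvrahamYehudayoff2022}, i.e.\ the $O(N/\log N)$ blocky rank bound for $N\times N$ Boolean matrices, to the sparse setting. There are two kinds of savings: heavy lines are cheap, and dense regions can be peeled off by large all-$1$ rectangles placed in parallel. Throughout, set $K=\sqrt m$ and $t=\sqrt m\,\log\log m/\log m$. We use two facts repeatedly. First, a matrix supported on a single row or a single column is blocky. Second, any family of all-$1$ rectangles with pairwise disjoint row sets and pairwise disjoint column sets sums to a single blocky matrix.

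\emph{Step 1 (heavy lines).} Call a column heavy if it contains more than $m/t=\sqrt m\log m/\log\log m$ ones, and define heavy rows analogously. There are at most $t$ heavy columns and at most $t$ heavy rows. Peeling each heavy line off as a separate blocky matrix (taking care to count overlapping entries only once) costs $O(t)$. In the remaining matrix $M_1$, every row and column has degree at most $\Delta:=\sqrt m\log m/\log\log m$. Deleting zero lines, we may also assume $M_1$ has at most $m$ nonzero rows and at most $m$ nonzero columns.

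\emph{Step 2 (bucketing into a grid).} Partition the rows of $M_1$ into groups of size $K$, and the columns likewise, producing a grid of $K\times K$ cells. Call a cell dense if it contains at least $K^2/\log^6 K$ ones. Since the total sparsity is $m=K^2$, there are at most $\log^6 K$ dense cells.

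\emph{Step 3 (dense cells).} Here we apply \cref{lem:ay24lem11} repeatedly. Each application extracts a $\frac{\log K}{100\log\log K}\times\sqrt K$ all-$1$ rectangle from every cell that is still dense. The extracted rectangles are then scheduled into blocky matrices: cells lying in distinct row groups and distinct column groups can share one blocky matrix. Grouping the dense cells by a proper edge-colouring of their bipartite ``cell graph'' (row groups versus column groups) costs at most the maximum number of dense cells per row or column group as a multiplicative factor. The per-cell count then mirrors the AY calculation. Each cell has $K$ rows, and each row-batch of $\frac{\log K}{\log\log K}$ rows is exhausted after about $\sqrt K$ extractions of rectangles of width $\sqrt K$. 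This gives roughly $K\log\log K/\log K$ blocky matrices per cell, which is the target.

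\emph{Step 4 (sparse cells, the main obstacle).} The leftover consists of the sparse cells together with the residues of dense cells once they drop below the density threshold. These must be covered by decompositions into matchings: by K\H{o}nig's theorem, a bipartite graph of maximum degree $\Delta'$ splits into $\Delta'$ matchings, and each matching is a blocky (permutation-like) matrix. This is the step I expect to be the bottleneck. The degree bound $\Delta$ from Step 1 is too large by a factor of roughly $(\log m/\log\log m)^2$. To close the gap I would first try to tune both the heavy-line threshold and the density threshold ($\log^6$ versus a smaller power), so that the residual degree inside sparse cells is at most $O(t)$, while still controlling the number of dense cells through the sparsity bound.

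If that tuning fails, the fallback is to recurse. Treat the sparse residue as a new matrix of smaller sparsity, which improves by a $\mathrm{polylog}$ factor per round, and sum the resulting geometric series of costs. The final answer should be $O(\sqrt m\log\log m/\log m)$, dominated by Steps 1 and 3.
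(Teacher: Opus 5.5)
There is a genuine gap, in fact two, and both sit exactly where the $\log m/\log\log m$ saving over the easy $2\sqrt m$ bound has to come from.

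\emph{Step~3 does not deliver the claimed count.}
\begin{itemize}
\item Under your scheduling rule, a blocky matrix contains at most one $k\times\sqrt K$ rectangle per dense cell, where $k=\frac{\log K}{100\log\log K}$. There are at most $\log^6K$ dense cells, so one blocky matrix removes at most $k\sqrt K\log^6K$ ones.
\item Take a random $\sqrt m\times\sqrt m$ matrix of density $1/2$, which is a single dense cell. Clearing it already needs about $K^{3/2}/k\gg\sqrt m$ blocky matrices.
\item Your tally of $K/k$ row batches times $\sqrt K$ extractions counts extractions, not blocky matrices. \cref{lem:ay24lem11} gives no control over which rows a rectangle uses, and rectangles can share a blocky matrix only if they are pairwise row- and column-disjoint.
\item Even granting the per-cell count, the edge-colouring factor can be $\log^6K$, for instance when all dense cells lie in one row group.
\end{itemize}
The missing ingredient is the packing-and-halving step of \cite{AvrahamYehudayoff2022}, which the paper reproduces. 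Set $k=\frac{\log m}{100\log\log m}$ and $\ell=m^{1/4}$, and apply \cref{lem:ay24lem11} repeatedly to the submatrix left after deleting the rows \emph{and} columns of the rectangles found so far. Then a single blocky matrix consists of many disjoint $k\times\ell$ rectangles. Let $c\ge\sqrt m$ be the number of nonzero columns.
\begin{itemize}
\item If the rectangles cover at least $c/2$ columns, that one blocky matrix removes at least $\sqrt m\,k/2$ ones. Such steps therefore cost $O(\sqrt m\log\log m/\log m)$ in total.
\item Otherwise, the part outside the rectangles' rows and columns has fewer than $m/\log^4m$ ones. It costs $O(\sqrt m/\log^2m)$ by the proof of \cref{lem:rigidity_vs_sparsity}. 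The at most $kc/\ell$ rows of the rectangles are removed one row at a time. What remains lives on fewer than $c/2$ columns.
\end{itemize}
After $O(\log\log m)$ halvings both dimensions are at most $\sqrt m$, and \cite[Theorem~1.4]{DY24} finishes.

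\emph{Step~4 cannot be repaired by tuning or by your recursion.} Peeling every line of degree above $\theta$ can cost $m/\theta$, while a matching decomposition of the rest costs up to $\theta$. So this route is stuck at $\Theta(\sqrt m)$.

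Concretely, take $\sqrt m$ rows with $\sqrt m$ ones each, spread evenly over $\sqrt m\log^7m$ columns. There is no heavy line and every cell is sparse; spreading over more columns defeats any polylogarithmic density threshold. Yet the maximum degree is $\sqrt m$.

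Recursing does not help either. The sparse residue can contain all $m$ ones, since only the per-cell sparsity is small. Under your rule, cells in one row group cannot share blocky matrices, so per-cell costs of order $\sqrt{s_c}$ add up to far more than $\sqrt m$, even with the $\log\log/\log$ saving. Here $s_c$ is the number of ones in cell $c$.

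What you are missing is that a Boolean matrix with at most one $1$ in each column is already blocky: its blocks are $\{i\}\times T_i$. This is far stronger than a matching.
\begin{itemize}
\item While $M$ has at least $\sqrt m\log m$ nonzero columns (or rows), one such matrix removes at least $\sqrt m\log m$ ones. So this phase costs at most $\sqrt m/\log m$. In the example above, $\sqrt m/\log^7m$ such matrices clear everything.
\item It leaves a matrix with both dimensions at most $\sqrt m\log m$. That makes the grid, K\"onig's theorem and the heavy-line threshold unnecessary.
\item It is also exactly what limits the halving above to $O(\log\log m)$ rounds.
\end{itemize}
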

\begin{proof}
    We are going to eliminate all one-entries of $M$ by subtracting blocky matrices from $M$ and maintain the property that after the subtraction the resulting matrix is still Boolean. Thus, the blocky decomposition we build is a \emph{blocky partition}. At each step, we assume that each row and each column of $M$ contains a nonzero element. Otherwise, the empty rows and columns can be removed from the matrix with no change to the blocky rank.

    First, we adopt the same strategy as in \cref{lem:rigidity_vs_sparsity} for the matrices $M$, s.t. its number of rows or number of columns is at least $\sqrt{m}\log m$. We can do at most $\frac{m}{\sqrt{m}\log m}$ such steps, using one blocky matrix at each step, thus using at most $\frac{\sqrt{m}}{\log m}$ blocky matrices.

    Let $r(A)$ be the number of (non-empty) rows in $A$, and $c(A)$ be the number of (non-empty) columns in $A$.
    So, by the above, we may assume that $r(M) \leq \sqrt{m}\log m$ and $c(M) \leq \sqrt{m}\log m$. We now repeat the proof of \cite[Theorem 2]{AvrahamYehudayoff2022}, ensuring that the same strategy works in our case.

    We construct a nested matrix sequence $$M = M_0 \supseteq M_1 \supseteq M_2 \supseteq \ldots\supseteq M_N,$$ with controlled $\br(M_i-M_{i+1})$ and $\br(M_N) \leq \frac{\sqrt{m}}{\log m}$. 

    Assume we have constructed $M_0, M_1, \ldots, M_t$. If we reach a situation where $c(M_t)\leq \sqrt{m}$ and $r(M_t)\leq \sqrt{m}$, we set $N=t$ and conclude by \cite[Theorem~1.4]{DY24}. Otherwise, we use the following algorithm.

    Let $$k = \frac{\log m}{100 \log\log m} \quad\text{and}\quad \ell = \sqrt[4]{m}.$$

    Without loss of generality, suppose $c(M_t) \geq \sqrt{m}$. We construct the matrix $M_{t+1}$ with $c(M_{t+1})\leq c(M_t)/2$. Assume we have already constructed $B_0, B_1, \ldots, B_q$ and want to construct $B_{q+1}$, with $B_0 = 0$. Let
    $$M' = M_t-\sum_{i=0}^q B_i.$$

    Let us run the following procedure:

    \begin{algorithm}[ht]
    \caption{Extracting a Sequence of Submatrices}
    \begin{algorithmic}[1]
      \State $j \gets 0$
      \State $N_0 \gets M'$
      \While{$\spar(N_j) \ge m / \log^4 m$} \Comment{by \cref{lem:ay24lem11} with $n = \sqrt{m}\,\log m$}
        \State Decompose $N_j$ as
        \[
          N_j \;=\;
          \begin{pmatrix}
            1_{k\times \ell} & * \\
            *                 & N_{j+1}
          \end{pmatrix}
        \]
        \State $j \gets j + 1$
      \EndWhile
    \end{algorithmic}
    \end{algorithm}

    Let $T$ denote the final value of $t$. Let $B$ be the blocky matrix with the $T$ one-blocks of size $k \times \ell$ that were found above. It could sometimes be the zero matrix. Define $L$ with
    \[
    L \coloneqq
    \begin{pmatrix}
    0 & 0 \\
    0 & N_{T}
    \end{pmatrix}.
    \]

    The matrices $B$ and $L$ have the same size as $M_t$.

    There are two options to consider.

    \textbf{The first option} is that $T\ell \geq \frac{c(M_t)}{2}$. In this case, we set
    \(
    B_{q+1} = B
    \) and continue the procedure.
    There are at least $T\ell$ nonzero columns in $B$, and each column has $k$ nonzero entries.
    Hence
    \[
    \lvert B_{q+1} \rvert \;\geq\; \frac{c(M_t)k}{2}\;\geq \;\frac{\sqrt{m}k}{2}.
    \]

    \textbf{The second option} is that $T\ell < \frac{c(M_t)}{2}$.  
    Denote by $B'$ the matrix obtained from $M'$ by zeroing out every zero row in $B$.  
    Because the total number of nonzero rows in $B$ is at most $\frac{k\cdot c(M_t)}{\ell}$ we conclude
    \[
    \br(B') \;\le\; \frac{k\cdot c(M_t)}{\ell}\;\leq\; k \sqrt[4]{m} \log m.
    \]
    Moreover, because $\spar(L) < \frac{m}{\log^4m}$, \cref{lem:rigidity_vs_sparsity} implies
    \(
    \mathrm{\br}(L) \le \sqrt{m}/\log^2 m.
    \)
    Then the matrix \(M' - L - B'\) satisfies
    \(
    c\bigl(M' - L - B'\bigr) \le T\ell < c(M_t)/2.
    \)
    We have thus reduced the number of columns by a factor of two and we set $M_{t + 1} = M'-L-B'$. So, we have \[\br(M_{t} - M_{t + 1})\leq \br (L) + \br(B) + q \leq \frac{\sqrt{m}}{\log^2m} + k\sqrt[4]{m} \log m + \frac{2\cdot\spar(M_t-M_{t+1})}{\sqrt{m}k}.\]

    Each time either $c(M_t)$ or $r(M_t)$ is reduced $2$ times, so we will do at most $2\log\log m$ steps to reach $c(M_t) \leq \sqrt{m}$ and $r(M_t) \leq \sqrt{m}$. Thus, we get
    \begin{align*}
    \br(M)&\leq \log\log m \cdot \left(\frac{\sqrt{m}}{\log^2m} + k\sqrt[4]{m}\log m\right) + \frac{2\cdot\spar(M-M_N)}{\sqrt{m}k}\\
    &\leq o\left(\frac{\sqrt{m}}{\log m}\right) + \frac{2\cdot\spar(M)}{\sqrt{m}k}\\
    &\leq O\left(\frac{\sqrt{m}\log\log m}{\log m}\right). \qedhere
    \end{align*}
\end{proof}

\subsection{$\gamma_2$-norm}
\label{sec:gamma-2-vs-br}
In this section, we give a non-explicit quadratic separation between sign spiky rank and $\gamma_2$-norm.
\begin{theorem}[\cref{thm:gamma-2-vs-br} restated.] \label{thm:gamma-2-vs-br-restated}
    There exists a Boolean matrix $M$ such that $$\spiky(M) \ge \Omega(\gamma_2^2(M)).$$
\end{theorem}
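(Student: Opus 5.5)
We want a Boolean matrix with large spiky rank but small $\gamma_2$-norm. The natural source of small $\gamma_2$ is sparsity structure: for a Boolean matrix in which every row (or every column) has at most $d$ ones, $\gamma_2(M) \le \sqrt{d}$. This follows by taking the trivial factorization $M = I\cdot M$, or more symmetrically from $\gamma_2(M)\le \|M\|_{\row}\cdot 1$, where the row norm of a $0/1$ row with $d$ ones is $\sqrt d$. So a $d$-regular bipartite adjacency matrix has $\gamma_2^2(M)\le d$. The plan is therefore to exhibit a Boolean matrix with at most $d$ ones per row and $\spiky(M)\ge\Omega(d)$. By the lemma on sparsity, we cannot hope for more than $O(\sqrt{Nd})$, but $\Omega(d)$ suffices.

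\textbf{Step 1 (upper bound on $\gamma_2$).} Write $M = U V$ with $U = M$ and $V = I_N$. Then $\|U\|_{\row} = \max_i \sqrt{\#\text{ones in row } i}\le\sqrt d$ and $\|V\|_{\col}=1$, giving $\gamma_2(M)\le \sqrt d$.

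\textbf{Step 2 (lower bound on spiky rank, via counting).} I will mimic \cref{lem:counting_spr} and \cref{thm:all_matrices_are_hard}, restricted to matrices with at most $d$ ones per row. Simplest concrete choice: take $N$ rows and $K$ columns, where $K$ is a parameter, and consider all $0/1$ matrices with exactly $d$ ones in each row. There are $\binom{K}{d}^N \ge (K/d)^{dN}$ such matrices. On the other hand, the number of $N\times K$ sign patterns of spiky rank at most $r$ is at most roughly $(N+K)^{O(r(N+K))}$, by the same Warren-theorem argument as in \cref{lem:counting_spr}: there are at most $(N+K)^{O(r(N+K))}$ choices of blocky patterns, and for each choice there are $r(N+K)$ real variables and degree-$2$ polynomials. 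A Boolean matrix has $\spiky_\pm \le \spiky+1$ after the shift $J-2M$, so it suffices to count sign patterns. Choosing $K = N$ and $d = N^{1/2}$ (any polynomial $d$ with $d\log(N/d)\gg r\log N$ works) gives
\[
(N/d)^{dN}=2^{\Omega(dN\log N)} > N^{O(rN)}
\]
whenever $r\le c\,d$ for a small constant $c$. Hence some such matrix has $\spiky(M)\ge \Omega(d)\ge\Omega(\gamma_2^2(M))$.

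\textbf{Main obstacle.} The only delicate point is making the Warren count cover rectangular or row-sparse families without losing the $\log N$ factor, i.e.\ making sure that $\log(N/d)=\Theta(\log N)$ so that the comparison yields $r=\Omega(d)$ rather than $\Omega(d/\log N)$. Choosing $d=\sqrt N$ handles this, since then $\log (N/d) = \tfrac12\log N$. One should also double-check the bound on the number of blocky patterns for an $N\times N$ matrix, which is at most $N^{2N}$ each, and that the polynomial count matches the condition $m\ge \ell$ in \cref{thm:warren}, which holds since $N^2\ge 2rN$.
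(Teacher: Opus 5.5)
Your proposal is correct and takes essentially the same route as the paper. The paper also compares the $\binom{N}{d}^N$ left-$d$-regular bipartite adjacency matrices against the $N^{6Nr}$ Warren-based bound of \cref{lem:counting_spr} to obtain spiky rank $\Omega(d)$, and uses the row-norm bound $\gamma_2(M)\le\sqrt d$; the only cosmetic differences are that the paper takes $d\le N^{1/4}$ and states the bound directly for $\spiky_\pm$, while you take $d=\sqrt N$ and pass through $J-2M$ at the cost of an additive $1$.
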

\begin{lemma}
\label{thm:large-br-bounded-deg}
    There exists an $N$-vertex bipartite graph $G$ of left degree at most $d \le N^{1/4}$ such that its adjacency matrix has $\spiky_\pm(M_{G}) \geq d/18$.
\end{lemma}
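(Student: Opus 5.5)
We will prove the lemma by a counting argument in the style of \cref{lem:counting_spr}: we compare the number of bipartite graphs with left degree at most $d$ against the number of sign patterns realizable with sign spiky rank $r<d/18$, and conclude that some graph is not realizable.

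\textbf{Counting graphs.} Take left and right parts both of size $N$, so that $M_G$ is an $N\times N$ Boolean matrix, and consider graphs in which every left vertex has exactly $d$ neighbours. Each row is a $d$-subset of $[N]$, so there are $\binom{N}{d}^N \ge (N/d)^{dN} \ge N^{3dN/4}$ such graphs, using $d\le N^{1/4}$. Distinct graphs have distinct sign matrices $J-2M_G$, and $\spiky_\pm$ of the sign matrix equals $\spiky_\pm(M_G)$ once zeros are handled. Concretely, $\spiky_\pm(M_G)$ is defined through a real $B$ with $\sign(B)=\sign(2M_G-J)$.

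\textbf{Counting sign patterns.} By \cref{lem:counting_spr}, the number of $N\times N$ sign matrices with sign spiky rank $r$ is at most $N^{6rN}$, since there are at most $N^{4rN}$ choices of blocky supports and, by Warren's theorem, at most $N^{2rN}$ sign patterns per choice. With $r< d/18$ this bound is $N^{dN/3}$, which is smaller than $N^{3dN/4}$. Hence some graph $G$ has $\spiky_\pm(M_G)\ge d/18$. The constants have slack; the $1/18$ presumably comes from a cruder bound such as $\binom{N}{d}\ge (N/d)^d\ge N^{d/3}$ combined with $6r$ exponents from both the support and Warren counts.

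\textbf{Main obstacle.} The hardest step is making sure that the Warren bound in \cref{lem:counting_spr} applies in this regime. It requires $r\le N/2$ and $m=N^2\ge \ell=2rN$, both of which hold since $d\le N^{1/4}$. The blocky-support count $N^{4rN}$ must also not dominate the comparison: it is $N^{4rN}=N^{4dN/18}$, so the total is $N^{6dN/18}=N^{dN/3}$, which is still below $N^{3dN/4}$. Finally, ``left degree at most $d$'' is satisfied because we chose graphs with left degree exactly $d$.
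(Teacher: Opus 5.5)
Your proof is correct and follows essentially the same approach as the paper. Both compare the $\binom{N}{d}^N$ left-$d$-regular bipartite graphs against the $N^{6rN}$ bound of \cref{lem:counting_spr} with $r = d/18$. Your estimate $\binom{N}{d}^N \ge (N/d)^{dN} \ge N^{3dN/4}$ is slightly cleaner than the paper's $(N/(2d))^{dN}$ bound, and you also check explicitly that Warren's theorem applies ($r \le N/2$), which the paper leaves implicit.
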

\begin{proof}
    The number of left-regular graphs with left degree $d$ is exactly $\binom{N}{d}^N$. On the other hand, the number of boolean matrices with sign spiky rank $r$ is at most $N^{6Nr}$ by \cref{lem:counting_spr}. It is then sufficient to have 
    \begin{equation}
    \label{eq:requirement}
        \binom{N}{d}^N > N^{6Nr} 
    \end{equation}
    to conclude that there exists a left-degree-$d$ graph $G$ with adjacency matrix $M_G$ such that $$\spiky_{\pm}(M_G) > r.$$ Choose $r = d/18$, then
    \( N^{6Nr} = N^{Nd/3}.\) On the other hand $\binom{N}{d}^N \ge N^{Nd} / (2d)^{Nd}$, so by rearranging \eqref{eq:requirement} is equivalent to
    \( N^{Nd - Nd / 3} > (2d)^{Nd}, \)
    which is satisfied if $d \le N^{1/4}$.
\end{proof}
\begin{proof}[Proof of \cref{thm:gamma-2-vs-br-restated}]
    Since $\ell_2$-norm of the rows of adjacency matrix of a left-$d$-regular graph is $\sqrt{d}$ we get that for an adjacency matrix $M$ of such a graph we have $\gamma_2(M) \le \sqrt{d}$. Hence, for a graph given by \cref{thm:large-br-bounded-deg} we have that $\br(M) \ge \spiky(M) \ge \spiky_{\pm}(M) \ge \Omega(\gamma_2^2(M))$. 
\end{proof}

Can this separation be made explicit? A variation on \cref{thm:lower-bound-exp} may offer a way to achieve it: it says that for an \emph{ultra-lossless} degree-$d$ expander $G$ we have $\spiky(M_G) = \Omega(d)$. Since $\gamma_2(M_G) \le \|M_G\|_{\row} \le \sqrt{d}$ for every $d$-regular graph $G$, it is another way to prove \cref{thm:gamma-2-vs-br} non-explicitly (albeit only for $\spiky$ and not for $\spiky_\pm$). The expanders required in \cref{thm:ultralossless-lb} are known to exist, but there are currently no explicit constructions, \cite{HLMRZ25} poses constructing these expanders as an open problem. 

Observe that $\HDone{n}$ shows a separation between approximate $\gamma_2$-norm and blocky rank, because approximate $\gamma_2$-norm of $\HDone{n}$ is $O(1)$ whereas its blocky rank at least $\sqrt{n}$ by~\cref{thm:lower-bound-hd1}.

\section*{Acknowledgments} 
We thank Nathan Harms for granting permission to include his proof of the upper bound in \cref{thm:sbr-upper-bound}, as well as for helpful feedback that improved the presentation of the paper. We are grateful to Dmitry Sokolov for bringing ultra-lossless expanders to our attention, and to Amir Yehudayoff for pointing us to Warren's theorem for the proof of \cref{thm:all_matrices_are_hard}. Finally, we thank Nati Linial, Shachar Lovett, Ryan Williams, and Amir Yehudayoff for many insightful discussions on spiky rank.

\bibliographystyle{alpha}
\bibliography{bib}

\newcommand{\etalchar}[1]{$^{#1}$}
\begin{thebibliography}{BBM{\etalchar{+}}21}

\bibitem[Alo86]{alon1986covering}
Noga Alon.
\newblock Covering graphs by the minimum number of equivalence relations.
\newblock {\em Combinatorica}, 6(3):201--206, 1986.

\bibitem[Alo21]{Alon21}
Noga Alon.
\newblock Explicit expanders of every degree and size.
\newblock {\em Combinatorica}, 41(4):447--463, 2021.

\bibitem[AMY16]{alon2016sign}
Noga Alon, Shay Moran, and Amir Yehudayoff.
\newblock Sign rank versus vc dimension.
\newblock In {\em Conference on Learning Theory}, pages 47--80. PMLR, 2016.

\bibitem[AN25]{applebaum2025meta}
Benny Applebaum and Oded Nir.
\newblock The meta-complexity of secret sharing.
\newblock In {\em Proceedings of the 57th Annual ACM Symposium on Theory of Computing}, pages 965--976, 2025.

\bibitem[AW17]{AW17}
Josh Alman and Ryan Williams.
\newblock Probabilistic rank and matrix rigidity.
\newblock In {\em Proceedings of the 49th Annual ACM SIGACT Symposium on Theory of Computing}, STOC 2017, page 641–652, New York, NY, USA, 2017. Association for Computing Machinery.

\bibitem[AY22]{AvrahamYehudayoff2022}
Daniel Avraham and Amir Yehudayoff.
\newblock On blocky ranks of matrices.
\newblock Technical Report TR22-137, Electronic Colloquium on Computational Complexity (ECCC), September 2022.

\bibitem[AY24]{DY24}
Daniel Avraham and Amir Yehudayoff.
\newblock On blocky ranks of matrices.
\newblock {\em computational complexity}, 33, 03 2024.

\bibitem[BBM{\etalchar{+}}21]{blockP4free2021}
Alexander~R. Block, Simina Br{\^a}nzei, Hemanta~K. Maji, Himanshi Mehta, Tamalika Mukherjee, and Hai~H. Nguyen.
\newblock {{P}}{$_4$}-free partition and cover numbers \& applications.
\newblock In Stefano Tessaro, editor, {\em 2nd {{Conference}} on {{Information-Theoretic Cryptography}} ({{ITC}} 2021)}, volume 199 of {\em Leibniz {{International Proceedings}} in {{Informatics}} ({{LIPIcs}})}, pages 16:1--16:25, Dagstuhl, Germany, 2021. Schloss Dagstuhl -- Leibniz-Zentrum f{\"u}r Informatik.

\bibitem[BHT25]{balla2025factorization}
Igor Balla, Lianna Hambardzumyan, and Istv{\'a}n Tomon.
\newblock Factorization norms and an inverse theorem for maxcut.
\newblock {\em arXiv preprint arXiv:2506.23989}, 2025.

\bibitem[CHZZ24]{CHZZ24}
Tsun~Ming Cheung, Hamed Hatami, Rosie Zhao, and Itai Zilberstein.
\newblock Boolean functions with small approximate spectral norm.
\newblock {\em Discrete Analysis}, sep 18 2024.

\bibitem[CMS20]{CMS20}
Arkadev Chattopadhyay, Nikhil~S. Mande, and Suhail Sherif.
\newblock The log-approximate-rank conjecture is false.
\newblock {\em Journal of the ACM}, 67, 2020.

\bibitem[DGJ{\etalchar{+}}10]{DGJSV10}
Ilias Diakonikolas, Parikshit Gopalan, Ragesh Jaiswal, Rocco~A. Servedio, and Emanuele Viola.
\newblock Bounded independence fools halfspaces.
\newblock {\em SIAM Journal on Computing}, 39(8):3441--3462, 2010.

\bibitem[Fra82]{frankl1982covering}
P\'eter Frankl.
\newblock Covering graphs by equivalence relations.
\newblock In {\em North-Holland Mathematics Studies}, volume~60, pages 125--127. Elsevier, 1982.

\bibitem[GG12]{GG73}
Martin Golubitsky and Victor Guillemin.
\newblock {\em Stable mappings and their singularities}, volume~14.
\newblock Springer Science \& Business Media, 2012.

\bibitem[GH25]{goh2025block}
Marcel~K Goh and Hamed Hatami.
\newblock Block complexity and idempotent schur multipliers.
\newblock {\em International Mathematics Research Notices}, 2025(24):356, 2025.

\bibitem[GHIS25]{GHIS25}
Mika G{\"o}{\"o}s, Nathaniel Harms, Valentin Imbach, and Dmitry Sokolov.
\newblock Sign-rank of $ k $-hamming distance is constant.
\newblock {\em arXiv preprint arXiv:2506.12022}, 2025.

\bibitem[GHR25]{goos2025equality}
Mika G{\"o}{\"o}s, Nathaniel Harms, and Artur Riazanov.
\newblock Equality is far weaker than constant-cost communication.
\newblock {\em arXiv preprint arXiv:2507.11162}, 2025.

\bibitem[Gri80]{grigoriev}
Dima Grigoriev.
\newblock An application of separability and independence notions for proving lower bounds of circuit complexity.
\newblock {\em J.Soviet Math.}, 14:1450--1456, 01 1980.

\bibitem[HHH23]{HHH23}
Lianna Hambardzumyan, Hamed Hatami, and Pooya Hatami.
\newblock Dimension-free bounds and structural results in communication complexity.
\newblock {\em Israel Journal of Mathematics}, 253(2):555--616, 2023.

\bibitem[HLM{\etalchar{+}}25]{HLMRZ25}
Jun-Ting Hsieh, Alexander Lubotzky, Sidhanth Mohanty, Assaf Reiner, and Rachel~Yun Zhang.
\newblock Explicit lossless vertex expanders.
\newblock {\em arXiv preprint arXiv:2504.15087}, 2025.

\bibitem[HLW06]{hoory06}
Shlomo Hoory, Nathan Linial, and Avi Wigderson.
\newblock Expander graphs and their applications.
\newblock {\em Bull. Amer. Math. Soc.}, 43(04):439–562, August 2006.

\bibitem[HP10]{hansen2010exact}
Kristoffer~Arnsfelt Hansen and Vladimir~V Podolskii.
\newblock Exact threshold circuits.
\newblock In {\em 2010 IEEE 25th Annual Conference on Computational Complexity}, pages 270--279. IEEE, 2010.

\bibitem[Juk06]{jukna2006graph}
Stasys Jukna.
\newblock On graph complexity.
\newblock {\em Combinatorics, Probability and Computing}, 15(6):855--876, 2006.

\bibitem[KSP20]{KSP20}
Rohan Karthikeyan, Siddharth Sinha, and Vallabh Patil.
\newblock On the resolution of the sensitivity conjecture.
\newblock {\em Bulletin of the American Mathematical Society}, 57:1, 03 2020.

\bibitem[KW16]{Kane_Williams2016}
Daniel~M Kane and Ryan Williams.
\newblock Super-linear gate and super-quadratic wire lower bounds for depth-two and depth-three threshold circuits.
\newblock In {\em Proceedings of the forty-eighth annual ACM symposium on Theory of Computing}, pages 633--643, 2016.

\bibitem[LS{\etalchar{+}}09]{lee2009lower}
Troy Lee, Adi Shraibman, et~al.
\newblock Lower bounds in communication complexity.
\newblock {\em Foundations and Trends{\textregistered} in Theoretical Computer Science}, 3(4):263--399, 2009.

\bibitem[MB17]{mukherjee2017lower}
Anirbit Mukherjee and Amitabh Basu.
\newblock Lower bounds over boolean inputs for deep neural networks with relu gates.
\newblock {\em arXiv preprint arXiv:1711.03073}, 2017.

\bibitem[PR94]{PudlakRodl}
P.~Pudlák and V.~Rödl.
\newblock Some combinatorial-algebraic problems from complexity theory.
\newblock {\em Discrete Mathematics}, 136(1):253--279, 1994.

\bibitem[PSS23]{pitassiStrength2023}
Toniann Pitassi, Morgan Shirley, and Adi Shraibman.
\newblock The strength of equality oracles in communication.
\newblock In Yael Tauman~Kalai, editor, {\em 14th {{Innovations}} in {{Theoretical Computer Science Conference}} ({{ITCS}} 2023)}, volume 251 of {\em Leibniz {{International Proceedings}} in {{Informatics}} ({{LIPIcs}})}, pages 89:1--89:19, Dagstuhl, Germany, 2023. Schloss Dagstuhl -- Leibniz-Zentrum f{\"u}r Informatik.

\bibitem[Raz89]{razborovRigidity}
Alexander Razborov.
\newblock On rigid matrices (in russian).
\newblock 1989.

\bibitem[ROS94]{roychowdhury1994lower}
Vwani~P Roychowdhury, Alon Orlitsky, and Kai-Yeung Siu.
\newblock Lower bounds on threshold and related circuits via communication complexity.
\newblock {\em IEEE Transactions on Information Theory}, 40(2):467--474, 1994.

\bibitem[RY20]{RaoYehudayoffBook}
Anup Rao and Amir Yehudayoff.
\newblock {\em Communication Complexity and Applications}.
\newblock January 2020.
\newblock Publisher Copyright: {\textcopyright} Anup Rao and Amir Yehudayoff 2020.

\bibitem[SS96]{SS96}
M.~Sipser and D.A. Spielman.
\newblock Expander codes.
\newblock {\em IEEE Transactions on Information Theory}, 42(6):1710--1722, 1996.

\bibitem[Val77]{valiant1977graph}
Leslie~G Valiant.
\newblock Graph-theoretic arguments in low-level complexity.
\newblock In {\em Mathematical Foundations of Computer Science 1977: Proceedings, 6th Symposium, Tatransk{\'a} Lomnica September 5--9, 1977 6}, pages 162--176. Springer, 1977.

\bibitem[VC71]{VC71}
V.~N. Vapnik and A.~Ya. Chervonenkis.
\newblock On the uniform convergence of relative frequencies of events to their probabilities.
\newblock {\em Theory of Probability \& Its Applications}, 16(2):264--280, 1971.

\bibitem[War68]{warren1968lower}
Hugh~E Warren.
\newblock Lower bounds for approximation by nonlinear manifolds.
\newblock {\em Transactions of the American Mathematical Society}, 133(1):167--178, 1968.

\bibitem[Wil18]{williams2018limits}
Ryan Williams.
\newblock Limits on representing boolean functions by linear combinations of simple functions: Thresholds, relus, and low-degree polynomials.
\newblock {\em arXiv preprint arXiv:1802.09121}, 2018.

\bibitem[Wil24a]{williamsOrthogonal2024}
Ryan Williams.
\newblock The {{Orthogonal Vectors Conjecture}} and non-uniform circuit lower bounds.
\newblock In {\em 2024 {{IEEE}} 65th {{Annual Symposium}} on {{Foundations}} of {{Computer Science}} ({{FOCS}})}, pages 1372--1387, October 2024.

\bibitem[Wil24b]{williams2024personal_communication}
Ryan Williams.
\newblock Personal communication.
\newblock 2024.

\bibitem[Wun12]{wunderlich2012theorem}
Henning Wunderlich.
\newblock On a theorem of razborov.
\newblock {\em computational complexity}, 21:431--477, 2012.

\end{thebibliography}
\appendix 

\section{Sign and Approximate versions of Blocky rank}\label{sec:approx_sign}

\subsection{Blocky rank and Sign Blocky Rank}
\label{sec:br-vs-sign-br}
We refer to \cite{HHH23} for the definition of $\D^{\textsc{Eq}}$: deterministic communication complexity with \textsc{Equality} oracle.
\begin{theorem}
\label{thm:br-vs-sign-br}
    $\br(M) \le 2^{\br_{\pm}(M)}$.
\end{theorem}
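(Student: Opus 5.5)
Set $r=\sbr(M)$ and fix a witness $A=\sum_{i\in[r]}\alpha_i B_i$ with $\sign(A)=\sign(M)$, where each $B_i$ is blocky. I will not try to control the real values of $A$. The plan is to show that the entries of $M$ depend only on the \emph{pattern} of which $B_i$ are on, and then to expand that dependence as a multilinear polynomial in the $B_i$.

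\textbf{Step 1 (entries are determined by the pattern).} For an entry $(x,y)$, define its pattern vector $z(x,y)=(B_1[x,y],\dots,B_r[x,y])\in\{0,1\}^r$. Then
\[
A[x,y]=\sum_{i\in[r]}\alpha_i z_i(x,y)
\]
depends only on $z(x,y)$. Hence $M[x,y]$ also depends only on $z(x,y)$: for a sign matrix, $M[x,y]=\sign(A[x,y])$, and for a Boolean matrix viewed through its sign pattern we have $M[x,y]=\mathbb{1}[A[x,y]>0]$. So there is a function $f\colon\{0,1\}^r\to\R$ with $M[x,y]=f(z(x,y))$ for all $(x,y)$.

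\textbf{Step 2 (multilinear expansion).} Write $f$ in its unique multilinear form
\[
f(z)=\sum_{S\subseteq[r]} c_S\prod_{i\in S} z_i,
\]
which is the Möbius inversion over subsets and has at most $2^r$ terms. Substituting entrywise gives
\[
M=\sum_{S\subseteq[r]} c_S\, B_S, \qquad B_S:=\bigcirc_{i\in S} B_i,
\]
where $B_S$ is the entrywise product and $B_\emptyset=J$ is the all-ones matrix.

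\textbf{Step 3 (each term is blocky).} The entrywise product of two blocky matrices is blocky: the intersection of two "disjoint unions of rectangles with disjoint row sets and disjoint column sets" is again of this form, with blocks $(S_a\cap S'_b)\times(T_a\cap T'_b)$. This fact was already used in \cref{lem:bc-to-br}. Also, $J$ is blocky. So $M$ is a linear combination of at most $2^r$ blocky matrices, and therefore $\br(M)\le 2^{\sbr(M)}$.

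\textbf{Expected obstacle.} The only delicate point is Step 1. One has to make sure the statement is read for matrices where $\sign(M)$ determines $M$, namely $\pm1$ matrices, or Boolean matrices under the $0\mapsto-1$ identification. Once that is fixed, $f$ is well defined, and the rest is the inclusion–exclusion bookkeeping above.
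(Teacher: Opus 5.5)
Your proof is correct, and it takes a more direct route than the paper's. The paper argues through communication with an Equality oracle. It first shows $\D^{\textsc{Eq}}(M)\le\br_{\pm}(M)$, using the protocol in which Alice and Bob make the $r$ queries $B_1(x,y),\dots,B_r(x,y)$ and then evaluate $\sign\bigl(\sum_i\alpha_iB_i(x,y)\bigr)$ locally. It then combines this with the bound $\D^{\textsc{Eq}}(M)\ge\log\br(M)$, cited from HHH23.

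Your argument inlines both steps. Step 1 is exactly the correctness of that protocol: the entry is a function of the $r$ query answers. Steps 2--3 are the inclusion--exclusion expansion behind the cited lower bound, specialized to queries fixed in advance. This buys self-containment and makes the constant transparent. Since the $r$ queries are non-adaptive, there are only $2^r$ monomials $B_S$, so you get exactly $2^{r}$. You do not rely on the precise form of the general bound for adaptive equality protocols, where a tree expansion is less clean because each query node contributes a factor $J-B_v$. The paper's route, in turn, records the intermediate inequality $\D^{\textsc{Eq}}(M)\le\br_{\pm}(M)$, which places sign blocky rank inside the equality-oracle framework.

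Your caveat is also right: the statement must be read for $\pm1$ matrices, or for Boolean matrices with $0$ read as the negative sign. The paper's proof assumes this implicitly when it writes $M(x,y)=\sign(\cdot)$. Finally, $f$ being determined only on realized patterns is harmless, since any extension to $\{0,1\}^r$ works.
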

\begin{proof}
    As shown in \cite{HHH23} we have $\D^{\textsc{Eq}}(M) \ge \log \br(M)$. On the other hand observe that $\D^{\textsc{Eq}}(M) \le \br_{\pm}(M)$: let $M = \sign(\alpha_1  B_1 + \dots + \alpha_r B_r)$ for blocky matrices $B_1, \dots, B_r$. Alice and Bob then compute $B_i(x,y)$ for each $i \in [r]$, where $x$ and $y$ are their inputs, taking $r$ \textsf{Equality} queries. After this computation both of them can compute $M(x,y) = \sign(\sum_{i \in [r]} B_i(x,y) \alpha_i)$ locally. Taking the upper and the lower bound for $D^{\textsc{Eq}}(M)$ together implies the claimed relation.
\end{proof}

\subsection{Sign Blocky Rank Bound of $\HDone{n}$}
We show a tight bound for the sign blocky rank of $\HDone{n}$.
\begin{theorem} 
\label{thm:sbr-upper-bound}
    $\sbr(\HDone{n}) = \Theta(\log n)$.
\end{theorem}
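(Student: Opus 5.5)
The proof has two halves, and both run through equality-oracle communication. Let $M=\HDone{n}$, viewed as a sign matrix with $+1$ on pairs at distance one and $-1$ elsewhere.

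\emph{Lower bound $\sbr(\HDone{n})=\Omega(\log n)$.} By the argument of \cref{thm:br-vs-sign-br}, a sign blocky representation with $r$ blocky matrices yields an $\D^{\textsc{Eq}}$ protocol of cost $r$. So it suffices to show $\D^{\textsc{Eq}}(\HDone{n})=\Omega(\log n)$. I will use the standard adversary argument: a protocol of cost $r$ partitions the input space into at most $2^r$ ``leaves''. Each leaf is described by equality/inequality constraints between functions $f_i(x)$ and $g_i(y)$. Fix $y=0^n$ and take the $n$ inputs $x=e_1,\dots,e_n$, each of which is a $1$-input. I would instead look for an embedded hard instance. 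In the $n\times n$ submatrix with rows $x=e_i$ and columns $y=e_j$, all entries are $0$, since $e_i$ and $e_j$ are at distance $2$ or $0$. Adding the column $y=0$ gives a column of all ones, which is trivial. A cleaner choice is Greater-Than: take $x=1^i0^{n-i}$ and $y=1^j0^{n-j}$. Then $\HDone{n}(x,y)=1$ iff $|i-j|=1$, so this $n\times n$ submatrix is the ``adjacent indices'' matrix. Its $\D^{\textsc{Eq}}$ complexity is $\Omega(\log n)$, because equality queries on a path-like structure cannot separate $n$ consecutive indices with fewer than $\log n$ queries. The argument is the standard one for Greater-Than: maintain a long interval of indices on which all queries so far are answered ``not equal''. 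Each query eliminates at most one index per side, and halving the interval keeps the answers consistent while both $1$ and $0$ outputs remain possible. The main obstacle is making this adversary rigorous; if it gets messy, I would instead use the known $\Omega(\log n)$ bound for blocky rank of Greater-Than from \cite{DY24}, adapted through its $\D^{\textsc{Eq}}$ characterization.

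\emph{Upper bound $\sbr(\HDone{n})=O(\log n)$.} Here I construct an explicit protocol. Write $\HDone{n}(x,y)=1$ iff $x\oplus y=e_i$ for some $i$. Use the parity/index encoding: for each bit position $b\in[\lceil\log n\rceil]$, Alice computes $a_b=\bigoplus_{i:\,i_b=1}x_i$ and Bob computes the analogous $c_b$. Let $a_0$ and $c_0$ be the total parities. If $x\oplus y=e_i$, then $a_0\ne c_0$ and the vector $(a_b\oplus c_b)_b$ equals the binary expansion of $i$.

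Alice and Bob first test $a_0\ne c_0$, which costs one equality query. Alice then guesses the index $i$ via the bits $a_b\oplus c_b$; the parity comparisons for all $b$ together cost $O(\log n)$ queries. Let $x'=x\oplus e_i$. A final equality query checks $x'=y$. This gives $O(\log n)$ equality queries, all non-adaptive once $i$ is fixed. Strictly, the value of $i$ is not known to either party, but the query ``$x\oplus e_{i(a\oplus c)}=y$'' can be expanded: equivalently, $(x, a)$ and $(y, c)$ satisfy a joint equality after each party applies a function of its own input together with the guessed bits.

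I would convert this into a sign blocky representation by summing indicator blocky matrices of all $2^{O(\log n)}=\poly(n)$ answer patterns. That conversion is too costly, since it gives $\poly(n)$ terms rather than $O(\log n)$. So the real plan for the upper bound is to write
\[\sign\Big(\sum_b \pm \mathbb 1[a_b=c_b] + \text{const}\cdot \mathbb 1[\text{final eq}]\Big)\]
directly, and to choose the weights so that the final equality dominates. That domination step, written as a linear threshold of $O(\log n)$ blocky indicators, is where I expect the technical care to lie.
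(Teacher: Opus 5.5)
Both halves of the proposal have a genuine gap.

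\textbf{Lower bound.} Reducing to $\D^{\textsc{Eq}}(\HDone{n})=\Omega(\log n)$ via the argument of \cref{thm:br-vs-sign-br} is fine. The embedded instance, however, cannot give this bound.

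\begin{itemize}
\item On $x=1^i0^{n-i}$, $y=1^j0^{n-j}$ the submatrix is $\mathbb{1}[j=i+1]+\mathbb{1}[j=i-1]$. This is a sum of two partial permutation matrices, so it has $\br\le 2$ and is computed by the two equality queries ``$i+1=j$?'' and ``$i-1=j$?''.
\item Your adversary claim that each query removes at most one index per side is false. An equality query compares arbitrary functions $f(i)$ and $g(j)$.
\item The Greater-Than fallback does not apply, since the submatrix you found is not Greater-Than. It would not suffice anyway: a Greater-Than submatrix of $\HDone{n}$ has dimension at most $n+1$, because every row of $\HDone{n}$ has only $n$ ones. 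For such a matrix both $\D^{\textsc{Eq}}$ and $\log\br$ are $O(\log\log n)$.
\end{itemize}

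To reach $\Omega(\log n)$ through $\br(M)\le 2^{\sbr(M)}$ you need $\br(\HDone{n})\ge n^{\Omega(1)}$. That is exactly what the paper uses: $\br(\HDone{n})\ge\spiky(\HDone{n})=\Omega(\sqrt n)$ by \cref{thm:lower-bound-hd1}, hence $\sbr(\HDone{n})\ge\tfrac12\log_2 n-O(1)$ by \cref{thm:br-vs-sign-br}.

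\textbf{Upper bound.} Your adaptive protocol does show $\D^{\textsc{Eq}}(\HDone{n})=O(\log n)$. But only $\D^{\textsc{Eq}}\le\sbr$ holds, so this says nothing about $\sbr$.

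Your direct formula fails because the ``final equality'' term is not blocky. If $x\oplus y=e_\ell$, the syndrome $a\oplus c$ decodes to $\ell$. So $\mathbb{1}[x\oplus e_{i(a\oplus c)}=y]$ is $\HDone{n}$ itself, of blocky rank $\Omega(\sqrt n)$. Splitting it by the value of $i$ just recovers the trivial sum of $n$ permutation matrices.

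No reweighting of the parity comparisons (even together with $I$ and $J$) can replace this term. A weight-three difference $e_p+e_q+e_r$ with $p\oplus q\oplus r=\ell$ has the same parity syndrome as $e_\ell$, so every parity test answers identically on the two pairs.

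The missing idea is to test equality on coordinate restrictions instead of parities. For $b\in[\log_2 n]$ and $j\in\{0,1\}$, let $E_{b,j}[x,y]=\mathbb{1}[x_t=y_t\text{ for all }t\text{ with }t_b=j]$, which is blocky. The number of failed tests is:
\begin{itemize}
\item $0$ if $x=y$;
\item exactly $k=\log_2 n$ if $x,y$ differ in one coordinate;
\item at least $k+1$ if they differ in coordinates $\ell_1\neq\ell_2$, since at a bit where $\ell_1$ and $\ell_2$ differ, both tests fail.
\end{itemize}
Hence $\HDone{n}=\sign\bigl(\sum_{b,j}E_{b,j}-(k+1)I-(k-\tfrac12)J\bigr)$, a combination of $2k+2$ blocky matrices. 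This is the paper's construction.
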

\begin{proof}
    The lower bound follows directly from \cref{thm:lower-bound-hd1} and \cref{thm:br-vs-sign-br}, so we proceed to prove the upper bound.
    Without loss of generality, suppose that $n$ is a power of $2$, otherwise increase $n$ so it is. Let $M_{i,j}[x,y]$ for $i \in [\log_2 n]$, $j \in \{0,1\}$, $x,y \in \{0,1\}^n$ be defined as follows: $M_{i,j}[x,y] = 0$ iff $x$ and $y$ coincide on all indices $t$ with the $i$th bit in the representation equal to $j$ and $1$ otherwise. For example if $M_{i,0}[x,y] = M_{i,1}[x,y]=0$ then $x = y$ and vise versa. If $x$ and $y$ differ in exactly one position $\ell \in [n]$ then for every $i \in [\log_2 n]$ we have $M_{i,0}[x,y] + M_{i,1}[x,y] = 1$. If $x$ and $y$ differ in positions $\ell_1 \neq \ell_2$, then for $i \in [\log_2 n]$ that corresponds to the difference in bit representations of $\ell_1$ and $\ell_2$ we get $M_{i,0}[x,y] + M_{i,1}[x,y] = 2$. 

    Therefore $x,y$ differ in exactly one position iff $\sum_{i\in[\log_2 n]; j \in \{0,1\}} M_{i,j}[x,y] \le k \coloneqq \log_2 n$ and $x \neq y$. Let $I \in \{0,1\}^{2^n \times 2^n}$ be the identity matrix and $J \in \{0,1\}^{2^n \times 2^n}$ be the all-$1$ matrix. Then
    \[ \HDone{n} = \sign\left(-\sum_{i \in [\log_2 n];\, j \in \{0,1\}} M_{i,j} - (k+1) I + (k + 1/2) J\right). \qedhere\]
\end{proof}

\subsection{Approximate Blocky Rank Upper Bound for $\HDone{n}$}
\label{sec:approx-br-bound}
In this section we adapt \cref{thm:sbr-upper-bound} to the setting of approximate blocky rank. 

We are going to use two simple technical claims for the error amplification.
\begin{lemma}
    \label{lem:blocky-rank-polynomial}
    Let $M \in \mathbb{R}^{n \times n}$ be a matrix and $p$ be a univariate polynomial over reals. Let $p(M) \in \mathbb{R}^{n \times n}$ be defined by $p(M)[x,y] \coloneqq p(M[x,y])$. We then have $\br(p(M)) \le \br(M)^{\deg(p)} \cdot \deg(p)$.
\end{lemma}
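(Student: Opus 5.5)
Write $r=\br(M)$, $M=\sum_{i\in[r]}\alpha_i B_i$ with each $B_i$ blocky, and $p(z)=\sum_{e=0}^{d}c_e z^e$ with $d=\deg(p)$. The plan is to expand $p$ entrywise and use that blocky matrices are closed under entrywise products, so each monomial term is a short combination of blocky matrices.

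\emph{Powers.} For each $e\ge1$, multinomial expansion gives
\[ M^{\circ e}=\sum_{(i_1,\dots,i_e)\in[r]^e}\alpha_{i_1}\cdots\alpha_{i_e}\, B_{i_1}\circ\cdots\circ B_{i_e}. \]
The same fact used in the proof of \cref{lem:bc-to-br} shows each $B_{i_1}\circ\cdots\circ B_{i_e}$ is blocky. Because the $B_i$ are $0/1$ matrices, $B_i\circ B_i=B_i$, so each product equals $B_S:=\circ_{i\in S}B_i$ with $S=\{i_1,\dots,i_e\}$ nonempty of size at most $e$. Hence $M^{\circ e}$ is a linear combination of the matrices $B_S$ with $1\le|S|\le e$, so $\br(M^{\circ e})\le\min(r^e,\sum_{j=1}^{e}\binom rj)$.

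\emph{Combining degrees.} For $e\ge1$, all nonzero coefficients of $p(M)$ collapse onto the family $\{B_S:1\le|S|\le d\}$, whose size is $\sum_{j=1}^{d}\binom{r}{j}\le\sum_{j=1}^{d}r^j$. The constant term is $c_0J$, and $J$ is blocky, so it adds one more matrix. This gives
\[ \br(p(M))\le 1+\sum_{j=1}^{d}\binom rj. \]
For $r\ge2$, $\sum_{j=1}^{d}\binom rj\le\sum_{j=1}^{d}r^j/j!$, which is at most $r^d\le d\,r^d-1$ when $d\ge2$. For $d=1$ the term-by-term count is $r+1$. For $r=1$ the family is $\{B_1,J\}$, at most $2$.

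\emph{Main obstacle.} The only delicate point is the constant term and small edge cases ($d=1$ with $c_0\ne0$, or $r=1$), where the clean bound $d\,r^d$ is off by one. The fix I would try first is to absorb $J$ when $M$ has full support, since then $J$ lies in the span of $\{B_S\}$. Otherwise I would note that the lemma is applied with large $r$ and $d$ in the error-amplification argument, where the $+1$ is immaterial.
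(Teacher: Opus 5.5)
Your proof is essentially the paper's: expand each Hadamard power $M^{\circ e}$ multinomially and use that entrywise products of blocky matrices are blocky. Your idempotence refinement and explicit bookkeeping of the $c_0J$ term are in fact more careful than the paper's one-line reduction to $p=x^i$, which only yields $\sum_{i=0}^{d}\br(M)^i$ and so exceeds $\deg(p)\,\br(M)^{\deg(p)}$ when $\deg(p)=1$ or $\br(M)=1$.

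The off-by-one you found is a defect of the statement, not of your argument. For example, $M=I_n$ ($n\ge 2$) with $p(z)=z+1$ gives $\br(I_n+J)=2>1$. Even with full support it fails: take $M=B_1+2B_2$ with $B_1,B_2$ the indicators of rows $\{1,2\}$ and $\{2,3\}$ of a $3\times 3$ matrix. Then $\br(M)=2$ but $\br(M+J)=3$, since $\{2,3,4\}$ is not of the form $\{a,b,a+b\}$. So your ``absorb $J$'' fix cannot work, because inclusion--exclusion expresses $J$ via $B_S$ with $|S|$ up to $\br(M)$, not $\deg(p)$. Your fallback is the right resolution: the bound holds whenever $\deg(p)\ge 2$ and $\br(M)\ge 1$, which covers the high-degree amplification polynomial where the lemma is used.
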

\begin{proof}
    It suffices to prove the lemma for $p = x^i$. Then for $M = \sum_{j \in [\br(M)]} \alpha_j B_j$ we have $p(M) = M^{\circ i}$ (where $M^{\circ i}$ is entrywise product of $i$ copies of $i$). Then we have $M^{\circ i} = \sum_{j_1, \dots, j_i \in [\br(M)]} \prod_{k \in [i]} \alpha_k \cdot B_{j_1} \circ \dots \circ B_{j_i}$. The proof is finished by observing that entrywise product of blocky matrices is a blocky matrix.
\end{proof}
The following claim directly follows from the  Chernoff-Hoeffding bound.
\begin{lemma}[see e.g. \cite{DGJSV10}]
    \label{lem:amplifying-poly}
    Let $A_k(x) \coloneqq \sum_{i=k/2}^k \binom{k}{i} x^i (1-x)^{k-i}$. Then for $x \in [0,1/2-\delta)$ we have $A_k(x) \le e^{-2k/\delta^2}$ and for $x \in (1/2+\delta, 1]$ we have $A_k(x) \ge 1 - e^{-2k/\delta^2}$.
\end{lemma}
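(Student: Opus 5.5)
We read $A_k(x)$ probabilistically. Let $X_1,\dots,X_k$ be i.i.d.\ Bernoulli random variables with parameter $x\in[0,1]$, and let $S=\sum_{i\in[k]}X_i$. Then $\binom{k}{i}x^i(1-x)^{k-i}=\Pr[S=i]$, so
\[
A_k(x)=\Pr\left[S\ge \lceil k/2\rceil\right],
\]
where the lower summation index is read as $\lceil k/2\rceil$ when $k$ is odd. Both claims then become tail estimates for a binomial variable with mean $\mathbb{E}[S]=kx$. We apply Hoeffding's inequality for sums of $k$ independent $[0,1]$-valued variables:
\[
\Pr[S-\mathbb{E}S\ge t]\le e^{-2t^2/k},\qquad \Pr[S-\mathbb{E}S\le -t]\le e^{-2t^2/k}.
\]

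For the first case, $x\in[0,1/2-\delta)$, the event $S\ge k/2$ forces $S-kx\ge k/2-kx> k\delta$. Taking $t=k\delta$ gives
\[
A_k(x)\le e^{-2(k\delta)^2/k}=e^{-2k\delta^2}.
\]

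For the second case, $x\in(1/2+\delta,1]$, the complement satisfies $1-A_k(x)=\Pr[S<k/2]$. The event $S<k/2$ forces $S-kx<k/2-kx<-k\delta$. The lower-tail Hoeffding bound with the same $t$ gives $1-A_k(x)\le e^{-2k\delta^2}$. Rounding $k/2$ up to $\lceil k/2\rceil$ only helps in the first case, and leaves the second unchanged since it concerns the event $S<k/2$.

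The only real obstacle is the form of the exponent. Hoeffding naturally produces $e^{-2k\delta^2}$, whereas the statement writes $e^{-2k/\delta^2}$. For small $\delta$ the latter bound is much smaller, and it cannot hold in general. For example, at $x$ just below $1/2-\delta$ with $k$ fixed and $\delta\to 0$, $A_k(x)$ stays bounded away from $0$ while $e^{-2k/\delta^2}\to 0$. We therefore read the intended bound as $e^{-2k\delta^2}$, consistent with the cited Chernoff--Hoeffding reference. This is also all that the subsequent error amplification needs: choosing $k=O(\delta^{-2}\log(1/\epsilon))$ makes the error at most $\epsilon$.
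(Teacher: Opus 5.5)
Your argument is correct and matches the paper, which likewise says only that the lemma follows from the Chernoff--Hoeffding bound applied to the binomial tail $\Pr[S\ge k/2]$. You are also right that the printed exponent $e^{-2k/\delta^2}$ is a typo for $e^{-2k\delta^2}$: the stated version already fails for $k=2$, $\delta=0.1$, $x=0.39$, where $A_2(x)=1-(1-x)^2\approx 0.63$, and the paper's later choice $k=200K^2\log(1/\epsilon)$ with $\delta=1/(10K)$ only makes sense with the corrected form $e^{-2k\delta^2}$.
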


\begin{theorem}
    $\br_\epsilon(\HDone{n}) = (\log n)^{O(\log (1/\epsilon))}$.
\end{theorem}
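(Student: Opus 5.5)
We build on the construction from the proof of \cref{thm:sbr-upper-bound}. Set $k \coloneqq \log_2 n$ and $\Sigma \coloneqq \sum_{i\in[k],\, j\in\{0,1\}} M_{i,j}$. That proof shows the following for inputs $x,y$:
\begin{itemize}
\item if $x=y$, then $\Sigma[x,y]=0$;
\item if $x,y$ differ in exactly one position, then $\Sigma[x,y]=k$;
\item if $x,y$ differ in two or more positions, then $\Sigma[x,y]\ge k+1$ (every $i$ contributes at least $1$, and some $i$ contributes $2$).
\end{itemize}
In all cases $\Sigma[x,y]\le 2k$. Define the real-valued matrix
\[
Q \coloneqq \tfrac{1}{2} + \Sigma - (k+1)I - (k+\tfrac12)J.
\]
On $\HDone{n}$-entries we have $Q = -\tfrac12$. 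On all other entries $Q \le -1$, except that $Q=0$ on the diagonal. The blocky rank is $\br(Q)\le 2k+2 = O(\log n)$.

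It is cleaner to build a quantity $Z$ with value exactly $1$ when $\Sigma=k$ (and $x\ne y$), and value $0$ otherwise. To do this, consider the univariate polynomial $q(z)=\prod_{t\in\{0,\dots,2k\}\setminus\{k\}} \frac{z-t}{k-t}$. It interpolates the indicator of $z=k$ on the integers $\{0,\dots,2k\}$. Then $q(\Sigma)$ equals $\HDone{n}$ exactly, and \cref{lem:blocky-rank-polynomial} gives $\br(\HDone{n})\le (2k)^{2k}\cdot 2k$, which is $(\log n)^{O(\log n)}$. That bound is too weak, because the degree grows with $k$. The fix is to avoid exact interpolation and exploit the $\epsilon$ slack.

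The plan is to map $\Sigma$ into $[0,1]$ with a constant gap, and then amplify using $A_m$ from \cref{lem:amplifying-poly}. Set $W \coloneqq 1-(\Sigma-k)$, computed off the diagonal as $W=(k+1)J-\Sigma-(k+1)I$, so the diagonal is handled by the $I$ term. Then:
\begin{itemize}
\item on $\HDone{n}$-entries, $W=1$;
\item on entries at distance at least $2$, $W\le 0$, and moreover $W\ge 1-k$;
\item on the diagonal, $W=0$.
\end{itemize}
The main obstacle is that $W$ is unbounded below, roughly by $-k$, while $A_m$ is only controlled on $[0,1]$.

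To handle this I would first apply a low-degree squashing polynomial. The values of $W$ lie in the integer set $\{1-k,\dots,1\}$, and I need $p$ with $p(1)\ge 3/4$ and $p(w)\in[0,1/4]$ for the remaining integers. The Chebyshev approach gives degree $O(\sqrt{k})$. A truly constant degree is impossible, since this is essentially approximating a point indicator on $k$ points.

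Instead I would use the product structure. Let $D_i \coloneqq M_{i,0}+M_{i,1}-1$, which has blocky rank $3$. Then $D_i=0$ for every $i$ exactly on $\HDone{n}$ entries, while for distance-$\ge2$ entries at least one $D_i=1$. Moreover each $D_i\in\{-1,0,1\}$, with $-1$ only when $x=y$.

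So $\HDone{n}=(J-I)\circ\prod_{i}(1-D_i^{\circ 2})$, which is exact but has degree $2k$. To approximate, I would instead randomize: sample $s=O(\log(1/\epsilon))$ random subsets of coordinates of $[k]$, form random parities of the $D_i$'s, and combine by the OR via $A_m$ with $m=O(\log(1/\epsilon))$. Each base matrix then has blocky rank $O(k)$, and the total degree is $O(\log 1/\epsilon)$. \cref{lem:blocky-rank-polynomial} then gives $(\log n)^{O(\log(1/\epsilon))}$. A blocky matrix with random $\pm1$ coefficients over $i$ detects a nonzero $D$ vector with probability $\ge1/2$.

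Concretely, I take $Y_r \coloneqq \sum_i \sigma_{r,i}D_i$ and use $Y_r^{\circ2}$, so that the result is $0$ on $\HDone{n}$ entries and, with probability at least $1/2$, at least $1$ off them. I then aggregate $O(\log 1/\epsilon)$ copies. The hardest step is making this deterministic and bounded; I would fix a good sign choice by averaging (a fixed family of signs from a small-bias set) and verify the $\ell_\infty$ error bound entrywise.
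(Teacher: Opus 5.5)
Your diagnosis of the obstacle for $\Sigma$ is correct, but the repair via the $D_i$ does not remove it; it only moves it. Off the diagonal each $D_i\in\{0,1\}$, so a nonzero pattern is already detected deterministically by $\sum_i D_i=\Sigma-k$. Random signs therefore add nothing to detection.

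The actual problem is a bounded range with a constant margin, and that is untouched. $Y_r$ ranges over $\{-k,\dots,k\}$ and $Y_r^{\circ 2}$ lies in $[0,k^2]$, while the separation is still only $0$ versus $\ge 1$. Rescaling into $[0,1]$ shrinks the margin to $1/k^2$, so $A_m$ would need degree growing with $k$. Recall that $A_m$ is controlled only on $[0,1]$ with a constant margin around $1/2$.

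No choice of signs or aggregation can rescue this. For a distance-$2$ pair with difference set $\{\ell_1,\ell_2\}$ one has $D=\mathrm{bin}(\ell_1)\oplus\mathrm{bin}(\ell_2)$, so every nonzero $D\in\{0,1\}^k$ occurs, while $\HDone{n}$-entries have $D=0$. Suppose a matrix equals a real polynomial $P(D_1,\dots,D_k)$ off the diagonal and $\epsilon$-approximates $\HDone{n}$. Then $P$ $\epsilon$-approximates $\mathrm{NOR}_k$ on $\{0,1\}^k$. For $\epsilon\le 1/3$ this forces $\deg P=\Omega(\sqrt{k})=\Omega(\sqrt{\log n})$ by Nisan--Szegedy, whereas your plan has $\deg P=O(\log(1/\epsilon))$. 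So your final step (``verify the $\ell_\infty$ error bound entrywise'') must fail for large $n$. A minor slip as well: on $\HDone{n}$-entries your $Q$ equals $0$, not $-\tfrac12$, though $Q$ is not used later.

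The missing idea is to change the encoding of the positions \emph{before} forming the blocky matrices, so that the gap becomes a constant fraction of the range. Index each position $\ell\in[n]$ by a codeword $C_\ell$ of a code $C\subseteq\{0,1\}^{L}$, with $L=K\log_2 n$ and relative distance at least $1/5$. For $i\in[L]$ and $j\in\{0,1\}$, let $M_{i,j}[x,y]=0$ iff $x,y$ agree on all $\ell$ with $(C_\ell)_i=j$. Then $\Sigma'=\sum_{i,j}M_{i,j}$ behaves as follows:
\begin{itemize}
\item it equals exactly $L$ on $\HDone{n}$-entries;
\item it is at least $L+L/5$ on distance-$\ge 2$ entries, because every $i$ contributes at least $1$ and each of the $\ge L/5$ coordinates where $C_{\ell_1},C_{\ell_2}$ differ contributes $2$;
\item it is at most $2L$ everywhere, and $0$ on the diagonal.
\end{itemize}
An affine rescaling by $1/L$, using $I$ and $J$, gives a matrix of blocky rank $O(\log n)$ with entries in $[0,1]$ and a constant margin around $1/2$. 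Then $A_m$ with $m=O(\log(1/\epsilon))$ and \cref{lem:blocky-rank-polynomial} give $(\log n)^{O(\log(1/\epsilon))}$. This is the paper's proof.

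Your small-bias instinct points the right way. A small-bias set $\{\sigma_r\}\subseteq\mathbb{F}_2^k$ yields exactly such a linear code via $\ell\mapsto(\langle\sigma_r,\mathrm{bin}(\ell)\rangle \bmod 2)_r$. But it has to be applied to the positions, not as real sign combinations of the already-formed $D_i$.
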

\begin{proof}
    As before, suppose that $n$ is a power of $2$ without loss of generality, otherwise increase $n$ so it is. Consider an error-correcting code $C \subseteq \{0,1\}^{K \log n}$ with $|C| = n$ and distance $(K \log_2 n) / 5$ (for example, an expander code with appropriate parameters suffices \cite{SS96}) i.e. for every $\alpha \neq \beta \in C$ we have $|\alpha \oplus \beta| \ge (K \log_2 n) / 5$.
    Associate each index $\ell \in [n]$ with a codeword $C_\ell$. Now for $i \in [K \log_2 n]$ and $j \in \{0,1\}$ define $M_{ij} 
    \in \{0,1\}^{2^n \times 2^n}$ as follows: $M_{ij}[x,y] = 0$ iff $x$ and $y$ coincide on all indices $\ell \in [n]$ such that $(C_t)_i = j$. 

    Now suppose $x$ and $y$ differ in exactly one position $\ell$: then for every $i \in [K \log_2 n]$ we have $M_{i,0}[x,y] + M_{i,1}[x,y] = 1$. 

    If $x$ and $y$ differ in at least two positions $\ell_1$ and $\ell_2$, then for every $i \in [K \log_2 n]$ we have $M_{i,0}[x,y] + M_{i,1}[x,y] \ge 1$ and for every $i$ such that $C_{\ell_1}$ and $C_{\ell_2}$ differ we get $M_{i,0}[x,y] + M_{i,1}[x,y] = 2$. 

    Thus, assuming $x \neq y$ we get $M'[x,y] \coloneqq \sum_{i \in [K \log_2 n]; j \in \{0,1\}} M_{ij}[x,y] \le K \log_2 n$ if $x$ and $y$ differ in one position and $M'[x,y] \ge K \log_2 n (1 + 1/5)$ if $x$ and $y$ differ in at least two positions. $M'[x,x] = 0$ for every $x \in \{0,1\}^n$.

    Let $I \in \{0,1\}^{2^n \times 2^n}$ be the identity matrix and $J \in \{0,1\}^{2^n \times 2^n}$ be the all-$1$ matrix. Then letting 
    \[M \coloneqq \frac{1}{2 K \log_2 n}\left(-M' - (K\log_2 n+1)I + ((2K-1/10) \log_2 n)J\right)\]
    we get 
    \( \left\| \HDone{n} - M\right\|_\infty \le 1/2-1/10K.\) 
    Now taking $M_{\text{final}} \coloneqq A_{200K^2 \log(1/\epsilon)}(M)$ by \cref{lem:amplifying-poly} and \cref{lem:blocky-rank-polynomial} we get the claim.
\end{proof}

\end{document}